\newcommand{\lb}{\left\{}
\newcommand{\rb}{\right\}}
\newcommand{\st}{\,\middle|\,}
\newcommand{\lng}[1]{\ensuremath{\mathopen{|}#1\mathclose{|}}}
\newcommand{\inp}{\ensuremath{{\cal I}}}
\newcommand{\dom}{\ensuremath{{\cal T}}}
\newcommand*\Let[2]{\State #1 $\gets$ #2}
\newcommand{\squishlist}{
   \begin{list}{$\bullet$}
    {
      \setlength{\itemsep}{0pt}
      \setlength{\parsep}{3pt}
      \setlength{\topsep}{3pt}
      \setlength{\partopsep}{0pt}
      \setlength{\leftmargin}{1.5em}
      \setlength{\labelwidth}{1em}
      \setlength{\labelsep}{0.5em} } }
\newcommand{\squishend}{
    \end{list}  }
\newcommand{\squishenum}{
   
   \begin{list}{scount}{ \usecounter{scount}}
    {
      \setlength{\itemsep}{0pt}
      \setlength{\parsep}{3pt}
      \setlength{\topsep}{3pt}
      \setlength{\partopsep}{0pt}
      \setlength{\leftmargin}{1.5em}
      \setlength{\labelwidth}{1em}
      \setlength{\labelsep}{0.5em} } }
\newcommand{\eat}[1]{}
\newtheorem{definition}{Definition}[section]
\newtheorem{example}{Example}[section]
\newtheorem{lemma}{Lemma}[section]
\newtheorem{theorem}[lemma]{Theorem}
\newtheorem{corollary}[lemma]{Corollary}
\newtheorem{claim}[lemma]{Claim}
\par\vspace{4mm}}
\newcommand{\stitle}[1]{\smallskip \noindent{\bf #1}}
\newcommand{\sstitle}[1]{\smallskip \noindent{\em #1}}
\newcommand{\nop}[1]{{}\xspace}
\newcommand{\hide}[1]{\hspace*{-5pt}\xspace}
\newcommand{\todo}[1]{{\color{red} #1}}
\newcommand{\ie}{{i.e.}\xspace}
\newcommand{\csec}{Section~}
\newcommand{\xxx}[1]{\xspace}
\newcommand{\norm}[1]{\ensuremath{\mathopen{\|}#1\mathclose{\|}}}
\def\blfootnote{\xdef\@thefnmark{}\footnotetext}
\begin{document}


\title{Design of Policy-Aware Differentially Private Algorithms}


\author{
Samuel Haney\\
	Duke University\\
  Durham, NC, USA\\
	\texttt{shaney@cs.duke.edu}
\and
Ashwin Machanavajjhala\\
  Duke University\\
  Durham, NC, USA\\
  \texttt{ashwin@cs.duke.edu}
\and
Bolin Ding\\
  Microsoft Research\\
  Redmond, WA, USA\\
  \texttt{bolin.ding@microsoft.com}
}
\maketitle

\begin{abstract}
The problem of designing error optimal differentially private algorithms is well studied. Recent work applying differential privacy to real world settings have used variants of differential privacy that appropriately modify the notion of neighboring databases. The problem of designing error optimal algorithms for such variants of differential privacy is open. In this paper, we show a novel transformational equivalence result that can turn the problem of query answering under differential privacy with a modified notion of neighbors to one of query answering under standard differential privacy, for a large class of neighbor definitions. 

We utilize the Blowfish privacy framework that generalizes differential privacy. Blowfish uses a {\em policy graph}  to instantiate different notions of neighboring databases. 
We show that the error incurred when answering a workload $\mathbf{W}$ on a database $\mathbf{x}$ under a  Blowfish policy graph  $G$ is identical to the error required to answer a transformed workload $f_G(\mathbf{W})$ on database $g_G(\mathbf{x})$ under standard differential privacy, where $f_G$ and $g_G$ are linear transformations based on $G$. Using this result, we develop error efficient algorithms for releasing histograms and multidimensional range queries under different Blowfish policies. We believe the tools we develop will be useful for finding mechanisms to answer many other classes of queries with low error under other policy graphs.
\end{abstract}

\eat{
Recent work has proposed a privacy framework, called Blowfish, that generalizes differential privacy in order to generate principled relaxations. Blowfish privacy definitions take as input an additional parameter called a policy graph, which specifies which properties about individuals should be hidden from an adversary. An open question is to characterize when Blowfish privacy definitions  permit mechanisms that incur significantly lower error for query answering compared to differentially private mechanisms. In this paper, we answer this question and explore error bounds for  answering sets of linear counting queries under different instantiations of Blowfish privacy.

We first develop theoretical tools relating query answering under Blowfish to query answering under differential privacy. We prove a surprising equivalence between the two -- the error incurred when answering a workload $W$ and database $x$ under a  Blowfish policy graph $G$ is identical to the error required to answer a workload $W_G$ and $x_G$ (constructed using $W$ and $G$, and $x$ and $G$, respectively) under differential privacy. Additionally, we relate the error of a Blowfish private mechanism under different policy graphs. We provide applications of these tools by finding near optimal mechanisms for answering multidimensional range queries under different Blowfish policy graphs. We believe the tools we develop will be useful for finding mechanisms to answer many other classes of queries with low error under Blowfish. Then, we extend upper and lower bounds on the error incurred by differentially private mechanisms for answering general workloads of linear queries to Blowfish privacy.
}

\section{Introduction}
\label{sec:intro}


The problem of private release of statistics from databases has become very important with the increasing use of databases with sensitive information about individuals in government and commercial organizations. $\epsilon$-Differential privacy \cite{icalp:Dwork06} has become the standard for private release of statistics due to its strong guarantee that ``similar'' inputs must yield ``similar'' outputs. Two input databases are similar if they are {\em neighbors}, meaning that they differ in the presence or absence of a single record. Output similarity is quantified by $\epsilon$, which bounds the log-odds of generating the same output from any pair of neighbors. Thus, if a record corresponds to all the data from one individual, differential privacy ensures that a single individual does not influence the inferences that can be drawn from the released statistics. Small $\epsilon$ results in greater privacy but also lesser utility. Thus, $\epsilon$ can be used to trade-off privacy for utility.

However, in certain applications (e.g., \cite{ashwin11:vldb}), the differential privacy guarantee  is too strict to produce private release of data that has any non-trivial utility. Tuning the parameter $\epsilon$ is not helpful here:  enlarging $\epsilon$ degrades the privacy guaranteed without a commensurate improvement in utility. Hence, recent work has considered relaxing differential privacy by modifying the notion of neighboring inputs by defining some metric over the space of all databases. Application designers can use this in addition to $\epsilon$ to better tradeoff privacy for utility. This idea has been applied to graphs (edge- versus vertex-differential privacy \cite{ashwin11:vldb}), streams (event- instead of individual-privacy \cite{stoc:DworkNPR10}), location privacy (geo-in\-dist\-inguishability \cite{andres2013geo}) and  to study fairness in targeted advertising (\cite{dwork2012fairness}), and has been formalized by multiple proposed frameworks (\cite{pets13:metric,blowfish,pods:KiferM12}). While these relaxations permit algorithms with significantly better utility than the standard notion of differential privacy, such algorithms must be designed from scratch. It is unknown how to derive algorithms that optimally leverage the relaxed privacy guarantee provided by the modified notion of neighbors to result in the least loss of utility. For instance, there are no known algorithms for releasing histograms or answering range queries with high utility under geo-indistinguishability. Moreover, there is no known method to utilize the literature on differentially private algorithms for this purpose. 

In this paper we present a novel and theoretically sound methodology for designing algorithms for relaxed privacy notions using algorithms that satisfy differential privacy, thus bridging the algorithm design problem under different privacy notions. Our results apply to the Blowfish privacy framework \cite{blowfish}, which generalizes differential privacy by allowing for different notions of neighboring databases (or {\em privacy policies}). We use this methodology to derive novel algorithms that satisfy Blowfish under relaxed privacy policies for releasing histograms and multi-dimensional range queries that have significantly better utility than the best known differentially private algorithms for these tasks. In the rest of this section, we present an overview of our results, describe the outline of the paper and then discuss related work. 

\stitle{Overview of Our Results.}
We first informally introduce the Blowfish privacy framework to help understand our theoretical and algorithmic results. 
The Blowfish framework instantiates a large class of ``similarity'' or neighbor definitions, using a ``policy graph'' defined over the domain of database records. Two input databases are neighbors if they differ in one record, and the differing values form an edge $(u,v)$ in the policy graph. Thus, one can not infer whether an individual's value was $u$ or $v$ based on the released output. For example, consider a {\em grid policy graph} (we will study its general form later in this paper): uniformly divide a 2D map into $k \times k$ grid cells, and each database record is one of the $k^2$ grid cells; only ``nearby'' points, e.g., pairs within Manhattan distance $\theta$, are connected by edges in the policy graph. Such a policy when used for location data implies that it is acceptable to reveal the rough location of an individual (e.g., the city), as two points belonging to two different cities are far away so no edge in the policy graph connects them; however, it requires that fine-grained location information (e.g., whether the individual is at home or at a nearby cafe) be hidden, when two grid points are close enough. This special instance of Blowfish framework is similar to a recently proposed notion called geo-indistinguishability \cite{andres2013geo}.  

Our main result is called \emph{transformational equivalence}, and we aim to show that a mechanism $\mathcal{M}$ for answering a set of linear queries $\mathbf{W}$ on a database $\mathbf{x}$ satisfies $(\epsilon, G)$-Blowfish privacy (i.e., differential privacy where neighboring databases are constructed with respect to the policy graph $G$) if and only if $\mathcal{M}$ is a mechanism for answering a transformed set of queries $f_G(\mathbf{W})$ on a transformed database $g_G(\mathbf{x})$ that  satisfies $\epsilon$-differential privacy. Here, $f_G$ and $g_G$ are linear transformations. However, we can not hope to prove such an equivalence in general. We prove (Theorem~\ref{thm:imposs}) that such an equivalence result implies that there exist a method to embed distances on any graph $G$ to distances in the $L_1$ metric without any distortion. This is because the distance between two input datasets induced by the neighborhood relation for differential privacy is the $L_1$ metric, and the distance under Blowfish is related to distances on graph $G$. Such a distortion free embedding is not known for large classes of graphs (e.g., a cycle) \cite{linial1995geometry}. 

Nevertheless, we are able to show this equivalence result for a large class of mechanisms and for a large class of graphs. First, we show that the transformational equivalence result holds for all algorithms that are instantiations of the matrix mechanism framework \cite{pods:LiHRMM10}. Matrix mechanisms algorithms, like Laplace mechanism for releasing histograms, and  hierarchical mechanism \cite{vldb:HayRMS10} and Privelet \cite{icde:XiaoWG10} for answering range queries, are popular building blocks for differentially private algorithm design. Transformational equivalence holds for such {\em data independent} mechanisms since the noise introduced by such mechanisms is independent of the input database. (The negative result uses a data-dependent mechanism whose error depends on the input database).

Next, we are also able to show that when $G$ is a tree there exist linear transformations $f_G$ and $g_G$ such that any mechanism $M$ for answering $\mathbf{W}$ on database $\mathbf{x}$  satisfies $(\epsilon, G)$-Blowfish privacy if and only if $M$ is an $\epsilon$-differentially private mechanism for answering $f_G(\mathbf{W})$ on database $g_G(\mathbf{x})$. The result follows (though not immediately) from the fact that trees permit a distortion free embedding into the $L_1$ metric \cite{fakcharoenphol2003tight, linial1995geometry}. This result holds for all privacy mechanisms, including data-dependent mechanisms. 

Finally, while the equivalence result does not hold for general mechanisms and general policy graphs, we can achieve an approximate equivalence. More specifically, we show the following \emph{subgraph approximation} result: if $G$ and $G'$ are such that every edge $(u,v)$ in $G$ is connected by a path of length at most $\ell$ in $G'$, then a mechanism $M$ that ensures $(\ell \cdot \epsilon, G')$-Blowfish privacy also ensures $(\epsilon, G)$-Blowfish privacy. Thus, if for a graph $G$ there is a tree $T$ such that distances in $G$ are not distorted by more than a multiplicative factor of $\ell$ in the tree $T$, then there exist transformations $f_T$ and $g_T$ such that an $\epsilon$-differentially private  mechanism $M$ for answering $f_T(\mathbf{W})$ on database $g_T(\mathbf{x})$, is also an $(\ell \cdot \epsilon, G)$-Blowfish private mechanism for answering $\mathbf{W}$ on  $\mathbf{x}$.

Additionally, a direct consequence of transformational equivalence is it allows us to derive error lower bounds and general approximation algorithms for Blowfish private mechanisms by extending work on error lower bounds for differentially private mechanisms \iftoggle{fullpaper}{\cite{stoc:HardtT10, bhaskara2012unconditional, nikolov2013geometry,edbt:chaoli13}}{\cite{stoc:HardtT10, edbt:chaoli13}}. We refer the reader to \iftoggle{fullpaper}{Appendix~\ref{sec:lowerbound}}{\todo{arxiv link}} for these results.

We apply the transformational equivalence theorems to derive novel  (near) optimal algorithms for answering multidimensional range query and histogram workloads under reasonable Blowfish policy graphs $G$ (like the grid graph). We reduce the problem of designing a Blowfish algorithm to that of  finding a differentially private mechanism for a new workload $\mathbf{W}_G = f_G(\mathbf{W})$ and a database $\mathbf{x}_g = g_G(\mathbf{x})$. We design matrix mechanism algorithms for all the policy graphs, and data dependent techniques when $G$ is a tree or can be approximated by a tree. For the policy graphs we consider, we show a polylogarithmic (in the domain size) improvement in error compared to the best data oblivious differentially private mechanism. We also present empirical results for the tasks of answering 1- and 2-dimensional range queries to show that our data dependent algorithms outperform their differentially private counterparts.

\stitle{Organization.}
The rest of this section is a brief survey of related work. Section~\ref{sec:background} presents notations and definitions that we will use throughout the paper. Section~\ref{sec:blowfish} introduces and motivates the Blowfish privacy framework. We describe our main result, transformational equivalence in Section~\ref{sec:transformationalEquivalence}. Section~\ref{sec:upperbound} presents novel mechanisms for answering multidimensional range queries and histogram queries  under various instantiations of the Blowfish framework, and presents the subgraph approximation lemma. In Section~\ref{sec:experiments}, we perform experiments comparing the performance of our Blowfish private mechanisms to differentially private mechanisms, and explore data-dependent mechanisms.  \iftoggle{fullpaper}{In the Appendix, Section~\ref{sec:lowerbound} gives examples of upper and lower bound results in differential privacy which extend to Blowfish privacy. Section~\ref{sec:multiple-components} extends our transformational equivalence results to policy graphs with multiple disconnected components.}
\ 

\stitle{Related Work.}
As mentioned earlier, works (\cite{andres2013geo},\cite{dwork2012fairness}) have developed relaxations of differential privacy that have specific applications (e.g. location privacy).
Other work has focused on developing flexible privacy definitions that generalize all these application specific notions. The Pufferfish framework \cite{pods:KiferM12} generalizes differential privacy by specifying what information should be kept secret, and the adversary's prior knowledge. He et al. \cite{blowfish} propose the Blowfish framework which also generalizes differential privacy and is inspired by Pufferfish. \cite{pets13:metric} investigates notions of privacy that can be defined as metrics over the set of databases. All these frameworks allow finer grained control on what information about individuals is kept secret, and what prior knowledge an adversary might possess, and thus allow customizing privacy definitions to the requirements of different applications.

As far as we aware, all previous work on relaxed privacy definitions have developed mechanisms directly for their applications.
We don't know of any work which shows how to map these relaxed privacy definitions to instances of differential privacy.

\section{Preliminaries}
\label{sec:background}

\begin{figure}[t]
\centering
{\small
\begin{equation*}
\left[\begin{array}{cccc}
1 & 0 & 0 & 0\\
0 & 1 & 0 & 0\\
0 & 0 & 1 & 0\\
0 & 0 & 0 & 1
\end{array}\right] \hspace{.5in}
\left[\begin{array}{cccc}
1 & 0 & 0 & 0\\
1 & 1 & 0 & 0\\
1 & 1 & 1 & 0\\
1 & 1 & 1 & 1
\end{array}\right]
\end{equation*}
}
\vspace{-5mm}
\caption{\label{fig:workloads} $\mathbf{I}_k$ (left) and $\mathbf{C}_k$ (right) workloads.}
\end{figure}

%
\stitle{Databases and Query Workloads.}
Let $\dom = \{v_1, v_2, \ldots,\allowbreak v_k\}$ be a domain of values with domain size $\lng{\dom} = k$. A database $D$ is a set of entries whose values come from $\dom$. Let $\inp_n$ be the set of all databases $D$ over $\dom$ such that the number of entries in $D$ is $n$, i.e., $\lng{D} = n$. And let $\inp$ be the set of all databases with any number of entries. We represent a database $D$ as a vector $\mathbf{x} \in \mathbb{R}^k$ with $\mathbf{x}[i]$ denoting the true count of entries in $D$ with the $i^{\rm th}$ value of the domain $\dom$.
That is, the database is represented as a histogram over the domain.
A {\em linear query} $\mathbf{q}$ is a $k$-dimensional row vector of real numbers with answer $\mathbf{q} \cdot \mathbf{x}$.
If the entries of $\mathbf{q}$ are restricted to 1's and 0's, we sometimes call $\mathbf{q}$ a linear counting query, since it counts the number of entries in $D$ with a particular subset of values in the domain.
A {\em workload} is a set of $q$ linear queries. So a workload can be represented as a $q\times k$ matrix $\mathbf{W}  = [\mathbf{q}_1 \mathbf{q}_2 \ldots \mathbf{q}_q]^\top \in \mathbb{R}^{q \times k}$, where each vector $\mathbf{q}_i \in \mathbb{R}^k$ corresponds to a linear query. The answer to the workload $\mathbf{W}$ is $\mathbf{W} \cdot  \mathbf{x}$, whose entries will be answers to the individual linear queries.

\begin{example}\label{ex:workloads}
Figure~\ref{fig:workloads} shows examples of two well studied workloads. $\mathbf{I}_k$ is the identity matrix representing the histogram query on $\dom$ reporting $[\mathbf{x}[1] \mathbf{x}[2] \ldots \mathbf{x}[k]]^\top$. $\mathbf{C}_k$ corresponds to the {\em cumulative histogram} workload, where each query corresponds to the {\em prefix sum}  $\sum_{j=1}^i \mathbf{x}[j]$.
\end{example}

%
\stitle{Differential privacy} is based on the concept of {\em neighbors}. Two databases are neighbors if they differ in one entry. 
\begin{definition}[Neighbors \cite{tcc:DworkMNS06}]\label{def:unbounded}
Any two databases $D$ and $D'$ are neighbors iff they differ in the presence of a single entry. That is, $\exists v \in \dom$, $D = D' \cup \lb v \rb$ or $D' = D \cup \lb v \rb$.
\end{definition}

An algorithm satisfies differential privacy if its outputs on any two neighboring databases are indistinguishable.

\begin{definition}[$\epsilon$-Differential privacy \cite{tcc:DworkMNS06}]\label{def:e-differential-privacy}
A randomized algorithm (mechanism) $\mathcal{M}$ satisfies $\epsilon$-differential privacy if for any subset of outputs $S \subseteq range(\mathcal{M})$,
and for any pair of neighboring databases $D$ and $D'$, 
\begin{equation*}
\mathrm{Pr} [\mathcal{M}(D) \in S] \le e^\epsilon \cdot \mathrm{Pr} [\mathcal{M}(D') \in S].
\end{equation*}
\end{definition}

A slightly different definition of neighbors yields a common variant of differential privacy: one database can be obtained from its neighbor by replacing one entry $x$ with a different value $y \in \dom$. The resulting privacy notation is called {\em $\epsilon$-indistinguishability} or {\em bounded $\epsilon$-differential privacy}. Unless otherwise specified, we use the term differential privacy to mean the original {\em unbounded} version (Definitions~\ref{def:unbounded}-\ref{def:e-differential-privacy}). 

\stitle{Sensitivity and Private Mechanisms.}
Suppose we use a differentially private algorithm $\mathcal{M}$ to publish the result of a workload $\mathbf{W}$ on a database $D$ that is represented as a vector $\mathbf{x}$. $\mathcal{M}$ is called {\em data independent} if the amount of noise $\mathcal{M}$ adds does not depend on the database $\mathbf{x}$, and {\em data dependent} otherwise. In both cases, the amount of noise depends the {\em sensitivity} of a workload.
$||\cdot||_1$ denotes the $L_1$ norm.

\begin{definition}[Sensitivity \cite{tcc:DworkMNS06,edbt:chaoli13}]
  \label{def:sensitivity-diff-priv}
Let $\cal N$ denote the set of pairs of neighbors. The $L_1$ sensitivity of $\mathbf{W}$ is: 
\begin{equation*}
\Delta_{\mathbf{W}} \ = \ \max_{(\mathbf{x},\mathbf{x}')\in \cal N} \norm{\mathbf{Wx} - \mathbf{Wx'}}_1.
\end{equation*}
\end{definition}

\begin{example}
The  $L_1$ sensitivities of $\mathbf{I}_k$ and $\mathbf{C}_k$ are $1$ and $k$, resp.  
\end{example}

A well-studied class of differentially private algorithms is called {\em Laplace mechanism} \cite{tcc:DworkMNS06}. Let $\mathrm{Lap}(\sigma)^m$ be a $m$-dimensional vector of independent samples, where each sample is drawn from $\eta \propto \exp(-\frac{|x|}{\sigma})$. 

%
\stitle{Measuring Errors.}
We use {\em mean squared error} to measure the amount of noise injected in private algorithms.

\begin{definition}[Error]
  Let $\mathbf{W} = [\mathbf{q}_1\mathbf{q}_2\ldots\mathbf{q}_q]^\top$ be a workload of linear queries, and $\mathcal{M}$ be a mechanism to publish the query result privately. Let $\mathbf{x}$ be the vector representing the database. The mean squared error of answering a workload $\mathbf{W}$ on the database $\mathbf{x}$ using $\mathcal{M}$ is
  \begin{equation*}
    \mathrm{ERROR}_\mathcal{M}(\mathbf{W}, \mathbf{x}) 
 = \sum_{i=1}^q \mathbb{E}\left[ (\mathbf{q_ix} - \mathcal{M}(\mathbf{q_i},\mathbf{x}))^2 \right]
  \end{equation*}
  where $\mathcal{M}(\mathbf{q}, \mathbf{x})$ is the noisy answer of query $\mathbf{q}$.  We define the data-independent error of a mechanism $\mathcal{M}$ to be
  \begin{equation*}
    \mathrm{ERROR}_\mathcal{M}(\mathbf{W}) = \max_\mathbf{x} \left\{ \mathrm{ERROR}_\mathcal{M}(\mathbf{W},\mathbf{x}) \right\}.
  \end{equation*}
  \label{def:error}
\end{definition}

Laplace mechanism is known to provide $\epsilon$-differential privacy with mean squared error as a function of $L_1$ sensitivity.

\begin{theorem}[\cite{tcc:DworkMNS06}]\label{thm:lap-gaussian-error}
Let $\mathbf{W}$ be a $q\times k$ workload. The Laplace mechanism $\mathcal{L}(\mathbf{W},\mathbf{x}) = \mathbf{W}\mathbf{x} + \mathrm{Lap}(\sigma)^q$ satisfies $\epsilon$-differential privacy, with $\mathrm{ERROR}_\mathcal{L}(\mathbf{W}) = 2q \Delta_{\mathbf{W}}^2/\epsilon^2$.
\end{theorem}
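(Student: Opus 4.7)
The plan is to establish the two claims in the theorem separately: first, the $\epsilon$-differential privacy guarantee, which pins down the required noise scale $\sigma$; then the mean squared error computation, which follows from the variance of the Laplace distribution.

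For the privacy argument, I would compare the densities of $\mathcal{L}(\mathbf{W},\mathbf{x})$ and $\mathcal{L}(\mathbf{W},\mathbf{x}')$ at an arbitrary output point $\mathbf{y} \in \mathbb{R}^q$ for a pair of neighbors $(\mathbf{x},\mathbf{x}')$. Since the $q$ noise coordinates are independent, the likelihood ratio factors as
\begin{equation*}
\frac{\Pr[\mathcal{L}(\mathbf{W},\mathbf{x}) = \mathbf{y}]}{\Pr[\mathcal{L}(\mathbf{W},\mathbf{x}') = \mathbf{y}]} \;=\; \prod_{i=1}^{q} \exp\!\left( \frac{|\mathbf{y}[i] - (\mathbf{W}\mathbf{x}')[i]| - |\mathbf{y}[i] - (\mathbf{W}\mathbf{x})[i]|}{\sigma} \right).
\end{equation*}
Applying the triangle inequality inside each exponent and summing bounds the ratio by $\exp(\norm{\mathbf{W}\mathbf{x} - \mathbf{W}\mathbf{x}'}_1/\sigma)$, which by Definition~\ref{def:sensitivity-diff-priv} is at most $\exp(\Delta_{\mathbf{W}}/\sigma)$. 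Setting $\sigma = \Delta_{\mathbf{W}}/\epsilon$ therefore gives $\epsilon$-differential privacy; integrating over any $S \subseteq \mathrm{range}(\mathcal{M})$ extends the pointwise bound to the form required by Definition~\ref{def:e-differential-privacy}.

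For the error, note that $\mathcal{L}(\mathbf{q}_i,\mathbf{x}) - \mathbf{q}_i\mathbf{x}$ is exactly the $i^{\rm th}$ coordinate of the Laplace noise vector, so
\begin{equation*}
\mathrm{ERROR}_{\mathcal{L}}(\mathbf{W},\mathbf{x}) \;=\; \sum_{i=1}^{q} \mathbb{E}\!\left[\eta_i^2\right], \qquad \eta_i \sim \mathrm{Lap}(\sigma),
\end{equation*}
which is independent of $\mathbf{x}$ (so the max in the data-independent error definition is trivial). A standard calculation gives $\mathbb{E}[\eta_i^2] = \mathrm{Var}[\eta_i] = 2\sigma^2$, and plugging in $\sigma = \Delta_{\mathbf{W}}/\epsilon$ yields $\mathrm{ERROR}_{\mathcal{L}}(\mathbf{W}) = 2q\Delta_{\mathbf{W}}^2/\epsilon^2$.

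Neither step has a genuine obstacle: the proof is by now a textbook calculation, and the only modest care needed is to be explicit that the theorem is implicitly fixing $\sigma = \Delta_{\mathbf{W}}/\epsilon$ (it is the unique choice that makes the privacy bound tight at $\epsilon$) and to verify that the triangle-inequality step correctly upper-bounds a telescoping sum of differences of absolute values by $\norm{\mathbf{W}\mathbf{x} - \mathbf{W}\mathbf{x}'}_1$. The rest is mechanical.
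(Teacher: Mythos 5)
Your proof is correct: the paper states this result as a citation to Dwork et al.\ \cite{tcc:DworkMNS06} and gives no proof of its own, and your argument is exactly the standard one from that line of work (density-ratio bound via the triangle inequality with $\sigma = \Delta_{\mathbf{W}}/\epsilon$, then summing the Laplace variance $2\sigma^2$ over the $q$ queries). Your remark that the theorem implicitly fixes $\sigma = \Delta_{\mathbf{W}}/\epsilon$ is a fair reading of the statement, which leaves $\sigma$ unspecified.
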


\eat{
\begin{definition}[Error]
  Let $\mathbf{W} = [\mathbf{q}_1\mathbf{q}_2\ldots\mathbf{q}_q]^\top$ be a workload of linear queries, and $\mathcal{M}$ be a mechanism to publish the query result privately. Let $\mathbf{x}$ be the vector representing the database. The mean squared error of answering a linear query $\mathbf{q}$ on the database $\mathbf{x}$ using $\mathcal{M}$ is
  \begin{equation*}
    \mathrm{ERROR}_\mathcal{M}(\mathbf{q}, \mathbf{x}) = \mathbb{E}\left[ (\mathbf{qx} - \mathcal{M}(\mathbf{q},\mathbf{x}))^2 \right],
  \end{equation*}
  where $\mathcal{M}(\mathbf{q}, \mathbf{x})$ is the noisy answer of query $\mathbf{q}$. The mean squared error of answering the workload $\mathbf{W}$ on $\mathbf{x}$ is
  \begin{equation*}
    \mathrm{ERROR}_\mathcal{M}(\mathbf{W}, \mathbf{x}) = \sum_{i=1}^q \mathrm{ERROR}_\mathcal{M}(\mathbf{q}_i, \mathbf{x}).
  \end{equation*}
  Finally, we define the data-independent error to be
  \begin{equation*}
    \mathrm{ERROR}_\mathcal{M}(\mathbf{W}) = \max_\mathbf{x} \left\{ \mathrm{ERROR}_\mathcal{M}(\mathbf{W},\mathbf{x}) \right\}.
  \end{equation*}
  \label{def:error}
\end{definition}
}

\section{Blowfish Privacy}
\label{sec:blowfish}

The {\em Blowfish privacy} framework, originally introduced by He et al. \cite{blowfish}, is a class of privacy notations that generalize neighboring databases in differential privacy. It allows privacy policy to focus only on neighbors that users are sensitive about. The major building block of an instantiation of Blowfish is called {\em policy graph}. A policy graph encodes users' private and sensitive information by specifying which pairs of domain values in $\dom$ should {\em not} be distinguished between by an adversary. By carefully choosing a policy graph (or equivalently, restricting the set of neighboring databases), Blowfish trades-off privacy for potential gains in utility.

\begin{definition}[Policy graph]
A policy graph is a graph $G = (V,E)$ with $V \subseteq \dom \cup \left\{ \bot \right\}$, where $\bot$ is the name of a special vertex, and $E \subseteq (\dom \cup \left\{ \bot \right\}) \times (\dom \cup \left\{ \bot \right\})$.
\end{definition}

The above definition of policy graph is slightly different from the one in \cite{blowfish} with an additional special vertex $\bot$ to generalize both unbounded and bounded versions of differential privacy.
Intuitively, an edge $(u,v) \in E$ defines a pair of domain values that an adversary should not be able to distinguish between. $\bot$ is a dummy value not in $\dom$, and an edge $(u, \bot) \in E$ means that an adversary should not be able to distinguish between the presence of a tuple with value $u$ or the absence of the tuple from the database.
For technical reasons, if there is some edge incident on $\bot$, we add a zero column vector $\mathbf{0}$ into the workload $\mathbf{W}$ to correspond to the dummy value $\bot$, as well as a zero entry in the database vector $\mathbf{x}$ correspondingly. So it is ensured that every node in $V$ is associated with a column in $\mathbf{W}$ and an entry in $\mathbf{x}$.

We next revisit the Blowfish privacy framework.
%

\begin{definition}[Blowfish neighbors]\label{def:neighbors-blowfish}
  We consider a policy graph $G=(V,E)$. Let $D$ and $D'$ be two databases. $D$ and $D'$ are neighbors, denoted $(D, D') \in {\cal N}(G)$, iff exactly one of the following is true:
\squishlist
  \item $D$ and $D'$ differ in the value of exactly one entry such that $(u, v)\in E$, where $u$ is the value of the entry in $D$ and $v$ is the value of the entry in $D'$;
  \item $D$ differs from $D'$ in the presence or absence of exactly one entry, with value $u$, such that $(u, \bot) \in E$.
\squishend
\end{definition}

\begin{definition}[$(\epsilon,G)$-Blowfish Privacy]\label{def:e-blowfish}
Let $G$ be a policy graph.  A mechanism $\mathcal{M}$ satisfies $(\epsilon,G)$-Blowfish privacy if for any subset of outputs $S \subseteq range(\mathcal{M})$, and for any pair of neighboring databases $(D, D') \in {\cal N}(G)$, 
\begin{equation*}
\mathrm{Pr} [\mathcal{M}(D) \in S] \le e^\epsilon \cdot \mathrm{Pr} [\mathcal{M}(D') \in S].
\end{equation*}
\end{definition}

\stitle{Policy graph and Privacy guarantee.}
Policy graphs are used to define how privacy will be guaranteed to users, independent of an adversary's knowledge.
For example, when the users require the strongest privacy guarantee, the following two policy graphs can be used,
%
\begin{equation*}
  G = (V,E) \text{ such that } E = \left\{ (u,\bot) \st \forall u \in \dom \right\}, ~\hbox{and}
\end{equation*}
\begin{equation*}
  G = (V,E) \text{ such that } E = \left\{ (u,v) \st \forall u,v \in \dom\right\},
\end{equation*}
which corresponds to the unbounded and bounded versions of differential privacy, respectively.
More generally, if a policy graph does not include $\bot$, we are essentially focusing on databases from $\inp_n$, \ie, databases with fixed known size. 



When the privacy guarantee is relaxed, we can adjust the policy graphs so that our algorithms in \csec\ref{sec:upperbound} will have higher utility. Following are two examples of designing such policy graphs for real-life scenarios.
%

({\em Line Graph}) Consider a totally ordered domain $\dom = \{a_1, a_2, \ldots, a_k\}$, where $\forall i: ~ a_i < a_{i+1}$. One such example is a database of binned salaries of individuals, where $a_i$ corresponds to a salary between $2^{i-1}$ and $2^i$. When only revealing rough ranges of salaries is fine, it is OK for an adversary to distinguish between values that are far apart (e.g., $a_1$ vs $a_k$) but not distinguish between values that are closer to each other. So we can use a line graph as our policy graph to express this guarantee, where only adjacent domain values $a_i$ and $a_{i+1}$ are connected by an edge.

({\em Grid Graph}) In the scenario of location data, when revealing rough location information is fine but more precise location information is private, we can use a grid policy graph (also discussed in \csec\ref{sec:intro}). Here, a 2D map uniformly divided into $k \times k$ grid points as nodes in the graph (i.e., $\dom = \{1, \ldots, k\} \times \{1, \ldots, k\}$).  Only ``nearby'' points are connected by edges: $E = \{ (u,v) |  d(u,v) \leq \theta \}$ where $d(u,v)$ denotes the distance between two points $u, v \in \dom$ (e.g., Manhattan distance) and $\theta$ is a policy-specific parameter. 
%
%

\stitle{Metric on databases.}
In general, a policy graph introduces a metric over databases, which quantifies the privacy guarantee provided by Blowfish. Consider two databases that differ in one tuple: $D_1 = D \cup \left\{ u \right\}$ and $D_2 = D \cup \left\{ v \right\}$, define the distance between $D_1$ and $D_2$ be $\mathrm{dist}_G(u,v)$, i.e., the length of the shortest path between $u$ and $v$ in $G$.
For mechanism ${\cal M}$ satisfying $(\epsilon, G)$-Blowfish privacy (Defs~\ref{def:neighbors-blowfish}-\ref{def:e-blowfish}), we have,
\begin{equation}\label{equ:metric}
\mathrm{Pr} [\mathcal{M}(D_1) \in S] \le  e^{\epsilon \cdot \mathrm{dist}_G(u,v)} \cdot \mathrm{Pr} [\mathcal{M}(D_2) \in S].
\end{equation}
For two databases differing in more than one tuple, we can repeatedly apply \eqref{equ:metric} for each differing tuple.

For the grid graph policy, changing the location of a tuple from $u$ to $v$ results in the output probabilities of $\cal M$ differing by a factor of $e^\epsilon$ if $d(u,v) \leq \theta$, and differing be a factor of $e^{\epsilon \cdot \lceil d(u,v) / \theta \rceil}$ in general. So finer grained location information gets stronger protection. This privacy guarantee is identical to a recent notion called geo-indistinguishability \cite{andres2013geo}.
%

In this paper, we assume Blowfish policy graphs are connected. We discuss Blowfish policies with disconnected components in \iftoggle{fullpaper}{Appendix~\ref{sec:multiple-components}}{the full version}.


\eat{
\subsubsection*{Connected Components in $G$.}
Let $u$ and $v$ be in the same connected component and consider $D_1=D \cup \left\{ u \right\}$ and $D_2 = D \cup \left\{ v \right\}$. Then under $(\epsilon, G)$-Blowfish privacy,
\begin{equation*}
  \mathrm{Pr} [\mathcal{M}(D_1) \in S] \le  e^{\epsilon \cdot d(u,v)} \cdot \mathrm{Pr} [\mathcal{M}(D_2) \in S] 
\end{equation*}
where $d(u,v)$ is the shortest path between $u$ and $v$ in $G$. However, if $u$ and $v$ are not connected, there is no bound on probabilities; i.e., an adversary is allowed to distinguish between  $D_1$ and $D_2$ based on some output. In particular, if $G$ has $c$ connected components $C_1, \ldots, C_c$, $C_i = (V_i, E_i)$, we are allowed to disclose (without any noise) which $V_i$ every tuple in the database belongs to. 

Therefore, we can split any workload $\mathbf{W}$ into smaller workloads $\mathbf{W}_1, \ldots, \mathbf{W}_c$ that are column projections of the original workload, where $\mathbf{W}_i$ only has columns corresponding to $V_i$ (and for workload $\mathbf{W}_i$ we consider the policy graph $C_i$). We can answer each of these workloads independently (using the same $\epsilon$, since they pertain to disjoint subsets of the domain), and then add the resulting vectors together to compute the final noisy answer for $\mathbf{W}$. Therefore, without loss of generality we assume for the rest of the paper that $G$ is connected.
}


\section{Transformational Equivalence}
\label{sec:transformationalEquivalence}
\newcommand{\toappendix}[1]{In Appendix}

We now present our main result, called {\em transformational equivalence}, which we will use to design Blowfish private algorithms later in the paper. This result establishes a mechanism-preserving two way relationship between Blowfish privacy and differential privacy.
In general, our transformation can be stated as follows:
\textit{For policy graph $G$, there exists a transformation of the workload and database, $(\mathbf{W},\mathbf{x}) \rightarrow (\mathbf{W}_G,\mathbf{x}_G)$ such that $\mathbf{Wx} = \mathbf{W}_G\mathbf{x}_G$, and a mechanism $\mathcal{M}$ is an $(\epsilon, G)$-Blowfish private mechanism for answering workload $\mathbf{W}$ on input $\mathbf{x}$ if and only if $\mathcal{M}$ is also an $\epsilon$-differentially private mechanism for answering $\mathbf{W}_G$ on $\mathbf{x}_G$.}
However, we can't hope to show this result in general. We prove that for certain mechanisms transformational equivalence holds only when distances on a graph can be {\em embedded} into points in $L_1$ with no distortion. It is well known \cite{linial1995geometry} that not all graphs permit such embeddings. 

Hence, we show different results for different restrictions on $\mathcal{M}$ and $G$.
In Section~\ref{sec:transequiv-mm}, we show that under a class of mechanisms called the \emph{matrix mechanism}, transformational equivalence holds for any policy graph.
In Section~\ref{sec:transequiv-tree}, we show via a metric embedding-like argument that when $G$ is a tree, transformational equivalence holds for any mechanism $\mathcal{M}$.
In Section~\ref{sec:transequiv-general}, we state the negative result for general graphs  and mechanisms, and present an approximate transformational equivalence that uses spanning trees of $G$ albeit with some loss in the utility.
All of our results rely on the existence of a transformation matrix $\mathbf{P}_G$ with certain properties, whose construction is discussed in  Section~\ref{sec:construction-PG}

\subsection{Equivalence for  Matrix Mechanism}\label{sec:transequiv-mm}

Li et al \cite{pods:LiHRMM10} describe the {\em matrix mechanism} framework for optimally answering a workload of linear queries. The key insight is that while some workloads $\mathbf{W}$ have a high sensitivity, they can be answered with low error by answering a different {\em strategy} query workload $\mathbf{A}$ such that (a) $\mathbf{A}$ has a low sensitivity $\Delta_\mathbf{A}$, and (b) rows in $\mathbf{W}$ can be reconstructed using a small number of rows in $\mathbf{A}$. 

In particular, let $\mathbf{A}$ be a $p \times k$ matrix, and $\mathbf{A}^{+}$ denote its Moore-Penrose pseudoinverse, such that $\mathbf{W}\mathbf{A}\mathbf{A}^{+} = \mathbf{W}$. The matrix mechanism is given by the following: 
\begin{equation}\label{eqn:mech}
    \mathcal{M}_\mathbf{A}(\mathbf{W},\mathbf{x}) = \mathbf{Wx} + \mathbf{WA}^+Lap(\Delta_{\mathbf{A}}/\epsilon)^p
\end{equation}
where, $Lap(\lambda)^p$ denotes $p$ independent random variables drawn from the Laplace distribution with scale $\lambda$. 
Recall that $\Delta_{\mathbf{A}}$ is the sensitivity of workload $\mathbf{A}$. 
It is easy to see that all matrix mechanism algorithms are data independent (i.e., the noise is independent of the input dataset).

In order to extend matrix mechanisms to Blowfish, we define the Blowfish specific sensitivity of a workload, $\Delta_{\mathbf{W}}(G)$ analogously to Definition~\ref{def:sensitivity-diff-priv}:
\begin{definition}
    \label{def:sensitivity-blowfish}
  The $L_1$ policy specific sensitivity of a query matrix $\mathbf{W}$ with respect to policy graph $G$ is
  \begin{equation*}
    \Delta_{,\mathbf{W}}(G) = \max_{(\mathbf{x},\mathbf{x}')\in N(G)}\norm{\mathbf{Wx} - \mathbf{Wx}'}_1
  \end{equation*}
\end{definition}

%

Let $P_G$ be a matrix that satisfies the following properties. We will describe its construction in Section~\ref{sec:construction-PG}.
\squishlist
  \item $\mathbf{P}_G$ has $|V| - 1$ rows and $|E|$ columns.
  \item Let $\mathbf{W}_G = \mathbf{WP}_G$. Then $\Delta_\mathbf{W}(G) = \Delta_{\mathbf{W}_G}$. I.e., the sensitivity of workload $\mathbf{W}$ under Blowfish policy $G$ is the same as the sensitivity of $\mathbf{W}_G$ under differential privacy.
  \item $\mathbf{P}_G$ has full row rank (and therefore a right inverse $\mathbf{P}_G^{-1}$). For vector $\mathbf{x}$ we let $\mathbf{x}_G$ denote $\mathbf{P}_G^{-1}\mathbf{x}$.
\squishend

Given such a $\mathbf{P}_G$, we can show our first transformational equivalence result.

\begin{theorem}
  \label{thm:mm-trans-equiv}
  Let $G$ be a Blowfish policy graph and $\mathbf{W}$ be a workload. Suppose $\mathbf{P}_G$ exists with the properties given above. Then the matrix mechanism given by Equation~\ref{eqn:mech} is both a $(\epsilon,G)$-Blowfish private mechanism for answering $\mathbf{W}$ on $\mathbf{x}$ and an $\epsilon$-differentially private algorithm for answering $\mathbf{W}_G$ on $\mathbf{x}_G$. Since $\mathbf{Wx}=\mathbf{W}_G \mathbf{x}_G$, the mechanism has the same error in both instances.
\end{theorem}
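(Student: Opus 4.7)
My plan is to verify the theorem by showing that the Laplace noise injected in equation~\eqref{eqn:mech}, calibrated to the appropriate sensitivity, simultaneously yields the two claimed privacy guarantees, with the transformation matrix $\mathbf{P}_G$ serving as the bridge between the two viewpoints. The interpretation I would adopt is that for the Blowfish side the scale is $\Delta_\mathbf{A}(G)/\epsilon$ and on the DP side it is $\Delta_{\mathbf{A}_G}/\epsilon$, and the whole point of property~2 is to make these two quantities equal so that the same noise vector appears in both analyses.

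First, I would introduce the transformed strategy $\mathbf{A}_G = \mathbf{A}\mathbf{P}_G$ and check the basic algebraic identities. By property~3 of $\mathbf{P}_G$ (right-invertibility on the vector $\mathbf{x}$), we have $\mathbf{W}_G\mathbf{x}_G = \mathbf{W}\mathbf{P}_G \mathbf{P}_G^{-1}\mathbf{x} = \mathbf{W}\mathbf{x}$ and similarly $\mathbf{A}_G \mathbf{x}_G = \mathbf{A}\mathbf{x}$, so the two settings share the same true workload answer and the same intermediate answer under the strategy. Applying property~2 to $\mathbf{A}$ in place of $\mathbf{W}$ yields $\Delta_{\mathbf{A}_G} = \Delta_\mathbf{A}(G)$, so the two Laplace noise scales coincide, ensuring both views describe the same random output.

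Second, I would establish $(\epsilon,G)$-Blowfish privacy. Using $\mathbf{W}\mathbf{A}\mathbf{A}^+ = \mathbf{W}$, rewrite the mechanism as $\mathcal{M}_\mathbf{A}(\mathbf{W},\mathbf{x}) = \mathbf{W}\mathbf{A}^+\bigl(\mathbf{A}\mathbf{x}+\eta\bigr)$ with $\eta\sim\mathrm{Lap}(\Delta_\mathbf{A}(G)/\epsilon)^p$. For any Blowfish-neighboring pair $(\mathbf{x},\mathbf{x}')\in\mathcal{N}(G)$, Definition~\ref{def:sensitivity-blowfish} gives $\|\mathbf{A}\mathbf{x} - \mathbf{A}\mathbf{x}'\|_1 \leq \Delta_\mathbf{A}(G)$, and the standard density-ratio bound for the product Laplace distribution bounds the likelihood ratio of $\mathbf{A}\mathbf{x}+\eta$ by $e^\epsilon$. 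Composing with the data-independent linear post-processing $\mathbf{W}\mathbf{A}^+$ preserves the $(\epsilon,G)$-Blowfish guarantee.

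Third, I would run the symmetric argument for the transformed problem: the intermediate vector is $\mathbf{A}_G\mathbf{x}_G + \eta$ (which equals $\mathbf{A}\mathbf{x}+\eta$), its sensitivity on the DP side is $\Delta_{\mathbf{A}_G} = \Delta_\mathbf{A}(G)$, and the Laplace noise scale matches, so the standard Laplace mechanism gives $\epsilon$-DP on the transformed data; post-processing by $\mathbf{W}_G\mathbf{A}_G^+$ preserves $\epsilon$-DP. Finally, because $\mathbf{W}\mathbf{x} = \mathbf{W}_G\mathbf{x}_G$ and the noise added is a Laplace vector of identical scale passed through equivalently structured linear post-processing in both views, the mean squared error is trivially identical. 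The main obstacle I anticipate is formally pinning down the DP-neighbor relation on the transformed side (since $\mathbf{x}_G$ need not be a bona fide histogram and the induced notion of ``neighbors of $\mathbf{x}_G$'' must be the one transported through $\mathbf{P}_G^{-1}$); property~2 is exactly the statement that sensitivity is preserved under this correspondence, so once the construction in Section~\ref{sec:construction-PG} also guarantees $\mathbf{W}_G\mathbf{A}_G\mathbf{A}_G^+ = \mathbf{W}_G$, both privacy claims collapse to the standard Laplace analysis.
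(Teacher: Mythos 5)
Your overall route is the paper's: show that the Blowfish instantiation and the transformed DP instantiation are one and the same mechanism, using (i) $\mathbf{W}_G\mathbf{x}_G=\mathbf{W}\mathbf{P}_G\mathbf{P}_G^{-1}\mathbf{x}=\mathbf{Wx}$ and (ii) $\Delta_{\mathbf{A}}(G)=\Delta_{\mathbf{A}_G}$ from the sensitivity-preservation property of $\mathbf{P}_G$ applied to the strategy $\mathbf{A}$. Your two privacy halves are fine, and in fact are more explicit than the paper's proof (which leaves the Laplace-plus-postprocessing analysis implicit): writing the mechanism as $\mathbf{W}\mathbf{A}^+(\mathbf{Ax}+\eta)$, noting $\mathbf{Ax}=\mathbf{A}_G\mathbf{x}_G$, and calibrating $\eta$ to the common sensitivity correctly yields $(\epsilon,G)$-Blowfish privacy as a function of $\mathbf{x}$ and $\epsilon$-DP as a function of $\mathbf{x}_G$.

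The gap is in the last step, where you assert that the two views ``describe the same random output'' and that the error is ``trivially identical'' because the post-processing is ``equivalently structured.'' For the output of Equation~\ref{eqn:mech} on $(\mathbf{W}_G,\mathbf{A}_G,\mathbf{x}_G)$ to coincide with the output on $(\mathbf{W},\mathbf{A},\mathbf{x})$ --- and hence for the errors to agree --- you need the reconstruction matrices to be equal, i.e. $\mathbf{W}_G\mathbf{A}_G^+=\mathbf{W}\mathbf{A}^+$. This is exactly the step the paper proves: $\mathbf{W}_G\mathbf{A}_G^+=\mathbf{W}\mathbf{P}_G(\mathbf{A}\mathbf{P}_G)^+=\mathbf{W}\mathbf{P}_G\mathbf{P}_G^+\mathbf{A}^+=\mathbf{W}\mathbf{A}^+$, where the pseudoinverse factorization $(\mathbf{A}\mathbf{P}_G)^+=\mathbf{P}_G^+\mathbf{A}^+$ is justified by $\mathbf{P}_G$ having full row rank (together with the rank condition on $\mathbf{A}$ implicit in the matrix mechanism); without such a condition, $(\mathbf{A}\mathbf{P}_G)^+\neq\mathbf{P}_G^+\mathbf{A}^+$ in general. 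The identity you cite instead, $\mathbf{W}_G\mathbf{A}_G\mathbf{A}_G^+=\mathbf{W}_G$, only says the transformed matrix mechanism is well-formed; it does not imply that its noise term $\mathbf{W}_G\mathbf{A}_G^+\eta$ equals $\mathbf{W}\mathbf{A}^+\eta$, which is what the equal-error claim rests on. Supplying that pseudoinverse identity closes the argument and makes your proof match the paper's.
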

{\sc Proof.}
  We show that
  \begin{equation*}
    \mathbf{Wx} + \mathbf{WA}^+Lap(\frac{\Delta_{\mathbf{A}}(G)}{\epsilon})^p = \mathbf{W}_G\mathbf{x}_G + \mathbf{W}_G\mathbf{A}_G^+Lap(\frac{\Delta_{\mathbf{A}_G}}{\epsilon})^p.
  \end{equation*}
  First,
  \begin{equation*}
    \mathbf{W}\mathbf{P}_G\mathbf{P}_G^{-1}\mathbf{x} = \mathbf{W}\mathbf{I}_k\mathbf{x} = \mathbf{W}\mathbf{x}.
  \end{equation*}
  Next, by assumption we have that $\Delta_{\mathbf{A}}(G) = \Delta_{\mathbf{A}_G}$.
  Finally,
  \begin{align*}
    \mathbf{W}_G \mathbf{A}_G^+ &= \mathbf{WP}_G(\mathbf{AP}_G)^+ \\
    &= \mathbf{WP}_G\mathbf{P}_G^+\mathbf{A}^+ \\
    &= \mathbf{WA}^+ \mbox{\hspace{10mm}($\mathbf{P}_G$ has full row rank) \rlap{$\qquad \Box$}} 
  \end{align*}

\subsection{Equivalence when G is a Tree}\label{sec:transequiv-tree}

When $G$ is a tree, we can show something stronger, that transformational equivalence holds for any mechanism $\mathcal{M}$.
More formally, suppose $\mathbf{P}_G$ has the following property.
\begin{claim}\label{claim:tree-neigh}
  If $G$ is a tree, any pair of $\mathbf{y}, \mathbf{z} \in \mathbb{R}^k$ are neighbors according to the Blowfish policy $G$ if and only if $\mathbf{P}_G^{-1}\mathbf{y}$ and $\mathbf{P}_G^{-1}\mathbf{z}$ are neighbors according to unbounded differential privacy (which are vectors with $L_1$ distance of 1).
\end{claim}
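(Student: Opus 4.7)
The plan is to prove both directions of the biconditional by leaning on the explicit construction of $\mathbf{P}_G$ for trees to be given in Section~\ref{sec:construction-PG}. The key structural property I would establish about that construction is that each column of $\mathbf{P}_G$ is indexed by an edge of $G$ and is the signed indicator of that edge: for $(u,v)\in E$ the column equals $\mathbf{e}_u - \mathbf{e}_v$, and for $(u,\bot)\in E$ the column equals $\mathbf{e}_u$ (after dropping the $\bot$ row). The tree hypothesis is what makes this construction square and invertible: $|E| = |V|-1$, so after fixing a reference vertex $\mathbf{P}_G$ is $(|V|-1)\times(|V|-1)$ with full rank, and its right inverse $\mathbf{P}_G^{-1}$ is in fact a genuine two-sided inverse.

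Granted this, both directions reduce to one algebraic line. For the forward direction, if $\mathbf{y}$ and $\mathbf{z}$ are Blowfish neighbors, Definition~\ref{def:neighbors-blowfish} gives that $\mathbf{y} - \mathbf{z}$ equals $\pm$ the column of $\mathbf{P}_G$ indexed by the single edge on which they differ, say column $i$; multiplying by $\mathbf{P}_G^{-1}$ yields $\mathbf{P}_G^{-1}(\mathbf{y}-\mathbf{z}) = \pm\mathbf{e}_i$, so the transformed pair differs in exactly one coordinate by one, which is the unbounded-DP neighbor condition. The backward direction is symmetric: from $\mathbf{P}_G^{-1}(\mathbf{y} - \mathbf{z}) = \pm\mathbf{e}_i$ I left-multiply by $\mathbf{P}_G$ and read off the $i$-th column, recovering an edge-indicator and hence a Blowfish-neighbor difference vector by the definition.

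I expect the main obstacle to be precisely the construction step: showing that $\mathbf{P}_G$ admits columns of this exact form while remaining invertible. This is where the tree hypothesis is essential. On a non-tree graph we would have $|E| > |V|-1$, so $\mathbf{P}_G$ would have more columns than rows; then $\mathbf{P}_G\mathbf{P}_G^{-1} = \mathbf{I}$ still holds but $\mathbf{P}_G^{-1}\mathbf{P}_G \ne \mathbf{I}$, and the backward step breaks because a standard basis vector in the transformed space can pull back to a vector supported on several edges (tracing out a cycle) rather than a single edge, producing a difference that is not a Blowfish-neighbor difference. This is the algebraic shadow of the $L_1$-embedding obstruction behind Theorem~\ref{thm:imposs}, and it is what forces the spanning-tree approximation of Section~\ref{sec:transequiv-general} in the general case.
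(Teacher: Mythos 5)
Your proof is correct and takes essentially the same route as the paper's own argument (Lemma~\ref{lem:neighborEquality} in the appendix): identify the columns of $\mathbf{P}_G$ as signed edge indicators, observe that for a tree $\mathbf{P}_G$ is square and full rank so $\mathbf{P}_G^{-1}$ is a two-sided inverse, and then each direction is a single multiplication by $\mathbf{P}_G^{-1}$ or $\mathbf{P}_G$.

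One inaccuracy in your closing remark, though it does not affect the tree case you actually prove: for a non-tree graph it is the \emph{forward} direction (Blowfish neighbors $\Rightarrow$ DP neighbors of the transformed vectors) that fails, not the backward one. The backward step only needs $\mathbf{P}_G\mathbf{P}_G^{-1} = \mathbf{I}$, which holds for any connected $G$ with the right inverse; the forward step needs $\mathbf{P}_G^{-1}\mathbf{P}_G = \mathbf{I}$, and when $|E| > |V|-1$ the product $\mathbf{P}_G^{-1}\mathbf{P}_G$ is only a projection, so a single-edge difference $\mathbf{y}-\mathbf{z} = \mathbf{P}_G\mathbf{e}_i$ maps to $\mathbf{P}_G^{-1}\mathbf{P}_G\mathbf{e}_i$, which is spread over several coordinates (the cycle phenomenon you describe) and hence not a unit vector. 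This matches the paper's explicit note that the tree hypothesis is invoked precisely in the direction starting from Blowfish neighbors.
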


We construct a $\mathbf{P}_G$ satisfying Claim~\ref{claim:tree-neigh} in Section~\ref{sec:construction-PG}.
Our stronger transformational equivalence result follows.

\begin{theorem}\label{thm:blowfishEquality}
  Let $\mathbf{x} \in \mathbb{R}^k$ represent a database, $\mathbf{W}$ be a workload with $q$ linear queries, and $G = (V,E)$ be a Blowfish policy graph and a tree. We can find an invertible mapping  given by $f(\mathbf{x},\mathbf{W},G) = (\mathbf{P}_G^{-1}\mathbf{x},\mathbf{W}\mathbf{P}_G$), where $\mathbf{P}_G$ is a matrix depending on $G$, such that $\mathcal{M}$ is a $(G,\epsilon)$-Blowfish private mechanism for answering $(\mathbf{W},\mathbf{x})$ with error $\alpha$ if and only if $\mathcal{M}$ is an $\epsilon$-differentially private mechanism for answering $(\mathbf{W}\mathbf{P}_G,\mathbf{P}_G^{-1}\mathbf{x})$ with error $\alpha$.
\end{theorem}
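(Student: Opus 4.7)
The plan is to reduce Theorem~\ref{thm:blowfishEquality} to Claim~\ref{claim:tree-neigh}, which (once $\mathbf{P}_G$ is constructed in Section~\ref{sec:construction-PG}) provides a bijective correspondence between Blowfish-$G$ neighbor pairs in the original coordinates and unbounded differential privacy neighbor pairs in the transformed coordinates $\mathbf{P}_G^{-1}\mathbf{x}$. Unlike Theorem~\ref{thm:mm-trans-equiv}, $\mathcal{M}$ here is an arbitrary randomized mechanism, so we cannot exploit any special noise structure. Instead, I would view $\mathcal{M}$ abstractly as a conditional distribution on output vectors given an input database, and use the invertibility of $\mathbf{P}_G$ (which holds because $G$ is a tree) to identify mechanisms on the two sides via composition: ``the mechanism for the transformed problem'' is $\mathbf{y}\mapsto\mathcal{M}(\mathbf{P}_G\mathbf{y})$, and this composition is a bijection between candidate mechanisms on the two sides, mirroring the identification implicit in Theorem~\ref{thm:mm-trans-equiv}.

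Given this identification, the privacy equivalence is an immediate application of Claim~\ref{claim:tree-neigh} in both directions. For the forward direction, take any unbounded-DP neighbor pair $(\mathbf{y},\mathbf{y}')$ in the transformed coordinates (i.e., with $\|\mathbf{y}-\mathbf{y}'\|_1 = 1$); applying Claim~\ref{claim:tree-neigh} with $\mathbf{P}_G\mathbf{y}, \mathbf{P}_G\mathbf{y}'$ in the roles of $\mathbf{y},\mathbf{z}$ shows that $(\mathbf{P}_G\mathbf{y}, \mathbf{P}_G\mathbf{y}')$ is a Blowfish-$G$ neighbor pair, so the assumed Blowfish inequality on $\mathcal{M}$ at this pair, uniform over $S\subseteq\mathrm{range}(\mathcal{M})$, is precisely the $\epsilon$-DP inequality for the transformed mechanism. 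The reverse direction is symmetric: any Blowfish-$G$ neighbor pair $(\mathbf{x},\mathbf{x}')$ pulls back under $\mathbf{P}_G^{-1}$ to a unit-$L_1$ pair, again by Claim~\ref{claim:tree-neigh}, so the $\epsilon$-DP inequality on the transformed mechanism translates back to the Blowfish inequality on the original. Because the claim is an ``if and only if,'' no neighbor pair on either side is missed.

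Error preservation is also immediate: $\mathbf{W}\mathbf{x} = (\mathbf{W}\mathbf{P}_G)(\mathbf{P}_G^{-1}\mathbf{x})$, so row by row each target $\mathbf{q}_i\mathbf{x}$ is identical across the two problems, and under the mechanism identification above the expectation in Definition~\ref{def:error} takes the same value on both sides. The main obstacle, therefore, is not this translation but rather the construction underlying Claim~\ref{claim:tree-neigh}, which requires choosing $\mathbf{P}_G$ so that its columns are exactly the edge-difference vectors $\mathbf{e}_u-\mathbf{e}_v$ for $(u,v)\in E$ (plus singletons $\mathbf{e}_u$ for edges $(u,\bot)$). When $G$ is a tree, these $|V|-1$ difference vectors are linearly independent, making $\mathbf{P}_G$ square and invertible, and the bijective neighbor correspondence is exactly the distortion-free $\ell_1$ embedding of tree metrics~\cite{fakcharoenphol2003tight,linial1995geometry} alluded to in the introduction. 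This distortion-free embedding is precisely what fails for general $G$, which is why Section~\ref{sec:transequiv-general} must fall back on approximation.
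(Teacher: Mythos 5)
Your proposal is correct and follows essentially the same route as the paper's proof: error preservation via $\mathbf{W}\mathbf{P}_G\mathbf{P}_G^{-1}\mathbf{x} = \mathbf{W}\mathbf{x}$, invertibility of the mapping, and the privacy equivalence in both directions from the neighbor-preserving property of Claim~\ref{claim:tree-neigh} (established by the construction of $\mathbf{P}_G$ in Section~\ref{sec:construction-PG}). The only difference is presentational: you spell out the identification of the mechanism on the transformed instance as $\mathbf{y}\mapsto\mathcal{M}(\mathbf{P}_G\mathbf{y})$, which the paper leaves implicit in its phrase ``since the mapping preserves neighbors.''
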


\begin{proof}
  Suppose $\mathbf{P}_G$ satisfies the properties given at the beginning of the section.
  Then, mechanism $\mathcal{M}$ will have the same error on both instances, since the true answers to the workloads are the same in both cases:
\begin{equation*}
  \mathbf{W}\mathbf{P}_G\mathbf{P}_G^{-1}\mathbf{x} = \mathbf{W}\mathbf{I}_k\mathbf{x} = \mathbf{W}\mathbf{x}.
\end{equation*}
Additionally, the mapping is invertible, since $\mathbf{W}\mathbf{P}_G\mathbf{P}_G^{-1} = \mathbf{W}$ and $\mathbf{P}_G\mathbf{P}_G^{-1}\mathbf{x} = \mathbf{x}$.
$\mathcal{M}$ is both $(\epsilon,G)$-differentially private on $\mathbf{W,x}$ and $\epsilon$-differentially private on $\mathbf{W}_G, \mathbf{x}_G$, since the mapping preserves neighbors.
\end{proof}

\subsection{Equivalence for General Graphs and Mechanisms}\label{sec:transequiv-general}
We can show that we can not hope to prove transformation equivalence for general graphs and mechanisms. First we define an {\em embedding} of graphs. 
\begin{definition}
Let $G = (V,E)$ be a graph. Let $\rho$ be a deterministic mapping from vertices in $V$ to real valued vectors. Let $d_G(u,v)$ denote the shortest distance between vertices $u$ and $v$, and $d(\rho(u), \rho(v)) = ||\rho(u) - \rho(v)||_1$ the $L_1$ distance between the mapped vectors. We define the {\em stretch} of mapping $\rho$ to be $\max_{u,v \in V} d(\rho(u), \rho(v))/d_G(u,v)$, or the maximum multiplicative increase in distances due to the mapping. Similarly the {\em shrink} of $\rho$ is defined as $\min_{u,v \in V}$ $d(\rho(u), \rho(v))/d_G(u,v)$, or the smallest multiplicative decrease in distances. We call $\rho$ an isometric embedded if stretch and shrink equal to $1$. 
\end{definition}

We now show that for graphs with no isometric embedding into points in $L_1$, transformational equivalence does not hold. It is well known that such graphs exist. One example is the cycle on $n$ vertices, for which no deterministic mapping is known with stretch less than $(n-1)$ \cite{linial1995geometry}.  
\begin{theorem}\label{thm:imposs}
Let $G$ be a graph that does not have an isometric embedding into points in $L_1$. There exists a mechanism $\mathcal{M}$ and workload $\mathbf{W}$ such that for any transformation of $(\mathbf{W}, \mathbf{x}) \rightarrow (\mathbf{W}_G, \mathbf{x}_G)$ such that $\mathbf{W}\mathbf{x} = \mathbf{W}_G\mathbf{x}_G$, either $\mathcal{M}$ is not an $(\epsilon, G)$-Blowfish private mechanism for answering $\mathbf{W}$ on $\mathbf{x}$, or $\mathcal{M}$ is not a $\epsilon$-differentially private mechanism for answering $\mathbf{W}_G$ on $\mathbf{x}_G$.
\end{theorem}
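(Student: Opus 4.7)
I will prove the theorem by contradiction. Suppose, toward a contradiction, that for every mechanism $\mathcal{M}$ and workload $\mathbf{W}$ there is a transformation $(\mathbf{W},\mathbf{x}) \mapsto (\mathbf{W}_G,\mathbf{x}_G)$ preserving $\mathbf{Wx}=\mathbf{W}_G\mathbf{x}_G$ under which $\mathcal{M}$ is simultaneously $(\epsilon,G)$-Blowfish private on $(\mathbf{W},\mathbf{x})$ and $\epsilon$-differentially private on $(\mathbf{W}_G,\mathbf{x}_G)$. The plan is to show that this forces an isometric $L_1$ embedding of $G$, which contradicts the hypothesis.

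First I fix $\mathbf{W}=\mathbf{I}_k$, so the answer-preservation constraint becomes $\mathbf{W}_G\mathbf{x}_G=\mathbf{x}$. I restrict attention to singleton databases $\mathbf{e}_u$ for $u\in V$ and write $\mathbf{y}_u$ for the transformed vector, so $\mathbf{W}_G\mathbf{y}_u=\mathbf{e}_u$. Everything that follows aims to show $\|\mathbf{y}_u-\mathbf{y}_v\|_1=d_G(u,v)$ for all $u,v\in V$, where $d_G$ denotes shortest-path distance.

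I would use two canonical data-dependent mechanisms. The first, $\mathcal{M}_1$, is the exponential mechanism with utility $-d_G$: on input $\mathbf{e}_u$ it outputs vertex $v$ with probability proportional to $e^{-\epsilon d_G(u,v)/2}$. Its utility has Blowfish-sensitivity $1$ (changing $u$ to any Blowfish neighbor $u'$ moves $d_G(u,\cdot)$ by at most $1$), so $\mathcal{M}_1$ is $(\epsilon,G)$-Blowfish private. Because $\mathcal{M}_1$ can recover $\mathbf{x}=\mathbf{e}_u$ from $\mathbf{W}_G\mathbf{y}_u$, its output distribution on $\mathbf{y}_u$ matches that on $\mathbf{e}_u$. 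Requiring $\epsilon$-DP on the transformed space, at output pairs that saturate the Blowfish bound (cleanly realized by the Bernoulli mechanism $\mathrm{Bern}(C\,e^{-\epsilon d_G(u,u_0)})$ ranging over reference vertices $u_0\in V$), then forces $\|\mathbf{y}_u-\mathbf{y}_v\|_1\ge d_G(u,v)$ for every pair. Symmetrically, the second mechanism $\mathcal{M}_2$ is DP-tight on the transformed inputs (e.g., Laplace noise applied directly to $\mathbf{x}_G$); the converse direction of the hypothesized equivalence requires it to be $(\epsilon,G)$-Blowfish on the original, forcing $\|\mathbf{y}_u-\mathbf{y}_v\|_1\le d_G(u,v)$. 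Combined, $u\mapsto\mathbf{y}_u$ is an isometric $L_1$ embedding of $G$, contradicting the hypothesis.

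The main obstacle is packaging both arguments into a single fixed $\mathcal{M}$, as the theorem's ``there exists'' demands: the shrink and stretch arguments use structurally different mechanisms. The natural remedy is to let $\mathcal{M}$ be the parallel composition of $\mathcal{M}_1$ and $\mathcal{M}_2$ under a split privacy budget, so that a violation of the isometry in either direction manifests as the failure of either Blowfish or DP for $\mathcal{M}$. A secondary technicality is controlling the normalization constants $Z_u$ in the exponential mechanism at pairs not lying on a common shortest path; the Bernoulli reformulation sidesteps this, because $\mathrm{Bern}(C\,e^{-\epsilon d_G(u,u_0)})$ realizes ratios of the form $e^{\epsilon(d_G(v,u_0)-d_G(u,u_0))}$ that saturate the Blowfish bound whenever $u$ lies on a geodesic from $v$ to $u_0$, and ranging $u_0$ over $V$ covers every pair.
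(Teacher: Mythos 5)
Your proposal is essentially the paper's own argument: the paper likewise fixes $\mathbf{W}$ to be the identity, restricts to singleton databases so each input is a vertex mapped to a point $\mathbf{x}_G$ in $L_1$, uses the exponential mechanism with score $-d_G$ (Blowfish-private, but violating $\epsilon$-differential privacy whenever the embedding brings a non-adjacent pair to $L_1$ distance $1$) for the no-shrink direction, and a mechanism calibrated to the embedded $L_1$ distances (the paper uses a second exponential mechanism where you use Laplace noise on $\mathbf{x}_G$) for the no-stretch direction, so that the hypothesized equivalence would force an isometric $L_1$ embedding of $G$. The quantifier obstacle you flag --- exhibiting one fixed $\mathcal{M}$ rather than one per transformation --- is not actually resolved in the paper either (its Case-2 mechanism also depends on the transformation), so your parallel-composition caveat does not put your argument behind the paper's own standard.
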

We refer the reader to \iftoggle{fullpaper}{Appendix~\ref{sec:proofs-general-transequiv}}{the full paper} for all proofs in this section. We would like to note that transformational equivalence holds for policy graphs that are trees, since trees can be isometrically embedded into points in $L_1$, and the $\mathbf{P}_G$ we construct is one such mapping. Moreover, the proof for Theorem~\ref{thm:imposs} requires a mechanism $\mathcal{M}$ that is data-dependent; it uses the exponential mechanism that introduces noise that depends on the input. We believe data dependence is necessary for the negative result, and hence we were able to show transformational equivalence for matrix mechanism algorithms (that are data independent). 

Despite the negative result, we next show an approximate transformational equivalence for general graphs and mechanisms with some loss in utility. 
The error in our approximate transformation is proportion to the stretch resulting from embedding $G$ into a spanning tree of $G$, $G'$.
Transformational equivalence can then be applied on $G'$ giving us an approximate equivalence under the original graph.

\begin{lemma}(Subgraph Approximation)\label{lem:subgraph} 
  Let $G = (V,E)$ be a policy graph. Let $G' = (V, E')$ be a spanning tree of $G$ on the same set of vertices, such that every $(u,v) \in E$ is connected in $G'$ by a path of length at most $\ell$ ($G'$ is said to be an $\ell$-approximate subgraph\footnote{While we that require $V(G) = V(G')$, the proof does not require $G'$ to be a subgraph of $G$ (i.e., $E' \subseteq E$). But it suffices for the applications of this technique in this paper.}). Then for any mechanism $\mathcal{M}$ which satisfies $(\epsilon,G')$-Blowfish privacy, $\mathcal{M}$ also satisfies $(\ell \cdot \epsilon,G)$-Blowfish privacy.
\end{lemma}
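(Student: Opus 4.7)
The plan is to show directly from the definition of Blowfish privacy that any neighboring pair under $G$ can be connected by a short chain of neighbors under $G'$, and then apply $(\epsilon, G')$-Blowfish privacy along this chain using group privacy.

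Fix any $(D, D') \in \mathcal{N}(G)$ and any measurable $S \subseteq \mathrm{range}(\mathcal{M})$. By Definition~\ref{def:neighbors-blowfish}, $D$ and $D'$ either differ in the value of one entry with values $(u,v) \in E$, or differ by the presence of a single entry with value $u$ where $(u, \bot) \in E$. In either case, by the hypothesis of the lemma, $u$ and $v$ (or $u$ and $\bot$) are connected in $G'$ by a path $u = w_0, w_1, \ldots, w_t = v$ with $t \leq \ell$.

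First I would handle the ``value change'' case. Construct the sequence of databases $D = D_0, D_1, \ldots, D_t = D'$, where $D_i$ is obtained from $D$ by changing the value of the single differing entry from $u$ to $w_i$. Each consecutive pair $(D_{i-1}, D_i)$ differs in the value of exactly one entry with values $(w_{i-1}, w_i) \in E'$, so $(D_{i-1}, D_i) \in \mathcal{N}(G')$. Applying the $(\epsilon, G')$-Blowfish guarantee along the chain gives
\begin{equation*}
\Pr[\mathcal{M}(D) \in S] \leq e^{\epsilon} \Pr[\mathcal{M}(D_1) \in S] \leq \cdots \leq e^{t\epsilon} \Pr[\mathcal{M}(D') \in S] \leq e^{\ell \epsilon} \Pr[\mathcal{M}(D') \in S].
\end{equation*}
For the ``insertion/deletion'' case $(u, \bot) \in E$, the same construction works: the path in $G'$ from $u$ to $\bot$ provides a sequence of intermediate values, and the last hop $(w_{t-1}, \bot) \in E'$ corresponds to an insertion/deletion under $G'$, while the earlier hops are value changes; each step is a $G'$-neighbor pair, so the same product bound applies.

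The main (minor) obstacle is bookkeeping around the $\bot$ vertex: one must verify that a path in $G'$ which touches $\bot$ only at its endpoint yields a valid chain whose consecutive elements are genuine $G'$-neighbors (value changes in the interior, a single insertion/deletion at the boundary). Once that case analysis is in place, the result is an immediate application of group privacy with group size bounded by $\ell$. Since this argument holds uniformly for every $(D, D') \in \mathcal{N}(G)$, $\mathcal{M}$ satisfies $(\ell \cdot \epsilon, G)$-Blowfish privacy.
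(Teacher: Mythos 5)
Your proposal is correct and follows essentially the same route as the paper's own proof: construct a chain of databases along the path in $G'$ between the differing values, observe that consecutive databases in the chain are $G'$-neighbors, and compose the $(\epsilon,G')$ guarantee over at most $\ell$ steps. Your extra care about $\bot$ is welcome (the paper glosses over it); the only remaining case — a path between two non-$\bot$ values that passes through $\bot$ in its interior — is handled the same way, with the intermediate database being the one with the entry deleted, so there is no gap.
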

\eat{
\begin{proof}
  Assume $D$ and $D'$ are neighboring databases under policy graph $G$. Then $D = A \cup \left\{ x \right\}$ and $D' = A \cup \left\{ y \right\}$ for some database $A$, and $(x,y) \in E$. From our assumption, $x$ and $y$ are connected by a path in $G'$ of length at most $\ell$. Therefore, there exist a sequence of vertices $x=v_1,\dots,v_j = y$ such that $(v_i, v_{i+1}) \in E$ and $j<\ell$. Further, $A \cup \left\{ v_i \right\}$ and $A \cup \left\{ v_{i+1} \right\}$ are neighbors under policy graph $G'$. Therefore, 
  \begin{equation*}
    \mathrm{Pr} [\mathcal{M}(A \cup \left\{ v_i \right\}) \in S] \le e^\epsilon \cdot \mathrm{Pr} [\mathcal{M}(A \cup \left\{ v_{i+1} \right\}) \in S].
  \end{equation*}
    Composing over all $1 \le i \le j$ gives us the desired result. 
\end{proof}
}

\begin{corollary}
  \label{cor:approx-equivalence}
  Let $G$ be a graph and let $G'$ be an $\ell$-approximate spanning tree.
  Suppose $\mathcal{M}$ is an $\epsilon$ differentially private mechanism for $\mathbf{W}_{G'},\mathbf{x}_{G'}$.
  Then, $\mathcal{M}$ is an $(\ell \cdot \epsilon,G)$-Blowfish private mechanism for $\mathbf{W},\mathbf{x}$.
  Since $\mathbf{Wx}=\mathbf{W}_G \mathbf{x}_G$, the mechanism has the same error in both instances.
\end{corollary}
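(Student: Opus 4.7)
The plan is to obtain this corollary as a direct composition of the two results that immediately precede it, namely Theorem~\ref{thm:blowfishEquality} (the tree case of transformational equivalence) and Lemma~\ref{lem:subgraph} (the subgraph approximation). Since $G'$ is a spanning tree of $G$, it is in particular a tree, so both results apply and can be chained.

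First I would invoke Theorem~\ref{thm:blowfishEquality} with the tree $G'$: the hypothesis states that $\mathcal{M}$ is an $\epsilon$-differentially private mechanism for answering $\mathbf{W}_{G'}$ on $\mathbf{x}_{G'}$, where $\mathbf{W}_{G'} = \mathbf{W}\mathbf{P}_{G'}$ and $\mathbf{x}_{G'} = \mathbf{P}_{G'}^{-1}\mathbf{x}$. By that theorem, $\mathcal{M}$ is then an $(\epsilon,G')$-Blowfish private mechanism for answering $\mathbf{W}$ on $\mathbf{x}$, and moreover the error is identical in the two views because $\mathbf{W}\mathbf{P}_{G'}\mathbf{P}_{G'}^{-1}\mathbf{x} = \mathbf{W}\mathbf{x}$, so the true answer vector is unchanged.

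Next I would apply Lemma~\ref{lem:subgraph}: by hypothesis every edge $(u,v)\in E(G)$ is connected in $G'$ by a path of length at most $\ell$, so $G'$ is an $\ell$-approximate subgraph of $G$ in the sense of the lemma. The lemma then upgrades $(\epsilon,G')$-Blowfish privacy of $\mathcal{M}$ to $(\ell \cdot \epsilon, G)$-Blowfish privacy on the same input $(\mathbf{W},\mathbf{x})$. Since the privacy step does not alter the queries or the data (only the claimed privacy level), the error of $\mathcal{M}$ answering $\mathbf{W}$ on $\mathbf{x}$ remains exactly the error of $\mathcal{M}$ answering $\mathbf{W}_{G'}$ on $\mathbf{x}_{G'}$, giving the final sentence of the corollary.

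There is essentially no obstacle; the only thing worth being careful about is that Theorem~\ref{thm:blowfishEquality} is stated for trees and requires the particular $\mathbf{P}_{G'}$ constructed in Section~\ref{sec:construction-PG} satisfying Claim~\ref{claim:tree-neigh}, so I would verify (in one sentence) that the $\mathbf{W}_{G'},\mathbf{x}_{G'}$ appearing in the corollary are instantiated with this same $\mathbf{P}_{G'}$, so that the theorem's hypothesis is literally the hypothesis of the corollary. Once that identification is made, the corollary follows by quoting the two results in sequence.
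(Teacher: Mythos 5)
Your proposal is correct and matches the paper's intended argument: the corollary is obtained exactly by chaining Theorem~\ref{thm:blowfishEquality} applied to the spanning tree $G'$ (giving $(\epsilon,G')$-Blowfish privacy with identical error, since $\mathbf{W}\mathbf{P}_{G'}\mathbf{P}_{G'}^{-1}\mathbf{x}=\mathbf{W}\mathbf{x}$) with Lemma~\ref{lem:subgraph} (upgrading to $(\ell\cdot\epsilon,G)$-Blowfish privacy). Your care in checking that $\mathbf{W}_{G'},\mathbf{x}_{G'}$ use the $\mathbf{P}_{G'}$ of Section~\ref{sec:construction-PG} is exactly the right (and only) point needing verification.
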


A well-known result of Fakcharoenphol et al \cite{fakcharoenphol2003tight} (Theorem~2) shows that any metric can be embedded into a distribution of trees with $O(\log n)$ expected {stretch}.
It would be desirable to use this result to give a $O(\log(n))$-approximate subgraph for any graph $G$.
However, because the bound on {stretch} only holds in expectation, our privacy guarantee would only hold in expectation!
A deterministic embedding with low stretch does not always exist.
To see this, consider an $n$-vertex cycle.
Any spanning tree consists of all but one edge $(u,v)$ from the cycle. While $u$ and $v$ were distance 1 apart in the cycle, they are distance $n-1$ apart in the spanning tree! If we picked a spanning tree at random by randomly choosing the edge that was dropped the expected stretch is only $2$.
Using a union bound over all pairs in this example, we can also see that it is impossible to guarantee a low stretch (and therefore a privacy guarantee) with high probability.
Therefore, we cannot apply Lemma~\ref{lem:subgraph} in a general way to find a suitable spanner for any policy graph $G$.
However, the lemma is still useful in many cases and we will use it throughout the rest of the paper.

\subsection{Construction of $\mathbf{P}_G$}\label{sec:construction-PG}

Our construction of $\mathbf{P}_G$ from the policy graph $G$ is related to the vertex-edge incidence matrix, where every row   corresponds to a vertex in $G$, every column corresponds to an edge in $G$. A column has two non-zero entries (1 and -1) in the rows corresponding vertices connected by the corresponding edge. We can view $\mathbf{P}_G$ and $\mathbf{P}_G^{-1}$ as linear transformations from the vertex domain $V$ to the edge domain $E$. While $\mathbf{x}$ corresponds to counts on vertices of $G$, the transformed database $\mathbf{x}_G = \mathbf{P}_G^{-1}\mathbf{x}$ would assign weights to edges in $G$. Similarly, while an original linear query $q \in \mathbf{W}$ associates weights on (a subset of) vertices in $G$, a query $q_G \in \mathbf{W}_G = \mathbf{W} \mathbf{P}_G$ associates weights on (a subset of) edges in $G$. This intuition will be very useful when using the equivalence result to design $(\epsilon, G)$-Blowfish algorithms. 

We cannot just use the vertex-edge incidence matrix as $\mathbf{P}_G$ since it may either have $k+1$ rows (when $G$ contains $\bot$), or since it does not have an inverse (when $G$ does not contain $\bot$). We will describe our construction of $\mathbf{P}_G$ that satisfies all our constraints in the rest of the section. The details are quite technical, and an uninterested reader can skip over them and still understand the rest of the paper. As mentioned before, we assume $G$ is connected. Our constructions also extend to policy graphs that are disconnected  and are discussed in \iftoggle{fullpaper}{Appendix~\ref{sec:multiple-components}}{the full paper}.

\subsubsection*{Case I: Unbounded (with $\bot$)}
 
We start our construction with a simple case. Let $G=(V,E)$ be a {\em connected} undirected graph, with $V = \dom \cup \{\bot\}$,  $|\dom| = k$. We define $\mathbf{P}_G$ to be the following  $k \times |E|$-matrix:  let each row of $\mathbf{P}_G$ correspond to a value in $\dom$; for each edge $(u,v) \in E$ ($u, v \neq \bot$), add a column to $\mathbf{P}_G$ with a $1$ in the row corresponding to value $u$, a $-1$ in the row corresponding to value $v$ (order of $1$ and $-1$ is not important), and zeros in the rest of the rows; and for each edge $(u,\bot) \in E$ ($u \neq \bot$), add a column with a $1$ in the row corresponding to $u$ and zeros in the rest. Figure~\ref{fig:graphs} gives an example.

\begin{figure}[t]
\centering
 \includegraphics[width=.4\textwidth]{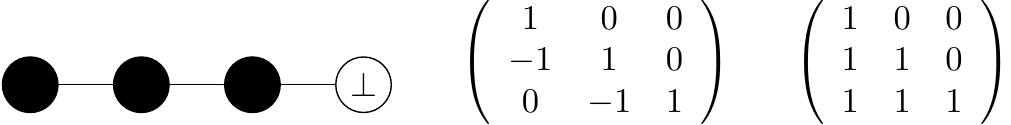}
\caption{\label{fig:graphs} Example policy graph $G$ and their  $\mathbf{P}_G, \mathbf{P}_G^{-1}$}
\end{figure}

It is easy to see that $\mathbf{P}_G$ has all the properties required for our transformational equivalence results to hold. 
\begin{lemma}
  \label{lem:sensitivity-equivalence}
  Let $\mathbf{W}$ be a workload, and $G$ be a policy graph. Then $\Delta_\mathbf{W}(G) = \Delta_{\mathbf{W}_G}$.
\end{lemma}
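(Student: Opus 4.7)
The plan is to show both sensitivities reduce to the same quantity, namely the maximum $L_1$-norm over $\mathbf{W}$ applied to a column of $\mathbf{P}_G$, by explicitly characterizing the ``difference vectors'' $\mathbf{x}-\mathbf{x}'$ allowed under each notion of neighbor.

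First I would rewrite $\Delta_\mathbf{W}(G)$ in terms of difference vectors. By Definition~\ref{def:neighbors-blowfish}, a Blowfish-neighboring pair $(\mathbf{x},\mathbf{x}') \in \mathcal{N}(G)$ corresponds to either (i) an edge $(u,v)\in E$ with $u,v\ne\bot$ and $\mathbf{x}-\mathbf{x}' = \pm(\mathbf{e}_u-\mathbf{e}_v)$, or (ii) an edge $(u,\bot)\in E$ and $\mathbf{x}-\mathbf{x}' = \pm\mathbf{e}_u$. By the construction of $\mathbf{P}_G$ in Case~I, the set of such difference vectors is exactly $\{\pm\mathbf{p}_e : e \in E\}$ where $\mathbf{p}_e$ denotes the column of $\mathbf{P}_G$ associated with edge $e$. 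Hence
\begin{equation*}
\Delta_\mathbf{W}(G) \;=\; \max_{e \in E}\,\bigl\|\mathbf{W}\mathbf{p}_e\bigr\|_1.
\end{equation*}

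Next I would do the analogous reduction for $\Delta_{\mathbf{W}_G}$. Here $\mathbf{W}_G = \mathbf{W}\mathbf{P}_G$ is a $q\times|E|$ matrix, so ``databases'' in the edge-domain live in $\mathbb{R}^{|E|}$; under unbounded differential privacy the difference vectors are precisely $\pm\mathbf{e}_i$ for $i=1,\dots,|E|$. Thus
\begin{equation*}
\Delta_{\mathbf{W}_G} \;=\; \max_{i}\,\bigl\|\mathbf{W}_G \mathbf{e}_i\bigr\|_1 \;=\; \max_{i}\,\bigl\|\mathbf{W}\mathbf{P}_G\mathbf{e}_i\bigr\|_1,
\end{equation*}
and since $\mathbf{P}_G\mathbf{e}_i$ is exactly the $i$-th column of $\mathbf{P}_G$ (corresponding to some edge $e_i \in E$), the right-hand side matches the characterization from the previous step.

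Combining the two displays gives $\Delta_\mathbf{W}(G) = \Delta_{\mathbf{W}_G}$, completing the proof. I expect no real obstacle; the only care needed is to verify that the bijection between Blowfish-neighbor difference vectors and columns of $\mathbf{P}_G$ handles both kinds of edges (internal edges $(u,v)$ and edges of the form $(u,\bot)$) uniformly, which is immediate from the case distinction in the construction. The Case~II constructions (bounded, no $\bot$) handled later in the paper will need the same column-level identification, so the proof structure given here extends verbatim once the corresponding $\mathbf{P}_G$ is specified.
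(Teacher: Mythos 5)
Your proof is correct and follows essentially the same route as the paper's: the paper's (two-line) argument likewise identifies the Blowfish-neighbor difference vectors with the columns of $\mathbf{P}_G$ and observes that both $\Delta_\mathbf{W}(G)$ and $\Delta_{\mathbf{W}_G}$ equal the maximum column $L_1$ norm of $\mathbf{W}_G = \mathbf{W}\mathbf{P}_G$. You have simply spelled out the column-by-column identification (including the $(u,\bot)$ edges) that the paper leaves implicit.
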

\eat{
{\sc Proof.}
  This follows from the definition of $\mathbf{P}_G$. We have 
  \begin{align*}
    \Delta_{\mathbf{W}}(G) &= \max_{(\mathbf{x},\mathbf{x}')\in N(G)}\norm{\mathbf{Wx} - \mathbf{Wx}'}_1  \\
    &= \max_{\mathbf{v}_i\in\text{cols}(\mathbf{W}_G)}\norm{\mathbf{v}_i}_1 \rlap{$\qquad \Box$}
  \end{align*} 
}

\begin{lemma}
$\mathbf{P}_G$ constructed above has rank $k$.
\end{lemma}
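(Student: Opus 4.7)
I would establish rank $k$ by exhibiting a $k \times k$ submatrix of $\mathbf{P}_G$ whose determinant is $\pm 1$. Since $G$ is connected and has $k+1$ vertices (the values of $\dom$ together with $\bot$), it contains a spanning tree $T$ with exactly $k$ edges. Let $\mathbf{P}_T$ denote the $k \times k$ submatrix of $\mathbf{P}_G$ obtained by restricting to the columns indexed by the edges of $T$. If I can show $\det(\mathbf{P}_T) \neq 0$, then $\mathbf{P}_G$ has $k$ linearly independent columns and hence rank $k$.

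The key step is to reorder rows and columns so that $\mathbf{P}_T$ becomes (upper or lower) triangular with $\pm 1$ on the diagonal. Root $T$ at $\bot$ and list the non-$\bot$ vertices $v_1, v_2, \ldots, v_k$ in BFS order from $\bot$; index row $i$ by $v_i$. In a rooted tree every non-root vertex has a unique parent edge, so let $e_i$ be the tree edge joining $v_i$ to its parent $p(v_i)$, and index column $i$ by $e_i$. By construction of $\mathbf{P}_G$, the column for $e_i$ has a $\pm 1$ in row $v_i$, and at most one other nonzero entry, in the row indexed by $p(v_i)$ whenever $p(v_i) \neq \bot$. Since BFS from $\bot$ orders each vertex strictly after its parent, $p(v_i) = v_{i'}$ implies $i' < i$ (and if $p(v_i) = \bot$ the column has a single nonzero entry at row $v_i$). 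Consequently $(\mathbf{P}_T)_{ji} = 0$ for every $j > i$ and $(\mathbf{P}_T)_{ii} = \pm 1$, so $\mathbf{P}_T$ is upper triangular with nonzero diagonal and $\det(\mathbf{P}_T) = \pm 1$.

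I do not anticipate a genuine obstacle: the only thing to be careful about is that $\bot$ does not correspond to a row of $\mathbf{P}_G$, so a tree edge of the form $(v_i, \bot)$ contributes a column with exactly one nonzero entry (in row $v_i$) rather than two. The triangular argument still applies — the "other endpoint" simply contributes no row — and the diagonal entry $\pm 1$ survives. The reliance on connectedness enters only when invoking the existence of a spanning tree with $k$ edges, which is exactly why the lemma is stated under the standing assumption that $G$ is connected. Once rank $k$ is established, the right inverse $\mathbf{P}_G^{-1}$ required by the construction in Section~\ref{sec:construction-PG} exists, and so Theorems~\ref{thm:mm-trans-equiv} and~\ref{thm:blowfishEquality} can be invoked.
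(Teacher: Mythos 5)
Your proof is correct and takes essentially the same approach as the paper's: both restrict $\mathbf{P}_G$ to the $k$ columns corresponding to the edges of a spanning tree of $G$ (which contains $\bot$) and show the resulting $k \times k$ matrix is nonsingular using the tree structure. The only cosmetic difference is that the paper argues the kernel is trivial by repeatedly peeling off degree-one vertices not adjacent to $\bot$, whereas you permute rows and columns by a BFS order rooted at $\bot$ to exhibit a triangular matrix with $\pm 1$ diagonal — the same elimination order viewed from the other end.
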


As $\mathbf{P}_G$ has rank $k$, and $k \leq |E|$, it has a right inverse: $\mathbf{P}_G^{-1} = \mathbf{P}_G^\top(\mathbf{P}_G\mathbf{P}_G^\top)^{-1}$. 
Finally, we prove Claim~\ref{claim:tree-neigh}: When $G$ is a tree $\mathbf{P}_G$ isometrically maps neighbors under policy graph $G$ to neighbors under differential privacy.

\begin{lemma}\label{lem:neighborEquality}
  Suppose $\mathbf{P}_G$ is constructed for a Blowfish policy graph $G$ as above, and $G$ is a tree. Any pair of databases $\mathbf{y}, \mathbf{z} \in \mathbb{R}^k$ are neighbors according to the Blowfish policy $G$ if and only if $\mathbf{P}_G^{-1}\mathbf{y}$ and $\mathbf{P}_G^{-1}\mathbf{z}$ are neighboring databases according to unbounded differential privacy.
\end{lemma}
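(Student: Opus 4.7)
The plan is to exploit the fact that when $G$ is a tree on $V = \dom \cup \{\bot\}$, the matrix $\mathbf{P}_G$ built in Section~\ref{sec:construction-PG} becomes square and invertible, and its columns are in an obvious bijection with the differences arising from Blowfish neighbor pairs. All the work is therefore in converting this structural observation into the two directions of the iff.

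First I would observe that a connected tree on $|V| = k+1$ vertices has exactly $|E| = k$ edges, so $\mathbf{P}_G$ is $k \times k$. Combined with the previous lemma stating $\mathbf{P}_G$ has rank $k$, this makes $\mathbf{P}_G$ genuinely invertible, so the right inverse $\mathbf{P}_G^{-1}$ satisfies $\mathbf{P}_G^{-1}\mathbf{P}_G = \mathbf{I}_k$ as well. This is the key feature that distinguishes trees from general graphs: once $G$ contains a cycle, $|E| > k$ forces a nontrivial kernel in $\mathbf{P}_G$, and the bijection below breaks down.

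For the forward direction, I would unpack Definition~\ref{def:neighbors-blowfish}. If $\mathbf{y}$ and $\mathbf{z}$ are Blowfish neighbors then either (i) $\mathbf{y} - \mathbf{z} = \pm(\mathbf{e}_u - \mathbf{e}_v)$ for some edge $(u,v) \in E$ with $u,v \in \dom$, or (ii) $\mathbf{y} - \mathbf{z} = \pm \mathbf{e}_u$ for some edge $(u,\bot) \in E$. By the definition of $\mathbf{P}_G$ given at the start of the section, each of these vectors is precisely $\pm$ a column of $\mathbf{P}_G$, namely $\pm\mathbf{P}_G \mathbf{e}_e$ where $\mathbf{e}_e$ is the standard basis vector in $\mathbb{R}^{|E|}$ indexed by the distinguishing edge $e$. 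Left-multiplying by $\mathbf{P}_G^{-1}$ yields
\begin{equation*}
\mathbf{P}_G^{-1}\mathbf{y} - \mathbf{P}_G^{-1}\mathbf{z} \;=\; \pm \mathbf{P}_G^{-1}\mathbf{P}_G\,\mathbf{e}_e \;=\; \pm \mathbf{e}_e,
\end{equation*}
which has a single nonzero coordinate of magnitude $1$, i.e., $\mathbf{P}_G^{-1}\mathbf{y}$ and $\mathbf{P}_G^{-1}\mathbf{z}$ are unbounded-DP neighbors.

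The backward direction is simply this chain of equalities reversed: if $\mathbf{P}_G^{-1}\mathbf{y} - \mathbf{P}_G^{-1}\mathbf{z} = \pm \mathbf{e}_e$, then $\mathbf{y} - \mathbf{z} = \pm \mathbf{P}_G \mathbf{e}_e$ is $\pm$ the column of $\mathbf{P}_G$ associated with edge $e$, and by construction this column has exactly the form required by case (i) or (ii) of Definition~\ref{def:neighbors-blowfish}, so $\mathbf{y}$ and $\mathbf{z}$ are Blowfish neighbors under $G$. I expect the only mildly subtle point to be verifying that $\bot$-edges really do line up with the ``add/remove one entry'' case rather than the ``swap one entry'' case, but this is baked directly into the construction (columns for $\bot$-edges have a single $+1$ and no $-1$). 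Invertibility of $\mathbf{P}_G$ is what makes the correspondence bijective, and the tree hypothesis is used nowhere else in the argument.
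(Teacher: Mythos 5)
Your proof is correct and follows essentially the same route as the paper's: both arguments identify Blowfish-neighbor differences with $\pm$ columns of $\mathbf{P}_G$, use the right-inverse property $\mathbf{P}_G\mathbf{P}_G^{-1}=\mathbf{I}$ for the direction from differential-privacy neighbors to Blowfish neighbors, and invoke the tree hypothesis exactly where the paper does, namely to make $\mathbf{P}_G$ square so that $\mathbf{P}_G^{-1}\mathbf{P}_G=\mathbf{I}$ in the other direction. Your explicit tracking of the $\pm$ signs and of the $\bot$-edge columns is slightly more careful than the paper's write-up, but it is the same argument.
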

\eat{\begin{proof}
Let $\mathbf{y}' = \mathbf{P}_G^{-1}\mathbf{y}$ and $\mathbf{z}' = \mathbf{P}_G^{-1}\mathbf{z}$. Consider $\mathbf{y} - \mathbf{z} = \mathbf{P}_G\mathbf{y}'-\mathbf{P}_G\mathbf{z}'$. Since $\mathbf{y}'$ and $\mathbf{z}'$ are neighbors under differential privacy, $\mathbf{y}'-\mathbf{z}' = \mathbf{\hat{i}}$, where $\mathbf{\hat{i}}$ is a vector with a single non-zero entry, which is 1. So, $\mathbf{y} - \mathbf{z} = \mathbf{P}_G\mathbf{\hat{i}}$ is a single column of $\mathbf{P}_G$. We have, either
\squishlist 
\item $\mathbf{y} - \mathbf{z}$ is equal to a column of $\mathbf{P}_G$ corresponding to $(u,v) \in E$ where $u,v \neq \bot$, or
\item $\mathbf{y} - \mathbf{z}$ is equal to a column of $\mathbf{P}_G$ corresponding to $(u,\bot) \in E$ where $u \neq \bot$.
\squishend
In either case, $\mathbf{y}$ and $\mathbf{z}$ are neighbors according to $G$. 

For the proof in the other direction, suppose $\mathbf{y}$ and $\mathbf{z}$ are neighbors, that is, $\mathbf{y} - \mathbf{z} = \mathbf{p}_G$ where $\mathbf{p}_G$ is a column of $\mathbf{P}_G$.
In this direction, we will use that $G$ is a tree, and therefore $\mathbf{P}_G$ is square, which implies $\mathbf{P}_G^{-1}$ is both a left and right inverse.
We can write
\begin{eqnarray*}
  \lefteqn{\mathbf{P}_G\mathbf{P}_G^{-1}\mathbf{y} - \mathbf{P}_G\mathbf{P}_G^{-1}\mathbf{z} = \mathbf{p}_G \implies} \\
  && \mathbf{P}_G^{-1} \cdot \mathbf{P}_G(\mathbf{P}_G^{-1}\mathbf{y} - \mathbf{P}_G^{-1}\mathbf{z}) = \mathbf{P}_G^{-1} \cdot \mathbf{p}_G 
  = \mathbf{\hat{i}},
\end{eqnarray*}
for some unit vector $\mathbf{\hat{i}}$.
\end{proof}
}

We refer the reader to \iftoggle{fullpaper}{Appendix~\ref{sec:PG-appendix}}{the full paper} for all the proofs.


\subsubsection*{Case II: Bounded (without $\bot$)}

We next consider a slightly more involved case: let $G=(V,E)$ be a {\em connected} undirected graph, with $V = \dom$, where $|\dom| = k$. If we follow the same construction as in Case I, rows in the resulting $\mathbf{P}_G$ are not linearly independent any more, and thus $\mathbf{P}_G^{-1}$ is not well-defined (no right inverse can be defined for $\mathbf{P}_G$). Fortunately, for every such $G$, we can replace one vertex in $V$ with $\bot$, denoting the resulting graph as $G'$, and correspondingly modify $\mathbf{W}$ and $\mathbf{x}$ to $\mathbf{W}', \mathbf{x}'$ resp., such that (a) $\mathbf{P}_{G'}$ is full rank, (b) answering $\mathbf{W}'$ on $\mathbf{x}'$ under policy $G'$ has the same error as answering $\mathbf{W}$ on $\mathbf{x}$ under $G$, and (c) $\mathbf{W}\mathbf{x}$ can be reconstructed from the answer to $\mathbf{W}'\mathbf{x}'$. 

Pick any value $v \in V$; in $G' = (V', E')$, let $V' = V - \{v\} + \{\bot\}$ and $E' = E - \{(v, u) \mid u \in V\} + \{(\bot, u) \mid (v, u) \in E\}$. Then $G'$ falls into Case I, so we can construct $\mathbf{P}_{G'}$ and $\mathbf{P}^{-1}_{G'}$ as in Case I. 

We  transform $\mathbf{x}$ by removing the entry $\mathbf{x}[v]$ (denoted as $\mathbf{x}_{-v}$). We then transform $\mathbf{W}$ to $\mathbf{W}$ by removing the column $v$ and rewriting all queries that depend on $\mathbf{x}[v]$ to use $n - \sum_{j \neq v} \mathbf{x}[j]$, where $n = \sum_{i \in \dom} \mathbf{x}[i]$ is the size of the input database, without any loss in our ability to answer the original queries. We can do this because when $\bot$ is not in $G$, neighboring databases have the same number of tuples. We can show that our construction satisfies all three requirements (a), (b), and (c) discussed above. 

\begin{lemma}
Consider $G'$, $\mathbf{W}'$, and $\mathbf{x}_{-v}$ constructed above. We have: i) $\mathbf{W} \mathbf{x} = \mathbf{W}' \mathbf{x}_{-v} + \mathbf{c}(\mathbf{W}, n)$, where $\mathbf{c}(\mathbf{W}, n)$ is a constant vector depending only on $\mathbf{W}$ and the size of the database; and ii) any two databases $\mathbf{y}$ and $\mathbf{z}$ are neighbors under $G$ if and only if $\mathbf{y}_{-v}$ and $\mathbf{z}_{-v}$ are neighbors under $G'$.
\label{lem:column-removal}
\end{lemma}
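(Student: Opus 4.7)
The plan is to prove the two parts separately. For part (i) I would just verify the identity row by row using the constraint $\sum_{j \in \dom} \mathbf{x}[j] = n$, and for part (ii) I would do a case analysis on whether the ``changed'' edge in $G$ is incident to the removed vertex $v$ or not.

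For part (i), fix any row $\mathbf{q}_i$ of $\mathbf{W}$ and write
\begin{equation*}
\mathbf{q}_i \cdot \mathbf{x} \;=\; \mathbf{W}[i,v]\, \mathbf{x}[v] \;+\; \sum_{j \neq v} \mathbf{W}[i,j]\, \mathbf{x}[j].
\end{equation*}
Substituting $\mathbf{x}[v] = n - \sum_{j \neq v} \mathbf{x}[j]$ (which holds because the database has fixed size $n$) and collecting terms gives
\begin{equation*}
\mathbf{q}_i \cdot \mathbf{x} \;=\; \mathbf{W}[i,v]\cdot n \;+\; \sum_{j \neq v}\bigl(\mathbf{W}[i,j]-\mathbf{W}[i,v]\bigr)\, \mathbf{x}[j].
\end{equation*}
By construction of $\mathbf{W}'$ (``rewrite all queries that depend on $\mathbf{x}[v]$''), the coefficient $\mathbf{W}[i,j]-\mathbf{W}[i,v]$ is exactly $\mathbf{W}'[i,j]$, so the sum equals the $i$-th entry of $\mathbf{W}'\mathbf{x}_{-v}$. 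Setting $\mathbf{c}(\mathbf{W}, n)[i] = n \cdot \mathbf{W}[i,v]$ gives the claimed identity; this vector depends only on $\mathbf{W}$ and $n$.

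For part (ii), I would split the neighboring relation on $G$ into two cases. Since $G$ does not contain $\bot$, two databases $\mathbf{y},\mathbf{z}$ are neighbors under $G$ iff they differ by $+1$ at some coordinate $u$ and $-1$ at some coordinate $w$ with $(u,w)\in E$; in particular they have the same total count $n$. In Case~A, neither $u$ nor $w$ equals $v$: then $\mathbf{y}_{-v}$ and $\mathbf{z}_{-v}$ still differ only at coordinates $u,w$ by $\pm 1$, and $(u,w)\in E'$ by the construction of $G'$, so they are neighbors under $G'$ (via an edge not touching $\bot$). In Case~B, one of $u,w$ equals $v$, say $u=v$: removing coordinate $v$ makes $\mathbf{y}_{-v}$ and $\mathbf{z}_{-v}$ differ by $\pm 1$ at coordinate $w$ only, i.e.\ by the presence/absence of a tuple with value $w$; since $(v,w)\in E$ we have $(w,\bot)\in E'$, so again they are neighbors under $G'$.

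For the converse direction, I would essentially reverse the case analysis, using the size constraint to recover $\mathbf{y}[v]$ and $\mathbf{z}[v]$: if $\mathbf{y}_{-v}$ and $\mathbf{z}_{-v}$ differ at two coordinates $u,w \neq v$ via an edge of $E'$ not touching $\bot$, their sums agree and so $\mathbf{y}[v]=\mathbf{z}[v]=n-\sum_{j\neq v}\mathbf{y}[j]$, recovering a $G$-neighbor swap along $(u,w)\in E$; if they differ by $\pm 1$ at a single coordinate $w$ with $(w,\bot)\in E'$, their sums differ by one, forcing $\mathbf{y}[v]$ and $\mathbf{z}[v]$ to differ by one in the opposite direction, which exactly corresponds to swapping $v$ for $w$ along $(v,w)\in E$. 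The main obstacle here is just being careful about the fact that the vector representations $\mathbf{y}_{-v},\mathbf{z}_{-v}$ of neighbors under $G'$ need not come from size-$n$ databases---the bookkeeping through the equality $\sum_j \mathbf{y}[j]=n$ is what ties the two notions of ``neighbor'' together cleanly.
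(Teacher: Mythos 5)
Your proof is correct and takes essentially the same route as the paper: part (i) is the same substitution $\mathbf{x}[v] = n - \sum_{j \neq v}\mathbf{x}[j]$ that the paper encodes via the explicit matrix $\mathbf{D}$ with $\mathbf{W}' = \mathbf{W}\mathbf{D}$, and part (ii) is the same two-case analysis on whether the pair of differing values involves $v$. You additionally write out the converse direction of (ii), which the paper leaves implicit, and you correctly use the fixed database size $n$ to recover $\mathbf{y}[v]$ and $\mathbf{z}[v]$ there.
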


Technical details of the construction and proof of correctness are presented in \iftoggle{fullpaper}{Appendix~\ref{sec:caseII-PG-appendix}}{the full paper}.

\begin{example}\label{ex:fullrank}
  Recall the $\mathbf{C}_k$ workload from Figure~\ref{fig:workloads}.
  In $\mathbf{C}_k$, the last row computes $n$, the size of the database. Since we already know $n$, we do not need to answer that query privately. We can equivalently consider a workload $\mathbf{C'}_k$ with all zeros in the last row and removing the last column (since it would have all zeros). We can also remove the all zero row that remains resulting in a $(k-1)$ by $(k-1)$ matrix. Consider the line graph with $k$ nodes connected in a path. We can replace the rightmost node with $\bot$ (Figure~\ref{fig:graphs}) to get $G'$.   $\mathbf{P}_{G'}$ is a $(k-1) \times (k-1)$ matrix that is full rank, and $\mathbf{P}_{G'}^{-1}$ is equal to $\mathbf{C'}_k$. 

Thus, by Theorem~\ref{thm:blowfishEquality} and Lemma~\ref{lem:column-removal}, the minimum error for answering $\mathbf{C}_k$ under Blowfish policy $G^1_k$ is equal to the minimum error for answering $\mathbf{C'}_k  \cdot \mathbf{P}_{G'} = \mathbf{I}_{k-1}$ under $\epsilon$-differential privacy. Since $\mathbf{I}_{k-1}$ is the identity workload, an optimal data independent strategy would be to add Laplace noise to yield a total error of $\Theta(k/\epsilon^2)$. 
\end{example}

\eat{
\subsubsection*{Case III: General Policy Graph}

A general policy graph may have more than one connected component. If $\bot$ is not connected to a component, we simply apply the conversion discussed in Case II to reduce it to Case I with the vertex $\bot$ connected to it. Then eventually we will have all components connected to $\bot$, which essentially falls into Case I. So Case III can be also reduced to Case I.
We discuss this further in Appendix~\ref{sec:PG-general}.
}

\section{Blowfish Private Mechanisms}
\label{sec:upperbound}
In this section, we derive mechanisms (with near optimal data independent error) for answering range queries and histograms under Blowfish policies to illustrate the power of the transformational equivalence theorem. 
In Section~\ref{sec:workloads-graphs-definition} we define the types of queries and graphs we will be focusing on. 
In Sections~\ref{sec:range-1} and \ref{sec:range-theta} we present strategies for answering multi-dimensional range queries under the grid graph policy. 
These algorithms are data independent and incur the same error on all datasets. In Section~\ref{sec:datadependent}, we extend our strategies to get data dependent algorithms for Blowfish. 
Figure~\ref{fig:summary} summarizes our data independent error bounds. The error incurred by data dependent versions of our algorithms will be evaluated in Section~\ref{sec:experiments}.

\begin{figure}[t]
{\small 
  \centering
  \begin{tabular}{| c | c c | c | }
    \hline
    \multirow{2}{*}{\small \textbf{Workload}} & \multicolumn{3}{|c|}{\textbf{Error per query}} \\
    \cline{2-4}
    & \multicolumn{2}{|c|}{\small \textbf{Blowfish}} & {\small \textbf{$\epsilon$-Diff. \cite{icde:XiaoWG10}}}\\
    \hline
    \multirow{2}{*}{$\mathbf{R}_{k}$} & $G^1_{k}$ & $\Theta(1/\epsilon^2)$ & \multirow{2}{*}{$O(\log^3 k / \epsilon^2)$} \\
    & $G^\theta_{k}$ & $O(\frac{\log^3 \theta}{\epsilon^2})$ &  \\
    \hline
    \multirow{2}{*}{$\mathbf{R}_{k^d}$} & $G^1_{k^d}$ & $O(d \frac{\log^{3(d-1)}k}{\epsilon^2})$ & \multirow{2}{*}{$O(\log^{3d}k / \epsilon^2)$} \\
    & $G^\theta_{k^d}$ & $O(d^3 \frac{ \log^{3(d-1)}k \log^3{\theta}}{\epsilon^2})$ &  \\
    \hline
  \end{tabular}
  \caption{\label{fig:summary} Summary of data independent error bounds.}
}\end{figure}
\subsection{Workloads and Policy Graphs}\label{sec:workloads-graphs-definition}

We note that the data independent mechanisms we present for one dimensional range queries under $G^1_k$ and $G^\theta_k$ (Sections~\ref{sec:range-1Dline} and \ref{sec:range-1Dtheta}) are similar to the ones presented in the original Blowfish paper \cite{blowfish}. We present them here to illustrate our transformational equivalence and subgraph approximation results, and to help the reader understand our novel mechanisms for multi-dimensional range queries. 

Consider a multidimensional domain $\dom = [k]^d$, where $[k]$ denotes the set of integers between $1$ and $k$ (inclusive). The size of each dimension is $k$ and thus the domain size is $k^d$. A database in this domain can be represented as a (column) vector ${\bf x} \in \mathbb{R}^{k^d}$ with each entry ${\bf x}_i$ denoting the true count of a value $i \in \dom$. It is important to note that our results in this paper can be easily extended to the case when dimensions have different sizes.

A multidimensional range query can be represented as a $d$-dimensional hypercube with the bottom left corner $\mathbf{l}$ and the top right corner $\mathbf{r}$. In particular, when $d=1$, a range query ${\bf q}({\bf l}, {\bf r})$ is a linear counting query which count the values within $\bf l$ and $\bf r$ in the database $\bf x$, i.e., ${\bf q}({\bf l}, {\bf r}) {\bf x} = \sum_{{\bf l} \leq i \leq {\bf r}} {\bf x}_i$. Let $\mathbf{R}_k$ denote the workload of all such one dimensional range queries, \i.e., $\mathbf{R}_k = \{{\bf q}({\bf l}, {\bf r}) \mid {\bf l}, {\bf r} \in [k] \wedge {\bf l} \leq {\bf r}\}$. Similarly, let $\mathbf{R}_{k^d} = \{{\bf q}({\bf l}, {\bf r}) \mid {\bf l}, {\bf r} \in [k]^d \wedge {\bf l} \leq {\bf r}\}$ denote the workload of all $d$-dimensional range queries. Note that each range query can be represented as a $k^d$-dimensional row vector, and $\mathbf{R}_{k^d}$ can be represented as a $q \times k^d$ matrix, where $q= (k(k-1)/2)^d$ is the total number of range queries.


The class of policy graphs $G^{\theta}_{k^d} = (V, E)$ we consider here are called {\em distance-threshold} policy graphs. They are defined based on the $L_1$ distance in the domain $\dom = [k]^d$. Consider two vertices ${u} = (u_1,\dots,u_d)$ and ${v} = (v_1,\dots,v_d) \in V \subseteq [k]^d$, the $L_1$ distance between is $|{u}-{v}| = |u_1-v_1| + \dots + |u_d - v_d|$. There is an edge $({u},{v})$ in $E$ if and only if $|{u}-{v}| \leq \theta$.
Two special cases of $G^{\theta}_{k^d}$ and their semantics were discussed in \csec\ref{sec:blowfish} as line graph ($G^1_k$) and grid graph ($G^\theta_{k^2}$).

%
%

\subsection{Range Queries under $G^1_{k^d}$}\label{sec:range-1}

In this section we first describe the easy case of 1D range queries before considering multi-dimensional range queries. We will heavily utilize the structure of the transformed query workload in this section. The following lemma helps relate the queries in $\mathbf{W}$ to the queries in $\mathbf{W}_G$.

\begin{lemma}\label{lem:query-transformation}
  Let $\mathbf{q}$ be a linear counting query (that is, all entries in $\mathbf{q}$ are either $1$ or $0$), and $G=(V,E)$ be a policy graph. Let $\{v_1,\dots,v_\ell\} \subseteq V$ be the vertices corresponding to the nonzero entries of $\mathbf{q}$. Then, the nonzero columns of $\mathbf{q} \cdot \mathbf{P}_G = \mathbf{q}_G$ correspond to the set of edges $(u,v)$ with exactly one end point in $\{v_1,\dots,v_\ell\}$. That is, 
\begin{equation*}
  \left\{ (u,v) |\left\{ u,v \right\} \cap \left\{ v_1,\dots,v_\ell \right\}| = 1  \right\}.
\end{equation*}
\end{lemma}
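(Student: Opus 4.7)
The plan is to prove the lemma by a direct calculation, exploiting the explicit column structure of $\mathbf{P}_G$ given in Section~\ref{sec:construction-PG}. The entry of $\mathbf{q}_G = \mathbf{q}\cdot\mathbf{P}_G$ at position $e$ (the position indexed by edge $e\in E$) is exactly the inner product of $\mathbf{q}$ with the $e$-th column of $\mathbf{P}_G$, so it suffices to read off this inner product for each of the two types of column that arise in the construction.

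First, I would handle the case $e = (u,v)$ with $u,v\neq \bot$. By construction, the $e$-th column of $\mathbf{P}_G$ has $+1$ in row $u$, $-1$ in row $v$, and zero everywhere else, so the $e$-th entry of $\mathbf{q}_G$ equals $\mathbf{q}[u]-\mathbf{q}[v]$. Since $\mathbf{q}$ is $\{0,1\}$-valued, this is nonzero precisely when exactly one of $u,v$ belongs to the support $\{v_1,\ldots,v_\ell\}$ of $\mathbf{q}$, which is exactly the condition $|\{u,v\}\cap\{v_1,\ldots,v_\ell\}|=1$. Next I would handle the case $e=(u,\bot)$: the corresponding column has $+1$ in row $u$ and zeros elsewhere, so the entry of $\mathbf{q}_G$ at $e$ is $\mathbf{q}[u]$, which is nonzero iff $u\in\{v_1,\ldots,v_\ell\}$. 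Observing that $\bot\notin\{v_1,\ldots,v_\ell\}$ (the support indexes real domain values only, since the padded entry for $\bot$ in $\mathbf{x}$ corresponds to a zero column of $\mathbf{W}$), this condition is again equivalent to $|\{u,\bot\}\cap\{v_1,\ldots,v_\ell\}|=1$.

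Combining the two cases shows that the set of edges $e\in E$ for which the $e$-th entry of $\mathbf{q}_G$ is nonzero is exactly $\{(u,v)\in E : |\{u,v\}\cap\{v_1,\ldots,v_\ell\}|=1\}$, which is what the lemma asserts. There is no real obstacle here — the work is purely unpacking the definition of $\mathbf{P}_G$ — but I will be careful to state the argument uniformly for both edge types, and to note explicitly the $\{0,1\}$-valuedness hypothesis on $\mathbf{q}$, since for a general real query vector the entry $\mathbf{q}[u]-\mathbf{q}[v]$ could vanish even when both $u$ and $v$ are in the support (e.g.\ if $\mathbf{q}[u]=\mathbf{q}[v]\neq 0$), so the characterization would fail without the counting-query assumption.
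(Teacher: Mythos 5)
Your proposal is correct and follows essentially the same route as the paper's proof: unpacking the definition of $\mathbf{P}_G$ to see that the entry of $\mathbf{q}_G$ at an edge $(u,v)$ is the difference of the corresponding entries of $\mathbf{q}$, which for a $\{0,1\}$-valued query is nonzero exactly when one endpoint lies in the support. Your explicit treatment of the $(u,\bot)$ columns (using that the padded $\bot$-entry of $\mathbf{q}$ is zero) is a welcome bit of extra care that the paper leaves implicit, but it is the same argument.
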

\begin{proof}
  Each entry $c$ of $\mathbf{q}_G$ satisfies $c = u - v$ where $u,v$ are entries in $\mathbf{q}$ and $(u,v) \in E$. $c$ is nonzero exactly when $u \ne v$, or equivalently, when
  \begin{equation*}
    |\left\{ u,v \right\} \cap \left\{ v_1,\dots,v_k \right\}| = 1. \qed
  \end{equation*}
\end{proof}

\subsubsection{$\mathbf{R}_k$ under $G^1_k$}\label{sec:range-1Dline}
\begin{algorithm}[t]
  \caption{1D range queries.}
  \label{alg:range-1Dline}
{\small
  \begin{algorithmic}[1]
    \Require
      \Statex $\mathbf{W}$ is a workload of range queries, $\mathbf{x}$ is a database.
    \Function{1DRange}{$\mathbf{W},\mathbf{x}$}
      \Let{$\mathbf{x}_{G}$}{$\mathbf{P}_{G^1_k}^{-1}\mathbf{x}$  {\small {\em // prefix sums from $\mathbf{x}$}}}
      \Let{$\tilde{\mathbf{x}}_{G}$}{Differentially private estimate for $\mathbf{x}_{G}$}
      \Let{$\mathbf{W}_{G}$}{$\mathbf{WP}_{G^1_k}$ {\small {\em // differences between prefix sum pairs}} }
      \State \Return $\mathbf{W}_{G} \tilde{\mathbf{x}}_{G^1_k}$
    \EndFunction
  \end{algorithmic}
}
\end{algorithm}
We begin with a simple case: one-dimensional range queries under a one-dimensional line graph. We outline the application of Theorem~\ref{thm:blowfishEquality} in Algorithm~\ref{alg:range-1Dline}. Recall from Example~\ref{ex:fullrank} that the inverse of $\mathbf{P}_{G^1_k}$ is the  cumulative histogram workload. Therefore, the transformed database $\mathbf{x}_G = \mathbf{P}_{G^1_k}^{-1}\mathbf{x}$ corresponds to the set of prefix sums in $\mathbf{x}$. Algorithm~\ref{alg:range-1Dline} computes a differentially private estimate of $\mathbf{x}_G$ (say using the Laplace mechanism).

\begin{figure}[h]
  \centering
  \includegraphics[width=.2\textwidth]{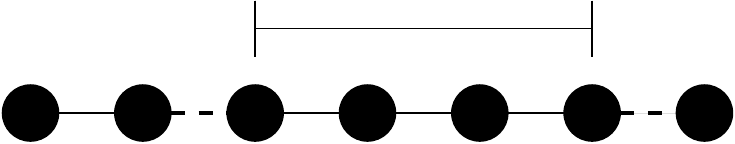}
  \caption{A one dimensional range query on vertices is transformed into a query on edges (represented by dashed lines).}
  \label{fig:1Dline}
\end{figure}

Next, we transform the queries. Note that for any range query $\mathbf{q} = [l,r]$, by Lemma~\ref{lem:query-transformation} $\mathbf{q}_G$ contains at most two nonzero elements (this is illustrated in Figure~\ref{fig:1Dline}) corresponding to the edges $(l-1, l)$ and $(r, r+1)$. The values of $\mathbf{x}_G$ at these edges are the prefix sums $\sum_{i=1}^{l-1} x_i$ and $\sum_{i=1}^r x_i$, and their difference is indeed the answer to the original range query. 
We can show the following bound on the data independent error of Algorithm~\ref{alg:range-1Dline}.
\begin{theorem}\label{thm:range-1Dline}
  Algorithm~\ref{alg:range-1Dline} with $\tilde{\mathbf{x}}_{G_k^1} = \mathbf{x}_{G_k^1} + Lap(1/\epsilon$) answers workload $\mathbf{R}_k$ with $\Theta(1/\epsilon^2)$ error per query under $(\epsilon, G^1_k)$-Blowfish privacy.
\end{theorem}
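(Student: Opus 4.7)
\medskip
\noindent\textbf{Proof proposal.} The plan is to use transformational equivalence (Theorem~\ref{thm:blowfishEquality}) to reduce everything to a statement about $\epsilon$-differential privacy on the transformed instance, and then bound the noise on each transformed query using the structural description of $\mathbf{W}_G$ from Lemma~\ref{lem:query-transformation}. The main steps are (i) privacy, (ii) upper bound on error, and (iii) matching lower bound.

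For privacy, first observe that $G^1_k$ is a path graph, hence a tree, so Theorem~\ref{thm:blowfishEquality} and Lemma~\ref{lem:neighborEquality} apply: the map $\mathbf{x}\mapsto \mathbf{x}_G=\mathbf{P}_{G^1_k}^{-1}\mathbf{x}$ sends Blowfish $G^1_k$-neighbors to unbounded-differential-privacy neighbors (vectors at $L_1$ distance exactly $1$). Consequently the identity workload on $\mathbf{x}_G$ has $L_1$ sensitivity~$1$, and $\tilde{\mathbf{x}}_G=\mathbf{x}_G+\mathrm{Lap}(1/\epsilon)^{k-1}$ is $\epsilon$-differentially private by Theorem~\ref{thm:lap-gaussian-error}. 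Post-processing by $\mathbf{W}_G$ preserves $\epsilon$-DP for answering $\mathbf{W}_G$ on $\mathbf{x}_G$, so Theorem~\ref{thm:blowfishEquality} promotes this to $(\epsilon,G^1_k)$-Blowfish privacy of Algorithm~\ref{alg:range-1Dline} on $(\mathbf{W},\mathbf{x})$.

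For the upper bound, fix any range query $\mathbf{q}=\mathbf{q}([l,r])\in\mathbf{R}_k$. By Lemma~\ref{lem:query-transformation}, the nonzero coordinates of $\mathbf{q}_G=\mathbf{q}\mathbf{P}_{G^1_k}$ correspond exactly to those edges of the path with one endpoint in $\{l,\ldots,r\}$; for an interior range this is the two edges $(l-1,l)$ and $(r,r+1)$, with entries $\pm 1$ (boundary ranges touching the $\bot$-endpoint contribute at most one edge). Thus the error of the algorithm on $\mathbf{q}$ equals the variance of a sum of at most two independent $\mathrm{Lap}(1/\epsilon)$ variables, which is at most $4/\epsilon^2=O(1/\epsilon^2)$ per query.

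For the lower bound, it suffices to exhibit a single range query whose answer requires variance $\Omega(1/\epsilon^2)$ under $(\epsilon,G^1_k)$-Blowfish privacy. Take the singleton query $\mathbf{q}([l,l])$: the databases $D=A\cup\{l\}$ and $D'=A\cup\{l-1\}$ are Blowfish neighbors (since $(l-1,l)\in E(G^1_k)$), and their true answers differ by~$1$. A standard packing/two-point argument for $\epsilon$-DP (equivalently, apply the transformational-equivalence direction that turns a Blowfish mechanism into an $\epsilon$-DP mechanism answering a query of sensitivity $1$ and invoke the Laplace-lower-bound, e.g.\ via Theorem~\ref{thm:lap-gaussian-error} and the known optimality of Laplace noise for a single sensitivity-$1$ query) then gives per-query error $\Omega(1/\epsilon^2)$. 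Combining with the upper bound yields $\Theta(1/\epsilon^2)$. The only mildly delicate step is the lower bound, since it requires invoking (rather than re-deriving) the classical fact that answering a single sensitivity-$1$ real-valued query under $\epsilon$-DP costs variance $\Omega(1/\epsilon^2)$; everything else is a direct application of the machinery already built in Sections~\ref{sec:transformationalEquivalence} and \ref{sec:construction-PG}.
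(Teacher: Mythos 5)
Your proof is correct and follows essentially the paper's own argument: privacy comes from the tree-based transformational equivalence (Theorem~\ref{thm:blowfishEquality}, Lemma~\ref{lem:neighborEquality}) with the transformed database having sensitivity $1$ under unbounded differential privacy, and the per-query error bound comes from Lemma~\ref{lem:query-transformation}, since each transformed range query is a $\pm 1$ combination of at most two noisy prefix sums and hence has variance at most $4/\epsilon^2$. Your third step is not needed for this theorem: the $\Theta$ refers to the error of this particular algorithm, whose noise on each query is exactly a sum of one or two independent $\mathrm{Lap}(1/\epsilon)$ variables and is therefore trivially $\Omega(1/\epsilon^2)$, while the mechanism-independent lower bound you sketch is the content of the separate Lemma~\ref{lem:range-1Dline-lower-bound} (and as stated it leans on ``optimality of Laplace noise,'' which Theorem~\ref{thm:lap-gaussian-error} does not provide---one would need an actual packing/group-privacy argument there).
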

\begin{proof}
  Every $\mathbf{q}_{G^1_k}(l, r) \in \mathbf{R}_{G^1_k}$ can be reconstructed by summing at most two queries in $\tilde{\mathbf{x}_{G_k^1}}$.
  Each entry in $\tilde{\mathbf{x}_{G_k^1}}$ has $\Theta(1/\epsilon^2)$ error from the Laplace mechanism.
  So each $\mathbf{q}_{G^1_k}(l, r)$ incurs only $\Theta(1/\epsilon^2)$ error.
\end{proof}
In fact we show that Algorithm~\ref{alg:range-1Dline} is the optimal data independent algorithm for answering range queries in 1D under $G^1_k$. We omit the proof due to space constraints.
\begin{lemma}
  Any $(\epsilon, G^1_k)$-Blowfish private mechanism answers $\mathbf{R}_k$ with $\Omega(1/\epsilon^2)$ error per query.
  \label{lem:range-1Dline-lower-bound}
\end{lemma}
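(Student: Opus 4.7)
The plan is to reduce the Blowfish lower bound to a well-known differential privacy lower bound via transformational equivalence. Since $G^1_k$ is a tree (a path on $k$ vertices), Theorem~\ref{thm:blowfishEquality} applies: any $(\epsilon, G^1_k)$-Blowfish private mechanism $\mathcal{M}$ for answering $\mathbf{R}_k$ on $\mathbf{x}$ has the same error as an $\epsilon$-differentially private mechanism for answering the transformed workload $\mathbf{R}_k \mathbf{P}_{G^1_k}$ on the transformed database $\mathbf{P}_{G^1_k}^{-1}\mathbf{x}$ (modulo the Case~II column-removal from Section~\ref{sec:construction-PG}, which only adds a known constant and does not affect error). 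So it suffices to lower bound the error of any $\epsilon$-DP mechanism for $\mathbf{R}_k \mathbf{P}_{G^1_k}$.

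Next I would identify a sub-workload of $\mathbf{R}_k \mathbf{P}_{G^1_k}$ whose DP lower bound is easy to establish. Consider the $k-1$ prefix range queries $\mathbf{q}(1,j)$ for $j=1,\dots,k-1$. By Lemma~\ref{lem:query-transformation}, each $\mathbf{q}(1,j)\mathbf{P}_{G^1_k}$ has exactly one non-zero entry, corresponding to the unique edge $(j,j+1)$ leaving the set $\{1,\dots,j\}$ in the line graph. Thus these $k-1$ transformed queries together form the identity workload $\mathbf{I}_{k-1}$ on the edge variables of $G^1_k$.

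Finally I would invoke the classical single-query lower bound for $\epsilon$-differential privacy: any linear counting query of $L_1$ sensitivity $1$ requires expected squared error $\Omega(1/\epsilon^2)$ under $\epsilon$-DP. This follows from the standard packing argument, using group privacy over a chain of $\Theta(1/\epsilon)$ pairwise-neighboring databases together with Chebyshev's inequality, as in \cite{tcc:DworkMNS06}. Each transformed prefix query is a single-coordinate query on the edge variables and hence has $L_1$ sensitivity $1$ under unbounded DP, so each incurs $\Omega(1/\epsilon^2)$ error under the differentially private mechanism; equivalently, each of the $k-1$ prefix queries $\mathbf{q}(1,j)$ incurs $\Omega(1/\epsilon^2)$ error under the original $(\epsilon, G^1_k)$-Blowfish mechanism, giving the claimed per-query lower bound.

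The main subtlety is that one must justify applying the single-query lower bound to each prefix query \emph{individually} rather than only to the full workload: this is fine because the marginal distribution of any $\epsilon$-DP mechanism on a single coordinate of its output is itself an $\epsilon$-DP response to that query, so the single-query packing bound applies independently to each of the $k-1$ prefix queries. A direct (non-reduction) proof is also possible by observing that each range query has Blowfish sensitivity $1$ under $G^1_k$ (moving a single record across an adjacent pair $(j, j+1)$ changes the answer by at most $1$) and applying the packing argument directly in the Blowfish setting; the transformational-equivalence formulation simply makes the reduction to the standard DP result transparent.
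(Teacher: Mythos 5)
Your reduction via Theorem~\ref{thm:blowfishEquality} and the observation that the transformed prefix queries form the identity workload $\mathbf{I}_{k-1}$ on the edge coordinates (cf.\ Example~\ref{ex:fullrank}) are sound and very much in the spirit of the paper's argument. The gap is in the final step. In this paper ``error per query'' is the workload error of Definition~\ref{def:error} divided by the number of queries; since $|\mathbf{R}_k|=\Theta(k^2)$, the lemma asserts $\max_{\mathbf{x}}\sum_{\mathbf{q}\in\mathbf{R}_k}\mathbb{E}\left[(\mathbf{q}\mathbf{x}-\mathcal{M}(\mathbf{q},\mathbf{x}))^2\right]=\Omega(k^2/\epsilon^2)$, which is exactly what is needed to match the $\Theta(k^2/\epsilon^2)$ total error of Algorithm~\ref{alg:range-1Dline}. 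Your argument establishes something weaker: for each query separately, its own worst-case-over-databases error is $\Omega(1/\epsilon^2)$. But the maximizing database can differ from query to query, and writing $E_i(\mathbf{x})$ for the expected squared error of query $i$ on $\mathbf{x}$, one only has $\max_{\mathbf{x}}\sum_i E_i(\mathbf{x})\le\sum_i\max_{\mathbf{x}}E_i(\mathbf{x})$ --- the inequality you would need runs the other way. So per-query marginal packing bounds do not exclude a (data-dependent) mechanism whose total error is $o(k^2/\epsilon^2)$ on every single database even though each query is inaccurate on some database. Moreover, restricting attention to the $k-1$ prefix queries is a factor of $k$ short in query count: even granting simultaneity it yields only $\Omega(k/\epsilon^2)$ total, i.e.\ $\Omega(1/(k\epsilon^2))$ per query of $\mathbf{R}_k$; your closing remark that every range query has Blowfish sensitivity $1$ fixes the counting issue but not the common-database issue.

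The paper's (omitted) proof closes precisely this gap. It partitions $\mathbf{R}_k$ into the nested cumulative-histogram sub-workloads: the queries $\mathbf{q}(i,j)$, $j\ge i$, form $\mathbf{C}_{k-i+1}$ on the subdomain $\{i,\dots,k\}$, on which the policy is again a line graph. Each sub-workload is transformed (as in Example~\ref{ex:fullrank}) into an identity workload under $\epsilon$-differential privacy, and then the \emph{workload-level} lower bound $\Omega(m/\epsilon^2)$ for $\mathbf{I}_m$ --- a simultaneous bound holding at a single worst-case database, imported via Corollary~\ref{cor:blowfish-lower-bound} --- is summed over $i\in[k]$ to give $\Omega(k^2/\epsilon^2)$ in total. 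To repair your proof you need such a simultaneous lower bound (e.g.\ a product-prior/Assouad-style or multi-dimensional packing argument for the histogram workload) applied to transformed sub-workloads that together cover $\Theta(k^2)$ of the queries in $\mathbf{R}_k$; invoking the single-query bound coordinate by coordinate does not suffice.
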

\eat{ 
\begin{proof}
  \todo{This proof could be omitted.}
  The upper bound follows from Theorem~\ref{thm:range-1Dline}.
  We show the lower bound here.
  Recall from Example~\ref{ex:workloads} that the workload $\mathbf{C}_k$ on domain $[1,k]$ is defined as the set of range queries $\{\mathbf{q}(1,i) \mid \forall 1 \leq i \leq k\}$. We show a lower bound of $\Omega(k^2/\epsilon^2)$ for $\mathbf{R}_k$ under Blowfish policy $G^1_k$ in 3 steps,
  \squishlist
  \item Partition the set of range queries $\mathbf{R}_k$ into a set of cumulative histogram queries $\mathbf{C}_k \cup \mathbf{C}_{k-1} \cup \ldots \cup \mathbf{C}_1$, each operating on a subset of the domain.
  \item Any strategy for answering $\mathbf{R}_k$ under $G^1_k$ incurs no less error than the sum of the errors incurred for the optimal Blowfish strategy for answering each $\mathbf{C}_i$, $i \in [k]$ under $G^1_i$ on the appropriate domain.
  \item  $\mathrm{ERROR}_\mathcal{M}^{G^1_k}(\mathbf{C}_k)  \ = \ \Omega(k/\epsilon^2) \quad \forall \mathcal{M}$. Thus, 
  \begin{align*}
    \mathrm{ERROR}_\mathcal{M}^{G^1_k}(\mathbf{R}_k) &\geq \sum_{i=1}^k\mathrm{ERROR}_\mathcal{M}^{G^1_i}(\mathbf{C}_i) \\
    &= \Omega(k^2/\epsilon^2) \quad \forall \mathcal{M}
  \end{align*}
  \squishend
  If the domain $\dom = [k]$, then $\mathbf{R}_k$ is the set of queries $\{\mathbf{q}(i,j) \mid 1 \leq i \leq j \leq k\}$. This can be partitioned into disjoint sets of queries $\mathbf{S}_i = \{\mathbf{q}(i, j) \mid \forall j \mbox{ s.t. } i \leq j \leq k\}$. $\mathbf{S}_i$ is identical to the $\mathbf{C}_{k-i+1}$ workload on the domain $\{i, i+1, \ldots, k\}$.

  Next, note that $G^1_k$ restricted to the subdomain $\{i, i+1, \ldots, k\}$ correspond also to line graph $G^1_{k-i+1}$. Thus, it is enough to lower bound the sum of the minimum errors for answering each $\mathbf{C}_i$ under $G^1_i$, for $i \in [k]$.  

  By Corollary~\ref{cor:blowfish-lower-bound}, the lower bound on the error for $\mathbf{C}_k$ with respect to $G$ depends on the error under differential privacy of $\mathbf{C}_k \cdot P_{G^1_k}$, which is equal to the identity matrix $\mathbf{I}_{k-1}$ (see Example~\ref{ex:fullrank} on page \pageref{ex:fullrank}). Thus, we have
  \[\mathrm{ERROR}_\mathcal{M}^{G^1_k}(\mathbf{C}_k)  \ = \ \Omega(k/\epsilon^2) \quad \forall \mathcal{M}\]
  which competes the lower bound proof.
\end{proof} } 

The best known data independent strategy (with minimum error) for answering $\mathbf{R}_k$ under $\epsilon$-differential privacy is the Privelet strategy \cite{icde:XiaoWG10} with a much larger asymptotic error of $O(\log^3 k / \epsilon^2)$ per query.

\subsubsection{$\mathbf{R}_{k^d}$ under $G^1_{k^d}$}\label{sec:range-2Dline}

\begin{figure}[t]
  \captionsetup[subfigure]{width=\textwidth}
  \centering
  \begin{subfigure}[t]{.22\textwidth}
    \centering
    \includegraphics[width=.6\textwidth]{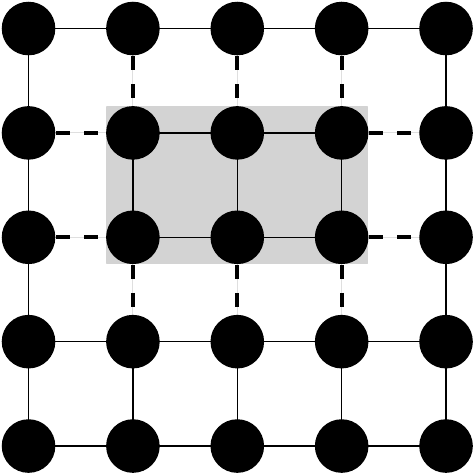}
    \caption{$G^1_{5^2}$ with a two dimensional range query, represented by a grey box. The edges in the transformed query (satisfying Lemma~\ref{lem:query-transformation}), are shown with dashed lines. These edges form four ranges.}
    \label{fig:2Dline-b}
  \end{subfigure}
  \quad
  \begin{subfigure}[t]{.22\textwidth}
    \centering
    \includegraphics[width=.6\textwidth]{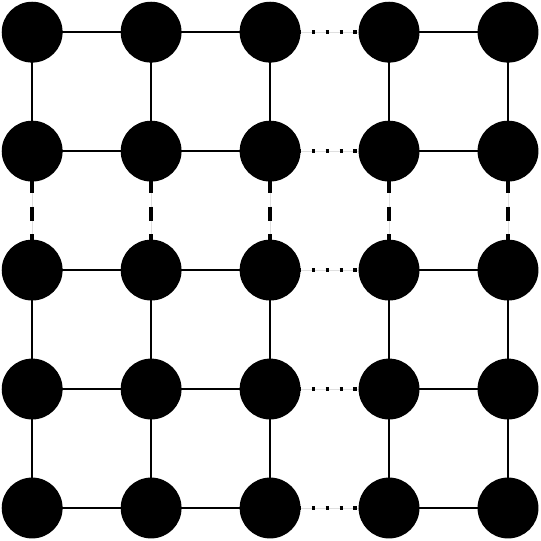}
    \caption{For each row of vertical edges, we answer all ranges over the row. One such row is shown with dashed lines. We must do the same for columns, and one such column is shown with dotted lines.}
    \label{fig:2Dline-c}
  \end{subfigure}
  \caption{Answering $\mathbf{R}_{k^2}$ under $G^1_{k^2}$.}
  \label{fig:2Dline}
\end{figure}

$G^1_{k^d}$ is a grid with $k^d$ vertices and $d\cdot (k-1)\cdot k^{d-1}$ edges. Let us consider the problem in two dimensions first. (see Figure~\ref{fig:2Dline-b}).  The transformed domain (after using Theorem~\ref{thm:blowfishEquality}) would be the set of edges in  the graph. Consider a 2D range query $\mathbf{q}([x,y], [x',y'])$ (grey box in figure). The transformed query $\mathbf{q}_G$ has non-zero entries corresponding to edges on the boundary of the original range query (dashed lines in the figure). Note that these edges can be divided into 4 contiguous ranges of edges; i.e., $\mathbf{q}_G$ is the sum of 4 disjoint range queries in the transformed domain.

\eat{
\begin{algorithm}[t]
  \caption{Multi-D range queries.}
  \label{alg:range-multi-line}
  \begin{algorithmic}[1]
    \Require
      \Statex $\mathbf{W}$ is a workload of range queries, $\mathbf{x}$ is a database.
    \Function{MultiDRange}{$\mathbf{W},\mathbf{x}$}
      \Let{$\mathbf{x}_{G}$}{$\mathbf{P}_{G^1_{k^d}}^{-1}\mathbf{x}$  {\small {\em // dataset on edges of $G^1_{k^d}$}}}
      \Let{$\mathbf{W}_{G}$}{$\mathbf{WP}_{G^1_k}$  {\small {\em // transformed queries on edges of $G^1_{k^d}$}}}
      \For{$i \in [1,d]$ and $j \in [1,k-1]$}
        \Let{$\mathbf{A}^{i,j}_G$}{ $\{(\mathbf{l},\mathbf{m}) \ | \ \mathbf{l}[i] = j$, $\mathbf{m}[i] = j+1\}$}
	 \State Answer all range queries in $\mathbf{A}^{i,j}_{G}$ using $\epsilon$-differential privacy
      \EndFor
      \State Each queries in $\mathbf{W}_{G}$ is a sum of $2d$ answers from $\{\mathbf{A}^{i,j}_{G}\}_{i,j}$
    \EndFunction
  \end{algorithmic}
\end{algorithm}}
Thus, a strategy for answering the transformed query workload in two dimensions would be to answer all one dimensional range queries along the rows (dashed vertical edges in Fig~\ref{fig:2Dline-c}) and columns (dotted horizontal edges in Fig~\ref{fig:2Dline-c}) under differential privacy. There are $2(k-1)$ such sets of range queries. Note that these sets of range queries are disjoint, and can each be answered using $\epsilon$-differential privacy (under parallel composition).  Any query $\mathbf{q}_G$ can be computed by adding up the answers to 2 row range queries and 2 column range queries.

 In $d$ dimensions, $\mathbf{q}_G$ will  be the  sum of $2d$ $(d-1)$-dimensional range queries on the transformed dataset $\mathbf{x}_G$, each corresponding to a face of the $d$-dimensional range query. Our strategy would be to answer  $d(k-1)$  sets of $(d-1)$-dimensional range queries under $\epsilon$-differential privacy. 


We bound the data independent error of our algorithm.
\begin{theorem}\label{thm:range-dd}
  Workload $\mathbf{R}_{k^d}$ can be answered with 
  \begin{equation*}
    O({d \log^{3(d-1)}k}/{\epsilon^2})
  \end{equation*}
  error per query under $(\epsilon, G^1_{k^d})$-Blowfish privacy.
\end{theorem}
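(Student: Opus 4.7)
The plan is to apply the transformational equivalence result (Theorem~\ref{thm:mm-trans-equiv}) to reduce the Blowfish problem to a differential privacy problem on a transformed domain, and then exploit the geometric structure of the transformed workload together with parallel composition.

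First I would set up the transformation. By Theorem~\ref{thm:mm-trans-equiv}, answering $\mathbf{R}_{k^d}$ under $(\epsilon, G^1_{k^d})$-Blowfish privacy is equivalent (for matrix-mechanism algorithms) to answering $\mathbf{W}_G = \mathbf{R}_{k^d}\mathbf{P}_{G^1_{k^d}}$ on $\mathbf{x}_G = \mathbf{P}_{G^1_{k^d}}^{-1}\mathbf{x}$ under $\epsilon$-differential privacy, where the transformed domain corresponds to the edges of $G^1_{k^d}$. By Lemma~\ref{lem:query-transformation}, for a $d$-dimensional range query $\mathbf{q}(\mathbf{l},\mathbf{r})$ the transformed query $\mathbf{q}_G$ has nonzero entries exactly on the edges of $G^1_{k^d}$ with one endpoint inside $[\mathbf{l},\mathbf{r}]$ and one endpoint outside, i.e., on the $2d$ boundary faces of the box. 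Each such face lies in a single axis-aligned hyperplane between consecutive coordinate values, so it naturally forms a $(d-1)$-dimensional range of edges.

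Next I would partition the edges. For each dimension $i\in[d]$ and each position $j\in[k-1]$, let $\mathbf{A}^{i,j}$ be the set of edges of $G^1_{k^d}$ in direction $i$ between hyperplanes $\{x_i=j\}$ and $\{x_i=j+1\}$. The sets $\{\mathbf{A}^{i,j}\}_{i,j}$ are pairwise disjoint and together cover all edges, giving $d(k-1)$ groups; each $\mathbf{A}^{i,j}$ is isomorphic to a $(d-1)$-dimensional grid of size $k^{d-1}$. Moreover, each transformed query $\mathbf{q}_G$ decomposes as a sum of exactly $2d$ pieces, one lying in some $\mathbf{A}^{i,j}$, and each piece is a $(d-1)$-dimensional range query on the corresponding sub-grid.

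The strategy, then, is to run an optimal differentially private mechanism for $(d-1)$-dimensional range queries --- e.g., Privelet~\cite{icde:XiaoWG10} --- independently on each $\mathbf{A}^{i,j}$. Since the groups are over disjoint portions of $\mathbf{x}_G$, parallel composition lets us spend the full budget $\epsilon$ on each; Privelet then answers every $(d-1)$-dimensional range query within a group with variance $O(\log^{3(d-1)}k/\epsilon^2)$. Finally I would bound the per-query error of the overall scheme. Each original query $\mathbf{q}\in\mathbf{R}_{k^d}$ is answered by summing $2d$ independent noisy answers, so by linearity of variance the total error is
\begin{equation*}
2d\cdot O\!\left(\log^{3(d-1)}k/\epsilon^2\right) \;=\; O\!\left(d\log^{3(d-1)}k/\epsilon^2\right),
\end{equation*}
matching the claim. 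The main subtlety is verifying the clean $2d$-face decomposition of $\mathbf{q}_G$ and checking that each face is indeed a single $(d-1)$-dimensional range query on $\mathbf{A}^{i,j}$ (so that Privelet's bound applies directly); this is where the precise structure of $\mathbf{P}_{G^1_{k^d}}$ from Section~\ref{sec:construction-PG} combined with Lemma~\ref{lem:query-transformation} does the work.
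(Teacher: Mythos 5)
Your proposal is correct and takes essentially the same route as the paper's proof: transform to the edge domain via Lemma~\ref{lem:query-transformation}, partition the edges into $d(k-1)$ disjoint $(d-1)$-dimensional slabs answered with Privelet under parallel composition, and reconstruct each query from its $2d$ boundary faces, giving $2d\cdot O(\log^{3(d-1)}k/\epsilon^2) = O(d\log^{3(d-1)}k/\epsilon^2)$ error per query. Your only deviation is invoking Theorem~\ref{thm:mm-trans-equiv} rather than Theorem~\ref{thm:blowfishEquality} for the equivalence step, which is a reasonable (arguably more careful) choice since $G^1_{k^d}$ is not a tree for $d \ge 2$ and the strategy used is data independent.
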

\begin{proof}
For each dimension, we must answer $k-1$ sets of $(d-1)$-dimensional range queries, for a total of $(k-1) \cdot d$ sets of $(d-1)$-dimensional ranges. As we have shown, all of these sets are disjoint and can be answered in parallel. Therefore, the total error is just the error of answering one of these sets of ranges. We can answer these ranges using Privelet \cite{icde:XiaoWG10} with $O(\frac{\log^{3(d-1)}k}{\epsilon^2})$ error. To answer our query, we must sum $2d$ of these ranges for a total error of
  $O(d\frac{ \log^{3(d-1)}k}{\epsilon^2})$.

By Theorem~\ref{thm:blowfishEquality}, we can answer $\mathbf{R}_{k^d}$ under $G^1_{k^d}$ with the same error per query.
\end{proof}

We get a $\Omega(\log^3{k})$ factor better error than differential privacy using Privelet \cite{icde:XiaoWG10} under a fixed dimensionality $d$.

\subsection{Range Queries under $G^\theta_{k^d}$}\label{sec:range-theta}
We next consider answering range queries under a more complex graph. Unlike in the case of $G^1_{k^d}$, the workloads resulting from the use of Theorem~\ref{thm:blowfishEquality} to the $G^\theta_{k^d}$ policy are not well studied under differential privacy. Hence, we will introduce a new tool, called {\em subgraph approximation}, and then use it to design Blowfish private mechanisms.

\subsubsection{$\mathbf{R}_k$ under $G^\theta_k$}\label{sec:range-1Dtheta}
We next present an algorithm for answering one dimensional range queries under, $G^\theta_k$. These results generalize the results from \csec\ref{sec:range-1Dline}, and will leverage subgraph approximation (Lemma~\ref{lem:subgraph}).

\begin{figure*}[ht!]
  \captionsetup[subfigure]{width=.95\textwidth}
  \centering
  \begin{subfigure}[t]{.49\textwidth}
    \centering
    \includegraphics[width=.7\textwidth]{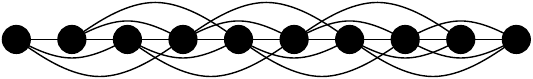}
    \caption{$G^3_{10}$, each vertex is connected to other vertices within distance 3 along the line.}
    \label{fig:1Dtheta-a}
  \end{subfigure}
  \begin{subfigure}[t]{.49\textwidth}
    \centering
    \includegraphics[width=.7\textwidth]{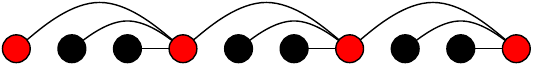}
    \caption{$H^3_{10}$, each vertex is connected to the nearest red vertex to its right.} 
    \label{fig:1Dtheta-b}
  \end{subfigure}
  \begin{subfigure}[t]{.49\textwidth}
    \centering
    \includegraphics[width=.7\textwidth]{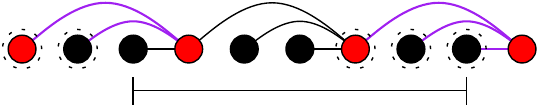}
  \caption{A range query shown on ${H}^3_{10}$. The transformed query consists of the edges highlighted in purple and their left end points always form two contiguous ranges.}
    \label{fig:1Dtheta-c}
  \end{subfigure}
  \begin{subfigure}[t]{.49\textwidth}
    \centering
    \includegraphics[width=.7\textwidth]{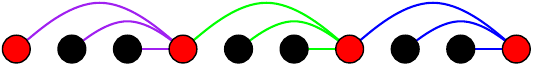}
    \caption{Our strategy will answer all range queries on 3 sets of edges, each set shown in a different color. These sets of edges are disjoint.}
    \label{fig:1Dtheta-d}
  \end{subfigure}
  \caption{A summary of a strategy for answering $\mathbf{R}^k$ under $G^\theta_k$ for $\theta=3,k=10$. Our results hold in general.}
  \label{fig:1Dtheta}
\end{figure*}

We first describe how to obtain a subgraph $H^\theta_{k}$ from $G^\theta_{k}$. We designate $k/\theta$ vertices at intervals of $\theta$; call these ``red'' vertices. In $H^\theta_{k}$, consecutive red vertices are connected to form a path (like the line graph). All non-red vertices are only connected to the next red vertex (to its right); i.e., vertices $\{1, 2, \ldots, \theta-1\}$ are connected only to vertex $\theta$, vertices $\{\theta+1, \theta+2, \ldots, 2\theta-1\}$ are connected only to vertex $2\theta$, and so on. We order the edges in $H^\theta_k$ by their left endpoints.

Like $G^1_k$, $H^\theta_k$ is also a tree with $k-1$ edges.  Figure~\ref{fig:1Dtheta-a} shows $G^3_{10}$ and Figure~\ref{fig:1Dtheta-b} shows $H^3_{10}$. Note that for all $\theta$, a pair of adjacent vertices in $G^\theta_k$ are connected by a path of length $\ell \leq 3$ in $H^\theta_k$. So we can use subgraph approximation.

\eat{
\begin{algorithm}
  \caption{1D range queries under theta graph.}
  \label{alg:range-1Dtheta}
  \begin{algorithmic}[1]
    \Require
      \Statex $\mathbf{W}$ is a workload of range queries, $\mathbf{x}$ is a database.
      \Statex $\mathbf{R}_{\left[ ik,ik+\theta \right]}$ is the set of range queries in $\mathbf{R}_{k-1}$ with left and right ends of the range falling in $\left[ i\theta,i\theta+\theta \right]$.
    \Function{1DRangeTheta}{$\mathbf{W},\mathbf{x}$}
      \Let{$\mathbf{W}_{H^1_k}$}{$\mathbf{WP}_{H^1_k}$}
      \Let{$\mathbf{x}_{H^1_k}$}{$\mathbf{P}_{H^1_k}^{-1}\mathbf{x}$}
      \For{$i \gets 0 \text{ to } \theta/k -1$}
      \State Add all rows of $\mathbf{R}_{\left[ i\theta,i\theta+\theta \right]}$ to $\mathbf{U}$.
      \EndFor
      \Let{$\tilde{\mathbf{U}\mathbf{x}_{H^1_k}}$}{Noisy estimate for $\mathbf{U}\mathbf{x}_{H^1_k}$}
      \For{$\mathbf{q} \in \mathbf{W}_{H^1_k}$}
      \State Compute $\tilde{\mathbf{qx}}$ with the sum of 2 rows of $\tilde{\mathbf{U}\mathbf{x}_{H^1_k}}$.
        \State Add $\tilde{\mathbf{qx}}$ as a row in $\tilde{\mathbf{W}_{H^1_k}\mathbf{x}_{H^1_k}}$
      \EndFor
      \State \Return $\tilde{\mathbf{W}_{H^1_k} \mathbf{x}_{H^1_k}}$
    \EndFunction
  \end{algorithmic}
\end{algorithm}} 

  Consider some query in $\mathbf{R}_k$, say $\mathbf{q}(l,r)$. The corresponding query $\mathbf{q}_{H^\theta_k}$ in $\mathbf{R}_{H^\theta_{k}}$ consists of all edges with one of the end points within the range $(l,r)$ (Lemma~\ref{lem:query-transformation}). If $l \leq x\theta \leq r \leq y\theta$, where $x\theta$ and $y\theta$ are the smallest red nodes greater than $l$ and $r$, then these edges correspond to $\{(i, x\theta) \mid (x-1)\theta \leq i < l\}$ and $\{(j, y\theta) \mid (y-1)\theta \leq j < r\}$ (edges connected to dotted nodes in Figure~\ref{fig:1Dtheta-c}). That is, the transformed query $\mathbf{q}_{H^\theta_k}(l,r)$ corresponds to the difference of two range queries (according to the ordering of edges in $H^\theta_k$). Moreover, each range query is of length at most $\theta$ -- within $[(x-1)\theta, x\theta]$ for some $x$.
   
   Thus, our strategy for answering all the queries in $R_{H^\theta_k} = R_k\cdot P_{H^\theta_k}$ is as follows.  Partition the transformed domain (or edges in $H^\theta_k$) into disjoint groups of $\theta$ -- all edges connecting a red node to nodes on its left form a group (see Figure~\ref{fig:1Dtheta-d}). Next, answer all range queries of length at most $\theta$ within each of these groups under $\epsilon$-differential privacy (say using Privelet).  Finally, reconstructing queries $\mathbf{q}_{H^\theta_k}(l,r) \in R_{H^\theta_k}$ using the computed range queries. Since these sets of range queries form disjoint subsets of the domain, they all can use the same $\epsilon$ privacy budget (by parallel composition). 

\begin{theorem}
There exists a mechanism that answers workload $\mathbf{R}_k$ with
  \begin{equation*}
    O\left({\log^3 \theta}/{\epsilon^2}\right)
  \end{equation*}
  error per query under $(\epsilon,G^\theta_{k})$-Blowfish privacy.
  \label{thm:onedimrange-theta}
\end{theorem}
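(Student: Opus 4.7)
The plan is to chain the subgraph approximation lemma with transformational equivalence on the tree $H^\theta_k$, and then apply Privelet on disjoint pieces. First, I would verify that $H^\theta_k$ is a $3$-approximate spanning tree of $G^\theta_k$: any edge $(u,v) \in G^\theta_k$ (with $|u-v| \leq \theta$) can be realized in $H^\theta_k$ by walking from $u$ to its nearest red vertex on the right, possibly taking one red-to-red step, and then down to $v$, giving a path of length at most $3$. Hence, by Corollary~\ref{cor:approx-equivalence} (combined with the tree-case Theorem~\ref{thm:blowfishEquality}), it suffices to exhibit an $(\epsilon/3)$-differentially private mechanism for the transformed workload $\mathbf{R}_k \mathbf{P}_{H^\theta_k}$ on $\mathbf{P}_{H^\theta_k}^{-1}\mathbf{x}$ with per-query error $O(\log^3\theta/\epsilon^2)$; the $1/3$ factor affects the error only by a constant.

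Next, I would describe the transformed queries. By Lemma~\ref{lem:query-transformation}, $\mathbf{q}_{H^\theta_k}(l,r)$ is supported on the edges of $H^\theta_k$ with exactly one endpoint inside $[l,r]$. Using the structure of $H^\theta_k$ (non-red vertices connect only to the next red vertex on their right; red vertices form a path), these boundary edges lie in at most two ``bundles'', where a bundle consists of the (at most) $\theta$ edges pointing into a single red vertex $x\theta$ from the left (i.e., the red-to-red edge together with the non-red-to-red edges from $(x-1)\theta,\ldots,x\theta-1$). Within a bundle, the boundary edges form a contiguous range under the natural left-endpoint ordering.

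The mechanism then partitions the $k-1$ edges of $H^\theta_k$ into the roughly $k/\theta$ disjoint bundles above. On each bundle (a domain of size at most $\theta$), I would run the Privelet strategy \cite{icde:XiaoWG10} to answer the workload of all range queries on that bundle with $(\epsilon/3)$-differential privacy, incurring $O(\log^3\theta/\epsilon^2)$ error per range. Since the bundles are pairwise disjoint, parallel composition lets every bundle share the same budget $\epsilon/3$. Any transformed query $\mathbf{q}_{H^\theta_k}(l,r)$ is then the sum or difference of at most two within-bundle ranges, so the per-query error remains $O(\log^3\theta / \epsilon^2)$. By the equivalence, this matches the error on the original workload $\mathbf{R}_k$ under $(\epsilon, G^\theta_k)$-Blowfish privacy.

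The main difficulty I anticipate is bookkeeping rather than a conceptual leap: one must carefully track boundary cases (when $l$ or $r$ coincides with a red vertex, or when both fall in the same bundle), verify that the bundles really exhaust all $k-1$ tree edges, and confirm that for every $(l,r)$ the boundary edges split into at most two contiguous within-bundle subranges reconstructible from two Privelet answers.
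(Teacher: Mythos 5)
Your proposal matches the paper's own proof essentially step for step: you use the same tree $H^\theta_k$ with red vertices at intervals of $\theta$, the same stretch-$3$ subgraph approximation combined with the tree-case equivalence (Corollary~\ref{cor:approx-equivalence}), the same observation via Lemma~\ref{lem:query-transformation} that each transformed query reduces to at most two contiguous edge-ranges of length at most $\theta$, and the same Privelet-on-disjoint-groups strategy with parallel composition and a budget of $\epsilon/3$. The argument is correct and no gap remains beyond the boundary-case bookkeeping you already flag.
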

\begin{proof}
Our strategy partitions the transformed domain (or edges in $H^\theta_k$) into groups of size $\theta$, and answers range queries over them. Using privelet to compute range queries within each partition results in $O(\log^3 \theta/\epsilon^2)$ error per query. Queries in 
$R_{H^\theta_k}$ are differences of at most 2 range queries, and thus also incur at most $O(\log^3 \theta/\epsilon^2)$ error. By Theorem~\ref{thm:blowfishEquality}, the same mechanism answers $\mathbf{R}_k$ and satisfies $(\epsilon, H^\theta_k)$-Blowfish privacy with $O(\log^3 \theta/\epsilon^2)$ error per query.

For every edge $(u,v) \in G^\theta_k$, $u$ and $v$ are connected by a path of length at most $3$, this strategy also ensure $(3\epsilon, G^\theta_k)$-Blowfish privacy. Thus, using the above strategy with privacy budget $\epsilon/3$ gives us the required result.
\end{proof}

\subsubsection{$\mathbf{R}_{k^d}$ under $G^\theta_{k^d}$}

\begin{figure*}[ht!]
  \captionsetup[subfigure]{width=\textwidth}
  \centering
  \begin{subfigure}[t]{.2\textwidth}
    \centering
    \includegraphics[width=.8\textwidth]{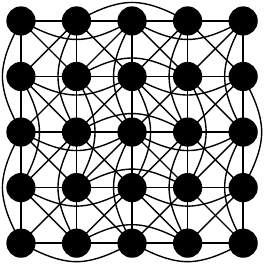}
    \caption{}
    \label{fig:2Dtheta-a}
  \end{subfigure}
  \begin{subfigure}[t]{.2\textwidth}
    \centering
    \includegraphics[width=.7\textwidth]{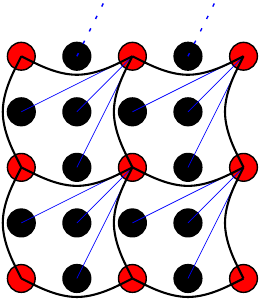}
    \caption{}
    \label{fig:2Dtheta-b}
  \end{subfigure}
  \begin{subfigure}[t]{.2\textwidth}
    \centering
    \includegraphics[width=.8\textwidth]{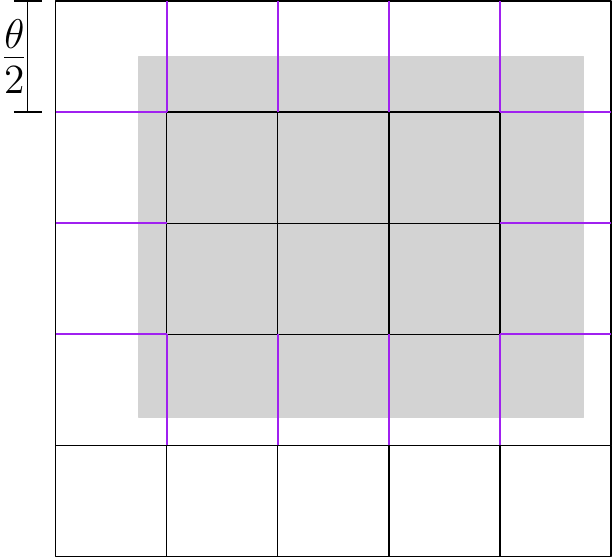}
    \caption{}
    \label{fig:2Dtheta-c}
  \end{subfigure}
  \begin{subfigure}[t]{.2\textwidth}
    \centering
    \includegraphics[width=.8\textwidth]{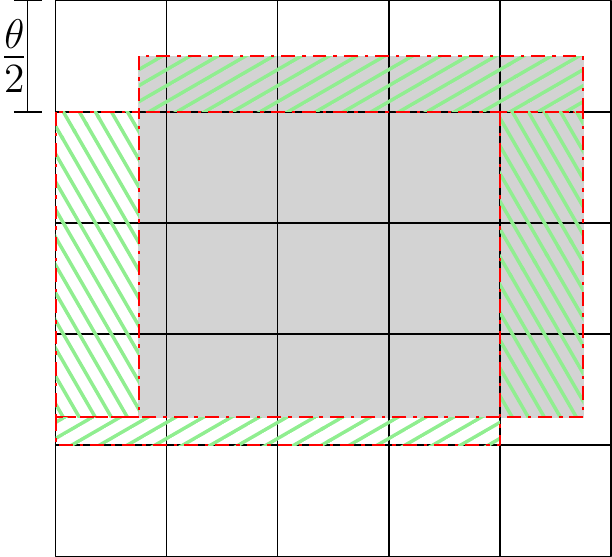}
    \caption{}
    \label{fig:2Dtheta-d}
  \end{subfigure}
  \caption{\label{fig:2Dtheta} Transforming queries in $\mathbf{R}_{k^2}$ under $G^\theta_{k^2}$. (a) $G^2_{5^2}$, edges are  vertices with $L_1$ distance $\leq 2$. (b) A section of $H^2_{k^2}$. Internal edges connect red nodes and external edges connect black nodes to red nodes. (c) A 2D range query superimposed on $H^2_{k^2}$. We only show the divisions in $\theta/2$ blocks of vertices. Within each block, all vertices would be connected to the upper right corner. The grid of lines shows all the external edges. Highlighted in purple are the external edges which satisfy Lemma~\ref{lem:query-transformation} and therefore appear in the transformed query. (d) The green patterned rectangles show the sets of vertices corresponding to the internal edges in the transformed query. There are 4 such rectangles, and for each one either the height or length is bounded by $\theta$. Note that there are other ways in which we could divide the green patterned region into 4 rectangles, we arbitrarily chose one. Our strategy is to answer all range queries over each row of squares, and each column of squares. We illustrate using specific values for $\theta$ and $k$, but our results hold in general.}
\end{figure*}

We now turn our attention to multidimensional range queries under  $G^\theta_{k^d}$. Our strategy will be similar to the one in Section~\ref{sec:range-1Dtheta}. We find a subgraph to approximate $G^\theta_{k^d}$. We show the queries of the transformed workload can be decomposed into range queries on edges of bounded size, and our strategy is to answer these range queries. 

To get a subgraph $H^\theta_k$, we divide $G^\theta_{k^d}$ into $d$-dimensional hypercubes with edge length $\frac{\theta}{d}$ (see Figures~\ref{fig:2Dtheta-a} and \ref{fig:2Dtheta-b}). We designate the vertices at the corners of the cubes as ``red'' vertices. We pick a mapping of hypercubes to red vertices. For example, in the 2-dimensional case, we may map each square to its upper right red vertex. Each non-red vertex within a cube is connected to this selected red vertex (we pick a consistent mapping for vertices that are on the boundary of cubes). We call these {\em internal} edges. The red vertices are then connected in a grid (like $G^1_{k^d}$) with {\em external} edges.

Due to space constraints we give a brief sketch of our strategy. Given a range query $\mathbf{q}$, the transformed query will correspond to a set of internal and a set of external edges (as per Lemma~\ref{lem:query-transformation}). Since external and internal edges are disjoint, our strategy answers the transformed query restricted to the external edges and internal edges independently, each under $\epsilon$-differential privacy. 

Since the external edges form a grid graph $G^1_{(dk/\theta)^d}$ (Figure~\ref{fig:2Dtheta-c}), we can use our strategy from \csec\ref{sec:range-2Dline} to answer this part of the transformed query with error at most $O(d \frac{\log^{3(d-1)} {d \cdot k}/\theta}{\epsilon^2})$. We can show that the non-red end points of the internal edges featuring in the transformed query for $2d$ $d$-dimensional range queries. However, each of these range queries has a width at most $\theta$ in one of the dimensions (see Figure~\ref{fig:2Dtheta-d}). Thus like in \csec\ref{sec:range-2Dline}, it is sufficient to answer all $d$-dimensional range queries having width at most $\theta$ in one dimension. However, the edges in these range queries are not disjoint, and hence we can only use $\theta/d$ privacy budget for answering these range queries, resulting in the following error bound:

\begin{theorem}
  Workload $\mathbf{R}_{k^d}$ can be answered with
  \begin{equation*}
    O(d^3 \cdot \frac{\log^{3(d-1)}{k}\log^3{\theta}}{\epsilon^2})
  \end{equation*}
  error per query under $(\epsilon,G^\theta_{k^d})$-Blowfish privacy.
  \label{thm:twodimrange-theta}
\end{theorem}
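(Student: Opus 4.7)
The plan is to follow the sketch the authors outline: build a spanning subgraph $H^\theta_{k^d}$ of $G^\theta_{k^d}$, verify it is an $\ell$-approximate spanner for some small $\ell = O(d)$, invoke Lemma~\ref{lem:subgraph} to reduce $(\epsilon, G^\theta_{k^d})$-Blowfish privacy to $(\epsilon/\ell, H^\theta_{k^d})$-Blowfish privacy, and then design a matrix mechanism for $\mathbf{R}_{k^d}$ under $H^\theta_{k^d}$ via the transformational equivalence of Theorem~\ref{thm:mm-trans-equiv}. Concretely, I would carve up $[k]^d$ into hypercubes of side length $\theta/d$; designate one vertex of each hypercube as ``red''; inside each hypercube attach every non-red vertex by an \emph{internal} edge to its designated red vertex; and connect the red vertices to each other by \emph{external} edges mirroring the grid graph $G^1_{(dk/\theta)^d}$. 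The key geometric fact to check is that any edge $(u,v)$ of $G^\theta_{k^d}$ (\emph{i.e.}\ $\|u-v\|_1 \le \theta$) is joined in $H^\theta_{k^d}$ by going from $u$ to its red anchor, along external edges to the anchor of $v$, and then to $v$; by choice of cube side $\theta/d$ and the triangle inequality this path has length at most $O(d)$, so $H^\theta_{k^d}$ is an $O(d)$-approximate spanner.

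Next, apply Lemma~\ref{lem:query-transformation} to a transformed range query $\mathbf{q}_{H^\theta_{k^d}}$. The supporting edges split into two disjoint groups: external edges and internal edges. Because external edges carry no information about internal edges (and vice versa), I can compose a mechanism for each in parallel, each spending the full budget $\epsilon/\ell$. For the external part the nonzero edges are precisely a boundary on the coarser red grid $G^1_{(dk/\theta)^d}$, so reusing the strategy of Section~\ref{sec:range-2Dline} gives per-query error $O\!\bigl(d \cdot \log^{3(d-1)}(dk/\theta) / \epsilon^2\bigr) = O\!\bigl(d \log^{3(d-1)} k / \epsilon^2\bigr)$.

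The main obstacle is the internal-edge part; this is where the $\log^3 \theta$ and the extra $d^2$ appear. Here I would argue that, for a given $d$-dimensional range $[\mathbf{l},\mathbf{r}]$, the non-red endpoints of the internal edges whose other endpoint is a red vertex inside the range are exactly those non-red vertices lying in a ``slab'' of width at most $\theta$ adjacent to each of the $2d$ faces of the range. Hence the internal portion of $\mathbf{q}_{H^\theta_{k^d}}$ is the signed sum of $2d$ $d$-dimensional range queries, each of width at most $\theta$ in one distinguished dimension. This mirrors the 1D construction in Section~\ref{sec:range-1Dtheta}: for each of the $d$ choices of distinguished dimension, partition that dimension into blocks of length $\theta$; within a block all such queries are supported on a disjoint subdomain and can be answered in parallel using Privelet \cite{icde:XiaoWG10} with per-query error $O(\log^{3(d-1)} k \cdot \log^3 \theta / \epsilon^2)$. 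Summing over the $d$ dimensions and across the $2d$ boundary pieces (which may share edges across the dimension split, forcing budget splitting by $d$) gives total internal-part error $O\!\bigl(d^2 \log^{3(d-1)} k \log^3 \theta / \epsilon^2\bigr)$.

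Finally, I would combine the two parts and account for the $\ell = O(d)$ blow-up from Lemma~\ref{lem:subgraph}: running the above mechanism with budget $\epsilon/\ell$ yields $(\epsilon, G^\theta_{k^d})$-Blowfish privacy, and squaring the $1/\ell$ factor in the error bound introduces an additional $O(d^2)$ overhead. Together with the internal-part $O(d^2)$ factor and the external-part $O(d)$ factor (absorbed by the larger internal term), this produces the claimed error of $O\!\bigl(d^3 \cdot \log^{3(d-1)} k \cdot \log^3 \theta / \epsilon^2\bigr)$ per query, via Theorem~\ref{thm:mm-trans-equiv} applied to the matrix mechanism used above. The main obstacle I anticipate is bookkeeping the precise decomposition of the internal portion into range queries, in particular verifying that $2d$ pieces genuinely suffice and that the parallel/sequential composition matches the stated exponents.
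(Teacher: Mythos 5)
Your construction and decomposition are essentially the paper's own: the same spanner $H^\theta_{k^d}$ (hypercubes of side $\theta/d$, red anchors, internal and external edges), the same split of the transformed query (via Lemma~\ref{lem:query-transformation}) into an external part answered on the coarse red grid exactly as in Section~\ref{sec:range-2Dline}, and an internal part consisting of $2d$ face-adjacent slabs of width $O(\theta)$, answered by layering the domain along each dimension, running Privelet in parallel within a dimension, and splitting the budget $d$ ways across dimensions because layers from different dimensions share edges. All of that matches the paper and is sound.

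Where your proposal has a genuine gap is the final error accounting, which does not close. First, the internal part alone costs $O(d^3 \log^{3(d-1)}k\,\log^3\theta/\epsilon^2)$, not the $O(d^2\cdots)$ you claim: splitting the budget $d$ ways inflates the variance of each noisy range answer by $d^2$, and reconstructing a transformed query requires summing $2d$ such noisy answers, multiplying the variance by another $2d$; the product $d^2\cdot 2d$ is exactly how the paper arrives at the stated $d^3$. Second, you additionally run the whole mechanism at budget $\epsilon/\ell$ with $\ell=O(d)$ to invoke Lemma~\ref{lem:subgraph}, which costs a further $O(d^2)$; the paper's proof does not pay this factor for the multidimensional case (it charges a stretch penalty only in one dimension, where $\ell\le 3$ is constant). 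Combining your own stated factors gives $O(d^4)$, and $O(d^5)$ once the internal part is corrected to $d^3$, so the claimed $O(d^3\cdot\log^{3(d-1)}k\,\log^3\theta/\epsilon^2)$ is reached by assertion rather than derivation. The dependence on $k$ and $\theta$ --- the substance of the theorem --- is correct and matches the paper; what is missing is a consistent treatment of the polynomial-in-$d$ factors: either argue (e.g., by bounding the $G^\theta_{k^d}$-policy sensitivity of your strategy directly through Theorem~\ref{thm:mm-trans-equiv}) that no separate stretch penalty is owed, as the paper's accounting implicitly assumes, or accept a larger power of $d$ than the statement claims.
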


\stitle{Discussion.} Our Blowfish mechanisms under $G^1_{k^d}$ and $G^\theta_{k^d}$ policy graphs improve upon Privelet by a factor of $\log^3 k$, but incur additional error by a factor of $d$ and $d^3\log^3\theta$ respectively. Thus the proposed mechanisms are better than using Privelet when $d \log \theta$ is small compared to $\log k$. This is true in the case of location privacy where $d = 2$ and $\theta$ (10s of km) is usually much smaller than $k$ (1000s of km).

\eat{
\begin{proof}
  We first decompose our query into two pieces: all internal edges, and all external edges. We find strategies to answer each of these queries, then sum the two to find the answer to the desired query. Figure~\ref{fig:2Dtheta-c} shows the set of external edges in the transformed query. External edges always form a lattice, so we can answer this part of the query using the strategy from Section~\ref{sec:range-2Dline}, and this will contribute $O(d \frac{\log^{3(d-1)} {d \cdot k}/\theta}{\epsilon^2})$ error.
  
  We also need a way to answer all the internal edges. We order these edges by their black endpoint. Consider the set of vertices, $V$ corresponding to the set of internal edges which satisfy Lemma~\ref{lem:query-transformation}. $V$ can be divided into $2d$ $d$-dimensional range queries, one for each face of the original range query. These $d$-dimensional range queries are bounded by $\theta$ in the dimension orthogonal to the corresponding face of the original range query. This is illustrated in two dimensions in Figure~\ref{fig:2Dtheta-d}. Our strategy to answer these bounded ranges is the following: For each dimension, divide the domain (which is a hypercube of size $k^d$) into $\frac{d \cdot k}{\theta}$ layers, each with thickness $\theta/d$. We then answer all range queries on each layer. For a given dimension, all layers are independent. Therefore, we can answer these sets of range queries in parallel. However, the sets of range queries for different dimensions are not independent. An edge used in some horizontal layer will also be used in some vertical layer. We can answer each set of range queries using the Privelet framework with error
  \begin{equation*}
    O(\frac{log^{3(d-1)} k \log^3 \theta/d}{\epsilon^2}).
  \end{equation*}
  However, because range queries in different dimensions are dependent, we must divide up our $\epsilon$-budget $d$ ways. Additionally, each query is made up of $2d$ of these range queries. The total error of this strategy is therefore
  \begin{equation*}
    \label{equ:2Dtheta-error}
    O(d^3 \cdot \frac{log^{3(d-1)} k \log^3 \theta/d}{\epsilon^2}).
  \end{equation*}
  The total error is the sum of the errors from the strategies of answering the internal edges and the external edges. This sum is just Equation~\ref{equ:2Dtheta-error}.
\end{proof}
} 


\subsection{Data Dependent Algorithms}
\label{sec:datadependent}
Till now we considered data independent Blowfish mechanisms whose error is independent of the input database. Recent work has investigated a new class of data dependent algorithms for answering histogram and range queries that exploit the properties of the data and incur much lower error on some (typically sparse) datasets. In this section, we present two methods for adapting the previously given mechanisms to get data dependent mechanisms for Blowfish. 

\subsubsection{Data dependent differentially private algorithms}
In all of our algorithms we employed data independent differentially private algorithms in our strategies. We used the Laplace mechanism for computing noisy histograms in our strategy for answering $\mathbf{R}_k$ under $G^1_k$, and used Privelet for answering noisy range queries in all the other cases. Instead, when the policy graph is a tree, we could use a state-of-the-art data dependent technique like DAWA \cite{li14:dawa} for answering histograms and range queries under differential privacy. For instance, DAWA computes a noisy histogram as follows: (a) partition the domain such that domain values within a group have roughly the same counts, (b) estimates the total counts for each of these groups using the Laplace mechanism, and (c) uniformly divides the noisy group totals amongst its constituents. When many counts are similar (especially when $\mathbf{x}$ is sparse), DAWA incurs lower error than Laplace mechanism since it adds noise to fewer counts (see \csec\ref{sec:experiments}). 

\subsubsection{Using properties of the transformed database}
All the example workload/policy pairs discussed in this section are such that the transformed workload $\mathbf{W}_G$ is ``easier'' to answer under differential privacy than $\mathbf{W}$. Thus,  the data independent Blowfish algorithms outperform the data independent differentially private algorithms for answering $\mathbf{W}$ on $\mathbf{x}$ by more than a constant factor. However, this is not true for all workloads. 

Consider, for instance, the identity workload $\mathbf{I}_k$ that computes the histogram of counts under the line graph $G^1_k$. The transformed workload $\mathbf{I}_{G^1_k}$ is the set of differences between adjacent elements in the transformed database; i.e., $\mathbf{x}_G[i] - \mathbf{x}_G[i-1]$, for all $i$. The transformed workload seems no easier than the original workload. 

However, we can utilize the fact that $\mathbf{x}_G$ has special structure. Recall that $\mathbf{P}_{G^1_k}^{-1}$ is precisely equal to the cumulative histogram workload $\mathbf{C}_k$. Thus, the counts in the transformed database $\mathbf{x}_G$ are prefix sums of the counts in $\mathbf{x}$, and are non-decreasing. We can use this property of $\mathbf{x}_G$ to reduce error by enforcing the non-decreasing constraint on the noisy counts $\tilde{\mathbf{x}}_G$. Hay et al \cite{vldb:HayRMS10} present a simple algorithm to postprocess the counts so that the error depends on the number of distinct counts in $\mathbf{x}_G$. Note that whenever a count in $\mathbf{x}$ is $0$, a pair of consecutive prefix sums in $\mathbf{x}_G$ are the same. Therefore, the number of distinct values in $\mathbf{x}_G$ is precisely the number of non-zero entries in $\mathbf{x}$. This suggests that postprocessing $\tilde{\mathbf{x}}_G$ to ensure that the counts are non-decreasing will lead to a significant reduction of error for sparse datasets. We can use this strategy whenever $\mathbf{P}_{G}^{-1}$ creates constraints in $\mathbf{x}_G$. 

\section{Experiments}
\label{sec:experiments}
\newcommand{\range}{{\sc 1D-Range}\xspace}
\newcommand{\tworange}{{\sc 2D-Range}\xspace}
\newcommand{\hist}{{\sc Hist}\xspace}
\newcommand{\marg}{{\sc Marg}\xspace}

\begin{table*}[t]
\centering
{\small
\begin{tabular}{|c|l|r|r|r|}
\hline
& Description & Domain & Scale & \% Zero\\
&& Size & & Counts\\
\hline
\hline
A&Histogram of new links by time added to a subset of the US patent citation network &  4096 & $2.8\times 10^7$ & 6.20 \\
B&Histogram of personal income from 2001-2011 American community survey &  4096  & $2.0\times 10^7$ & 44.97 \\
C&Histogram of new links by time added to HepPH citation network &  4096 & $3.5\times 10^5$ & 21.17 \\
D&Frequency of search term ``Obama'' over time (2004-2010) & 4096 & $3.4\times 10^5$ & 51.03 \\
E&Number of external connections made by each internal host in an & &&\\
&IP-level network trace collected
at the gateway router of a major university. & 4096 & $2.6\times 10^4$ & 96.61 \\
F&Histogram on ``capital loss'' attribute of Adult US Census dataset& 4096  & $1.8\times 10^4$ & 97.08 \\
G&Histogram of personal medical expenses based on &&&\\
& a national home and hospice care survey from 2007 & 4096 & $9.4 \times 10^3$ & 74.80 \\
\hline
T100 & Aggregated counts of number of tweets by geo location & $100 \times 100$ & $1.9\times 10^5$ & 84.93 \\
T50 & collected over 24 hours restricted to a bounding box of & $50 \times 50$ & $1.9\times 10^5$ & 69.24 \\
T25 & 50N,125W and 30N,110W (western USA) & $25 \times 25$ & $1.9\times 10^5$ & 43.20 \\
\hline
\end{tabular}
}
\caption{\label{tab:datasets} Description of datasets}
\end{table*}

\begin{figure*}[th!]
  \captionsetup[subfigure]{width=\textwidth}
  \centering
  \begin{subfigure}[t]{.23\textwidth}
    \centering
    \includegraphics[width=\textwidth]{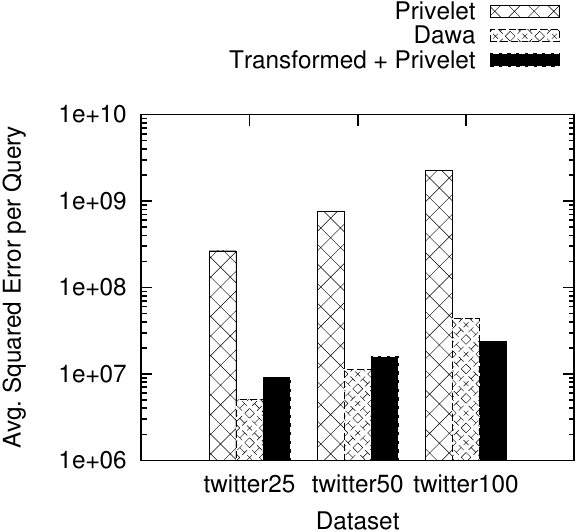}
    \caption{\label{fig:2Drange_.01} \tworange $(\epsilon = 0.01, G^1_{k^2})$}  
  \end{subfigure}
  \begin{subfigure}[t]{.23\textwidth}
    \centering
    \includegraphics[width=\textwidth]{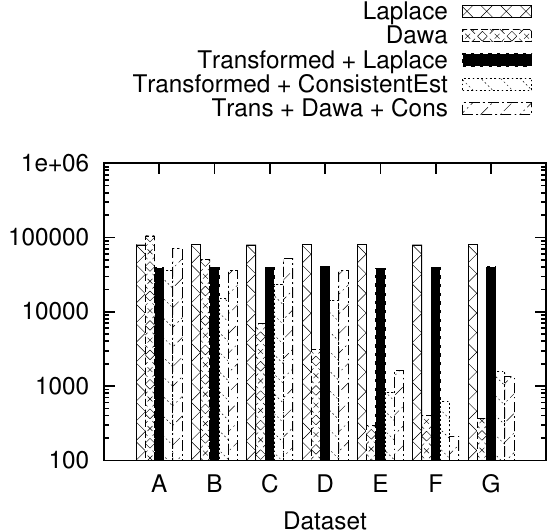}
    \caption{\label{fig:hist_.01} \hist $(\epsilon = 0.01, G^1_k)$}
  \end{subfigure}
  \begin{subfigure}[t]{.23\textwidth}
    \centering
    \includegraphics[width=\textwidth]{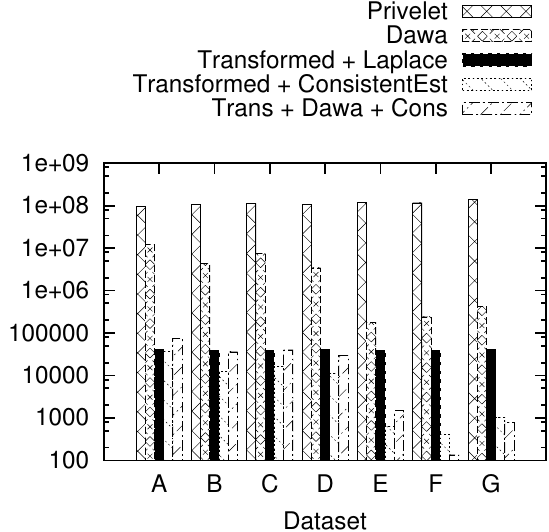}
    \caption{\label{fig:1Drange_.01} \range  $(\epsilon = 0.01, G^1_k)$}  
  \end{subfigure}
  \begin{subfigure}[t]{.23\textwidth}
    \centering
    \includegraphics[width=\textwidth]{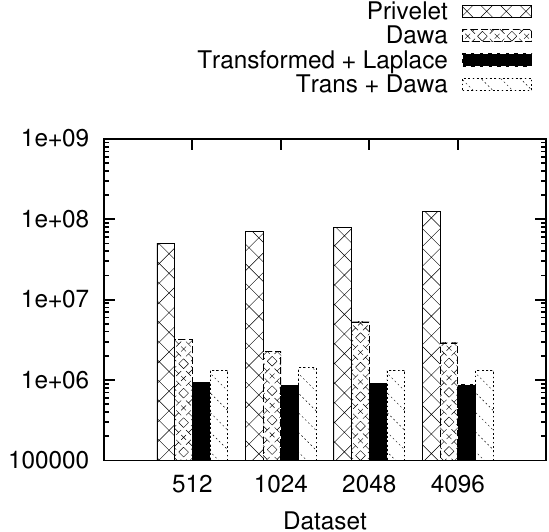}
    \caption{\label{fig:theta_.01} \range  $(\epsilon = 0.01, G^4_k)$}  
  \end{subfigure}

  \begin{subfigure}[t]{.23\textwidth}
    \centering
    \includegraphics[width=\textwidth]{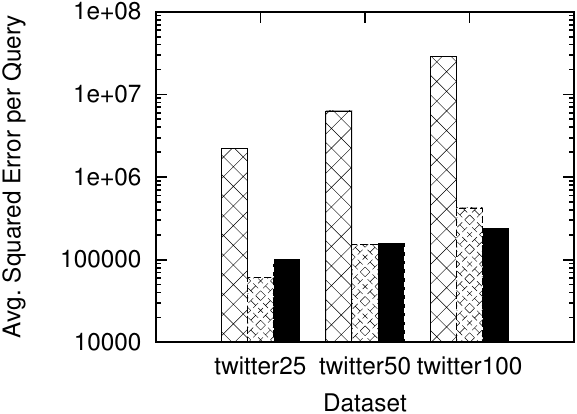}
    \caption{\label{fig:2Drange_0.1}  \tworange  $(\epsilon = 0.1, G^1_{k^2})$}  
  \end{subfigure}
  \begin{subfigure}[t]{.23\textwidth}
    \centering
    \includegraphics[width=\textwidth]{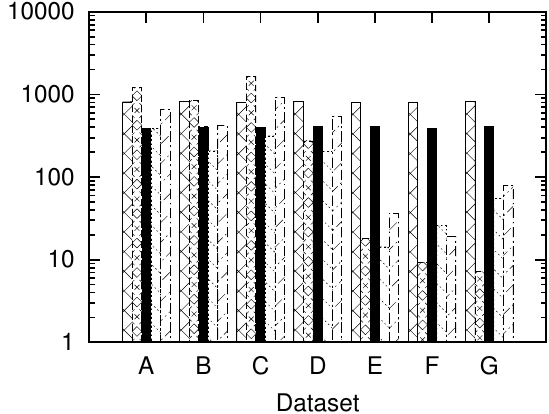}
    \caption{\label{fig:hist_0.1} \hist $(\epsilon = 0.1, G^1_k)$}
  \end{subfigure}
  \begin{subfigure}[t]{.23\textwidth}
    \centering
    \includegraphics[width=\textwidth]{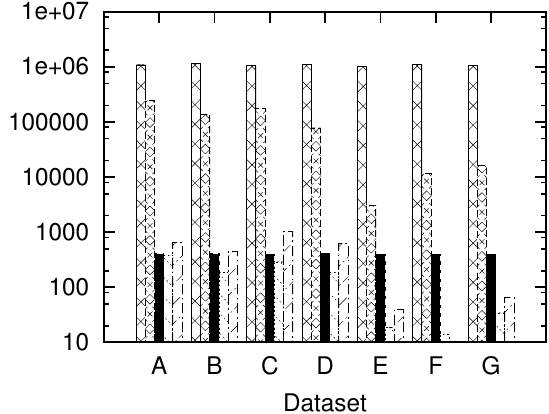}
    \caption{\label{fig:1Drange_0.1} \range  $(\epsilon = 0.1, G^1_k)$}  
  \end{subfigure}
  \begin{subfigure}[t]{.23\textwidth}
    \centering
    \includegraphics[width=\textwidth]{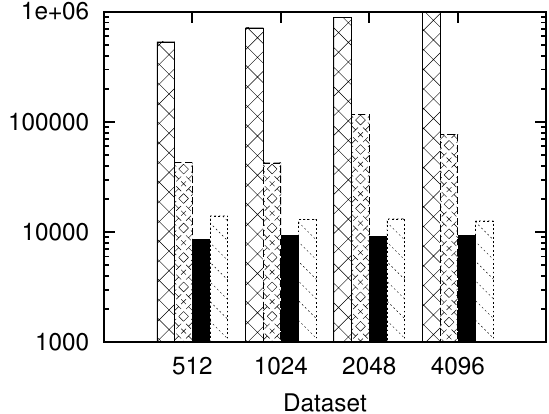}
    \caption{\label{fig:theta_0.1} \range  $(\epsilon = 0.1, G^4_k)$}  
  \end{subfigure}
  \caption{\label{fig:results} Comparison of $\epsilon/2$-Differentially private  and $(\epsilon,G)$-Blowfish algorithms for four workloads.}
\end{figure*}

In \csec\ref{sec:upperbound}, we outlined a number of algorithms for answering range query and marginal workloads under the distance threshold policies, and derived data independent error bounds for them. In this section, we implement both data independent and data dependent versions of these algorithms and empirically evaluate their error on a number of real one and two dimensional datasets. In particular, we compare the error attained by $\epsilon/2$-differentially private algorithm for a task to that of  $(\epsilon,G)$-Blowfish mechanisms for the same task. The highlights of this section are: 
\squishlist
\item For 1-D range queries, since the policy graphs are ``tree-like'', we can design data dependent algorithms for Blowfish by utilizing state-of-the-art data dependent algorithms for differential privacy, thanks to Theorem~\ref{thm:blowfishEquality}. The transformed workload is simpler and permits an order of magnitude improvement in error for both the data independent and data dependent implementations.
\item For 2-D, we are not aware of a low stretch embedding of the grid policy graph to a tree. Though we are restricted to the use of matrix mechanism algorithms, our new Blowfish private algorithms outperform the best data dependent differentially private algorithms on sparse datasets.
\item For histograms, while the transformed workload is not easier than the original, we can exploit constraints in the transformed database to obtain  improvements in error.
\squishend

\sstitle{Datasets:} We evaluate the error on 7 different one dimensional datasets and 3 two dimensional dataset (see Table~\ref{tab:datasets}). All the one dimensional datasets A-G have the same domain size (4096) but vary in their scale (total number of records), and were used in prior work (notably \cite{li14:dawa}). We aggregate our two dimensional dataset to a domain size of $100 \times 100$ (T100), $50\times 50$ (T50), and $25\times 25$ (T25), by imposing uniform grids of appropriate size on the space. Finally, we aggregate over dataset ``D'' to domain sizes 4096 (no aggregation), 2048, 1024, and 512. Note that most datasets are  sparse (low scale and high \% of zero counts). 

\sstitle{Policies:} We use $G^1_k$ and $G^4_k$ for one dimensional and $G^1_{k^2}$ for two dimensional datasets.

\sstitle{Workloads:} We consider four workloads. \range is 10,000 random one-dimensional range queries. \tworange is 10,000 random two dimensional range queries. \hist is the histogram workload. 
We report the average mean square error over 5 independent runs, and use $\epsilon  \in \{0.001, 0.01, 0.1, 1\}$.
Results for $\epsilon \in \{0.001,1\}$ are deferred to  \iftoggle{fullpaper}{Appendix~\ref{sec:more-experiments}}{the full version}.

\subsection{Results}
\sstitle{\hist}: We compare 5 algorithms on our 1-D datasets under policy graph $G^{1}_k$. We use Laplace mechanism and DAWA \cite{li14:dawa}, the best data independent and state-of-the-art data dependent differentially private algorithms for the workload, resp. For Blowfish, we use Laplace mechanism on the transformed database to get a data independent strategy (\csec\ref{sec:datadependent}). We called this ``Transformed + Laplace''. Since $G^1_k$ is a tree, we can construct two data dependent Blowfish algorithms as follows: (1) We use the consistency postprocessing algorithm to the output of Transformed + Laplace to ensure that the noisy counts $\tilde{\mathbf{x}}_G$ are non-decreasing. We call this ``Transformed + ConsistentEst''. (2) We compute a noisy histogram on $\mathbf{x}_G$ using DAWA and then apply consistency (called ``Transformed + Dawa + Cons''). 

We see that the $(\epsilon, G)$-Blowfish data independent technique (Transformed + Laplace) is only a factor of 2 better than the data independent $(\epsilon/2)$-differentially private algorithm (Laplace Mechanism). Significant gains in error over the data independent methods are seen in sparse datasets E, F \& G by using DAWA (for differential privacy) and the two data dependent Blowfish algorithms (note log scale on y-axis). When $\epsilon = 0.1$ (and $1$) one of the Blowfish data dependent algorithms outperform the two differentially private mechanism on all but two datasets F and G. These are very sparse, and DAWA achieves lower error. On the other hand, the transformed database (which is the set of prefix counts) is not as sparse, and thus Blowfish algorithms achieve higher error. At smaller $\epsilon$ values ($0.01, 0.001$), DAWA outperforms one or both the data dependent Blowfish algorithms on all datasets except $A$ and $B$. Designing data dependent Blowfish mechanism for \hist under $G^1_k$ with ``optimal'' error is an interesting open question. 

\sstitle{\range}: First, we consider the $G^1_k$ policy graph. We consider Privelet and DAWA as the data independent and dependent algorithms under differential privacy, resp. The Blowfish data independent strategy is to use Laplace mechanism on the transformed database. We again implement two data dependent algorithms for Blowfish -- enforce the non-decreasing constraint on the noisy $\tilde{\mathbf{x}}_G$ computed using the Laplace mechanism and DAWA, resp. In this case, we see 2-3 orders of magnitude difference in the error of all the Blowfish algorithms from their differentially private counterparts! This is because both the transformed workload and the transformed database are ``easier'' that the original workload and database. We observe that while the Blowfish data dependent strategy using DAWA is better than the one using Laplace mechanism on all datasets when $\epsilon = 1$,  the reverse is true for $\epsilon = 0.1, 0.01$. We conjecture this is true because the lower privacy budget results in poorer data dependent clustering for DAWA. 

We also study the error incurred under the $G^4_k$ policy graph under datasets of varying domain sizes ($k = 4096, 2048$, $1024, 512$). Since $G^4_k$ is not a tree, we use a spanning tree $H^4_k$ as described in Section~\ref{sec:range-1Dtheta} and Figure~\ref{fig:1Dtheta}. Since $G^4_k$ can be embedded into $H^4_k$ with a stretch of 3, by Corollary~\ref{cor:approx-equivalence} an $\epsilon/3$-differentially private mechanism for answering $\mathbf{W}_{H^4_k}$ on $\mathbf{x}_{H^4_k}$ also is a $(\epsilon, G^4_k)$-Blowfish private mechanism for answering $\mathbf{W}$ on $\mathbf{x}$. The Blowfish data independent algorithm is Laplace mechanism on the transformed database. The data dependent algorithm is DAWA on the transformed workload and database.  Both use privacy budget $\epsilon/3$. Again, the blowfish mechanisms have at least an order of magnitude smaller error than their differentially private counterparts.  While the error for the differentially private algorithms increases as the domain size increases, the error for the Blowfish mechanisms does not change with domain size. This is because the transformed workload is like the identity matrix. While the Blowfish data dependent algorithm is better than using Laplace mechanism for $\epsilon = 1$, it is worse for smaller $\epsilon$. 

\sstitle{\tworange}: We consider three algorithms for the grid graph policy $G^1_{k^2}$. Privelet and DAWA are the data independent and data dependent differentially private algorithms. The blowfish data independent strategy is to use Privelet for the one dimensional range queries in the transformed workload. We do not know of a data dependent algorithm under Blowfish for $G^1_{k^2}$, since it is not ``tree-like''. Though each transformed query requires 4 one dimensional range queries (over the transformed database), we still see that the Blowfish algorithm (a) significantly outperforms Privelet, and (b) improves over DAWA when the domain size is large. 

\eat{
<<<<<<< HEAD
=======
\sstitle{\marg}: We consider two data-independent strategies, one which is differentially private, and one under Blowfish privacy.
Under differential privacy, we divide the $\epsilon$-budget 2 ways, and then add independent Laplace noise to each query in each dimension.
All queries in a given dimension are independent.
Under Blowfish privacy, each query is independent, and we add independent Laplace noise to each.
By the results of Section~\ref{sec:1dmarginal}, we could also enforce ordering constraints on the answers to these queries.
Since our domain size was small, the noisy answers to the queries were already consistent, so enforcing consistency had no effect.
On larger domains, enforcing consistency would improve error dramatically, as seen in the strategies for histogram and 1D range queries.
We were not able to run experiments on larger domains due to computational constraints.
>>>>>>> 2beb17f73e31cebb5eea8cc7dabf43433ed733bf
}


\vspace{-2mm}
\section{Conclusions}
\label{sec:conclusions}
We systematically analyzed error bounds on linear  query workloads under the Blowfish privacy framework. We showed that the error incurred when answering a workload under Blowfish is identical to the error incurred when answering a transformed workload under differential privacy for a large class of privacy mechanisms and graphs. This, in conjunction with a subgraph approximation result, helped us derive strategies for answering linear counting queries under the Blowfish privacy framework. We showed that workloads can be answered with significantly smaller amounts of error per query under Blowfish privacy compared to differential privacy, suggesting the applicability of Blowfish privacy policies in practical utility driven applications. 

\stitle{Acknowledgements:}
We thank the anonymous reviewers for their comments. This work was supported by the National Science Foundation under Grants 1253327, 1408982, 1443014 and a gift from Google.

{\small 
\bibliographystyle{abbrv}
\bibliography{paper} 
}

\iftoggle{fullpaper}{
\clearpage 

\appendix

\begin{figure*}[th!]
  \captionsetup[subfigure]{width=\textwidth}
  \centering
  \begin{subfigure}[t]{.23\textwidth}
    \centering
    \includegraphics[width=\textwidth]{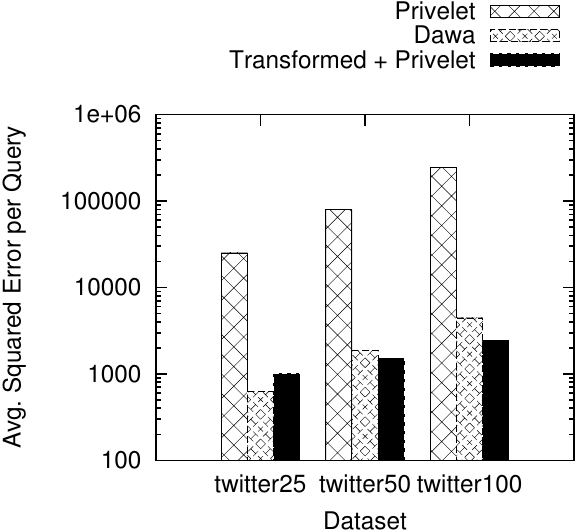}
    \caption{\label{fig:2Drange_1} \tworange $(\epsilon = 1, G^1_{k^2})$}  
  \end{subfigure}
  \begin{subfigure}[t]{.23\textwidth}
    \centering
    \includegraphics[width=\textwidth]{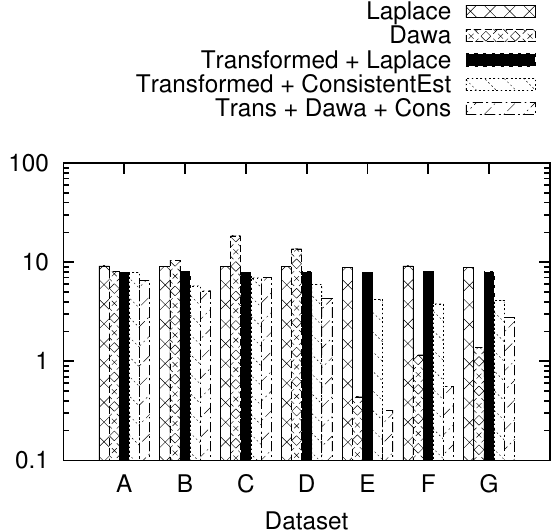}
    \caption{\label{fig:hist_1} \hist $(\epsilon = 1, G^1_k)$}
  \end{subfigure}
  \begin{subfigure}[t]{.23\textwidth}
    \centering
    \includegraphics[width=\textwidth]{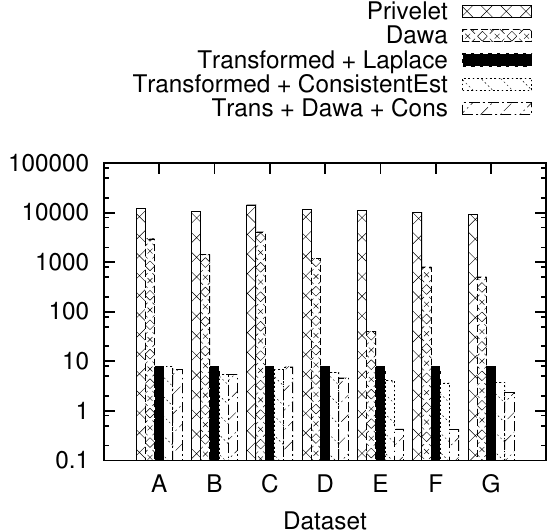}
    \caption{\label{fig:1Drange_1} \range  $(\epsilon = 1, G^1_k)$}  
  \end{subfigure}
  \begin{subfigure}[t]{.23\textwidth}
    \centering
    \includegraphics[width=\textwidth]{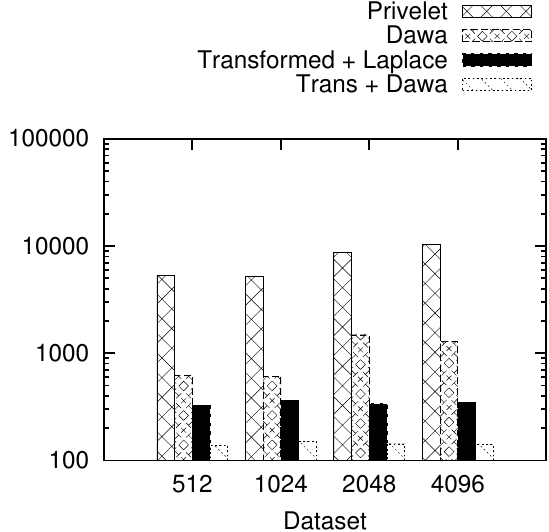}
    \caption{\label{fig:theta_1} \range  $(\epsilon = 1, G^4_k)$}  
  \end{subfigure}

  \begin{subfigure}[t]{.23\textwidth}
    \centering
    \includegraphics[width=\textwidth]{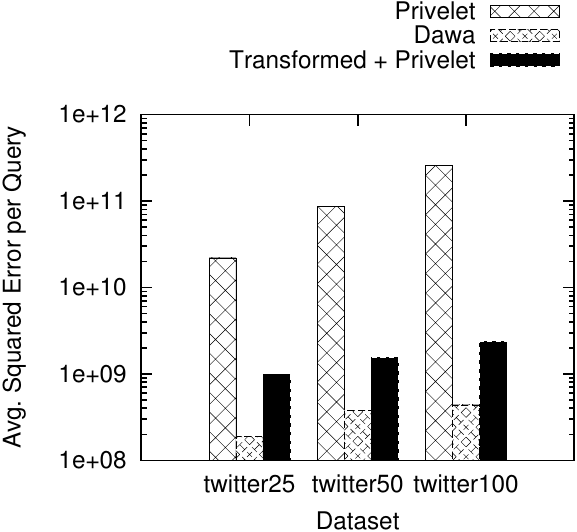}
    \caption{\label{fig:2Drange_0.001}  \tworange  $(\epsilon = 0.001, G^1_{k^2})$}  
  \end{subfigure}
  \begin{subfigure}[t]{.23\textwidth}
    \centering
    \includegraphics[width=\textwidth]{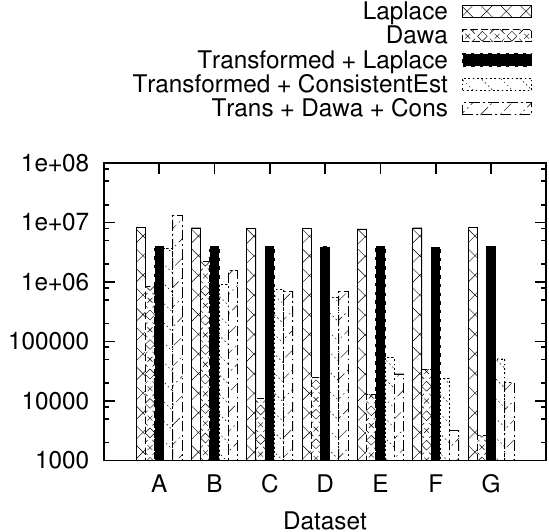}
    \caption{\label{fig:hist_0.001} \hist $(\epsilon = 0.001, G^1_k)$}
  \end{subfigure}
  \begin{subfigure}[t]{.23\textwidth}
    \centering
    \includegraphics[width=\textwidth]{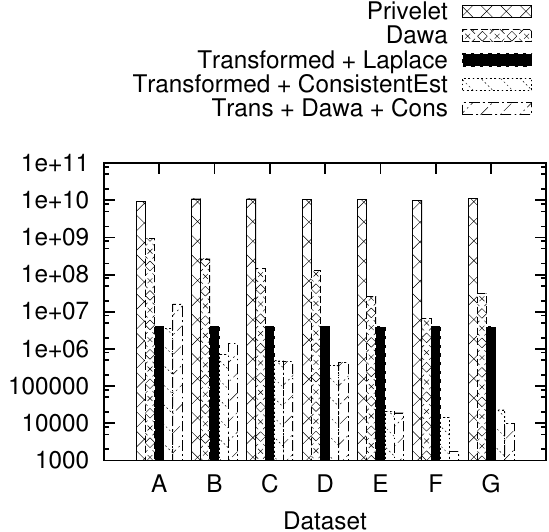}
    \caption{\label{fig:1Drange_0.001} \range  $(\epsilon = 0.001, G^1_k)$}  
  \end{subfigure}
  \begin{subfigure}[t]{.23\textwidth}
    \centering
    \includegraphics[width=\textwidth]{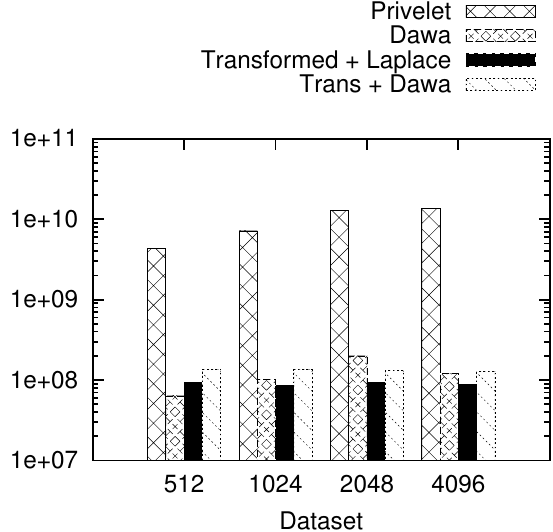}
    \caption{\label{fig:theta_0.001} \range  $(\epsilon = 0.001, G^4_k)$}  
  \end{subfigure}
  \caption{\label{fig:results-appendix} Comparison of $\epsilon/2$-Differentially private  and $(\epsilon,G)$-Blowfish algorithms for four workloads.}
\end{figure*}

\section{Extending Lower Bound Results in Differential Privacy}
\label{sec:lowerbound}

In this section, we show that error upper and lower bound results in $\epsilon$-differential privacy and $(\epsilon,\delta)$-differential privacy can be extended to Blowfish privacy.
Any $\epsilon$-differentially private mechanism $\mathcal{M}_G$ for answering $\mathbf{W}_G$ on $\mathbf{x}_G$ gives us a $(\epsilon,G)$-Blowfish private mechanism on for answering $\mathbf{W}$ on $\mathbf{x}$.
The following corollary follows directly from Theorem~\ref{thm:blowfishEquality}.

\begin{corollary}
  Let $\mathbf{W}$ be a workload, $G$ be a policy graph, and $\mathbf{x}$ be a database.
  Let $f(\mathbf{W}_G,\mathbf{x}_G)$ be a lower bound on the error of answering $\mathbf{W}_G$ with $\mathbf{x}_G$ under $\epsilon$-differential privacy.
  That is,
  \begin{equation*}
    \mathrm{ERROR}_{\mathcal{M}}(\mathbf{W}_G,\mathbf{x}_G) = \Omega (f(\mathbf{W}_G,\mathbf{x}_G)) \quad \forall  \mathcal{M}.
  \end{equation*}
  Then, $f(\mathbf{W}_G,\mathbf{x}_G)$ is a lower bound for answering $\mathbf{W}$ with $\mathbf{x}$ under $(\epsilon,G)$-Blowfish privacy.
  \label{cor:blowfish-lower-bound}
\end{corollary}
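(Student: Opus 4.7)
The plan is to derive the corollary by a direct contrapositive argument from Theorem~\ref{thm:blowfishEquality}, exploiting the fact that the theorem is a mechanism-preserving \emph{equivalence} and that the true answers on both sides of the transformation coincide. Specifically, I will assume for contradiction that there exists some $(\epsilon,G)$-Blowfish private mechanism $\mathcal{M}^\star$ for $(\mathbf{W},\mathbf{x})$ whose mean squared error is asymptotically smaller than $f(\mathbf{W}_G,\mathbf{x}_G)$, and then construct from $\mathcal{M}^\star$ a witness that contradicts the assumed differentially private lower bound.

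The first step is to apply the ``only if'' direction of Theorem~\ref{thm:blowfishEquality}: the same mechanism $\mathcal{M}^\star$, viewed as an algorithm for the transformed workload/database pair $(\mathbf{W}_G,\mathbf{x}_G) = (\mathbf{W}\mathbf{P}_G,\mathbf{P}_G^{-1}\mathbf{x})$, is $\epsilon$-differentially private. The second step is to observe that the error is preserved verbatim under the transformation, because $\mathbf{W}_G \mathbf{x}_G = \mathbf{W}\mathbf{P}_G\mathbf{P}_G^{-1}\mathbf{x} = \mathbf{W}\mathbf{x}$, so the $i$-th row of the transformed workload produces the same true answer as the $i$-th row of $\mathbf{W}$, and therefore $\mathrm{ERROR}_{\mathcal{M}^\star}(\mathbf{W},\mathbf{x}) = \mathrm{ERROR}_{\mathcal{M}^\star}(\mathbf{W}_G,\mathbf{x}_G)$ term by term in Definition~\ref{def:error}. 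Combining the two steps yields an $\epsilon$-differentially private mechanism for $(\mathbf{W}_G,\mathbf{x}_G)$ with error $o(f(\mathbf{W}_G,\mathbf{x}_G))$, contradicting the hypothesis $\mathrm{ERROR}_{\mathcal{M}}(\mathbf{W}_G,\mathbf{x}_G) = \Omega(f(\mathbf{W}_G,\mathbf{x}_G))$ for all $\mathcal{M}$.

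There is no real obstacle here beyond bookkeeping: the heavy lifting was already done in Section~\ref{sec:transformationalEquivalence} (construction of $\mathbf{P}_G$ with the required sensitivity- and neighbor-preserving properties, and the equivalence theorem itself). The only subtle point worth stating explicitly in the final write-up is that the corollary, as phrased, implicitly relies on the regime in which Theorem~\ref{thm:blowfishEquality} holds --- namely, when $G$ is a tree, so that the equivalence holds for arbitrary mechanisms $\mathcal{M}$. For general $G$, the analogous statement would have to be restricted to matrix-mechanism algorithms using Theorem~\ref{thm:mm-trans-equiv}, or incur an $\ell$-factor loss in $\epsilon$ via the subgraph approximation in Lemma~\ref{lem:subgraph} and Corollary~\ref{cor:approx-equivalence}. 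I would flag this scope condition in the proof so that the downstream reader applying the corollary knows which $\mathbf{P}_G$ construction and which variant of transformational equivalence is being invoked.
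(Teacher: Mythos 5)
Your proof is correct and takes essentially the same route as the paper, which simply states that the corollary ``follows directly from Theorem~\ref{thm:blowfishEquality}'': the contrapositive you spell out (a too-accurate $(\epsilon,G)$-Blowfish mechanism for $(\mathbf{W},\mathbf{x})$ would, via the mechanism- and error-preserving equivalence and the identity $\mathbf{W}_G\mathbf{x}_G=\mathbf{W}\mathbf{x}$, yield a too-accurate $\epsilon$-differentially private mechanism for $(\mathbf{W}_G,\mathbf{x}_G)$) is exactly that argument made explicit. Your caveat that the corollary implicitly inherits the scope of Theorem~\ref{thm:blowfishEquality} --- trees for arbitrary mechanisms, or matrix mechanisms / an $\ell$-factor loss via Lemma~\ref{lem:subgraph} for general $G$ --- is accurate and matches the paper's own qualification in the surrounding discussion.
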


Our transformational equivalence result applies to every $\epsilon$-differentially private mechanism  for answering linear queries when policy graphs look like trees. While we can't discuss every result in differential privacy here, we give a few examples.

One nice extension is to a sequence of works on general mechanisms for answering linear queries \cite{stoc:HardtT10, bhaskara2012unconditional, nikolov2013geometry}.
These results first find a lower bound (in this case, the lower bound is data independent, and therefore depends only on $\mathbf{W}$), then find a general mechanism that answers any linear workload with error bounded against the lower bound.
In particular, Bhaskara et al \cite{bhaskara2012unconditional} give a $O(\log^2 q)$ approximation to the expected $L_2$ error for linear workloads under $\epsilon$-differential privacy, where $q$ is the number of queries. For policy graphs $G$ that has a subtree $T$ with low distortion of $\ell$, we can use the same algorithm on $\mathbf{W}_T$ to get a $O(\ell^2 \cdot \log^2 q)$ approximation to the expected $L_2$ error for answering $\mathbf{W}$ under $(\epsilon,G)$-Blowfish privacy! This follows from Corollary~\ref{cor:approx-equivalence} and the fact that $\mathbf{W}_G$ is always linear as long as $\mathbf{W}$ is linear and the two workloads have the same number of queries.

A number of results pertain to $(\epsilon, \delta)$-differential privacy, which allows a small probability $\delta$ of failing the indistinguishability condition. We can similarly define $(\epsilon, \delta, G)$-Blowfish privacy, and our transformational equivalence result directly extends to this variant as well. Thus, we can also extend upper and lower bound results on $(\epsilon, \delta)$-differential privacy to Blowfish. 

Li and Miklau \cite{edbt:chaoli13} give a lowerbound for a popular class of mechanisms called \emph{matrix mechanisms} for answering workloads of linear queries.
Since transformational equivalence holds for all policy graphs for matric mechanisms, we can extend their bound for all Blowfish policies:

\begin{corollary}\label{cor:blowfishSVDB}
  Any matrix mechanism based strategy for answering workload $\mathbf{W}$ that satisfies $(\epsilon, \delta, G)$-Blowfish privacy has error at least
  \begin{equation*}
    P(\epsilon,\delta)\frac{1}{n_G}(\lambda_1+\ldots+\lambda_s)^2 
  \end{equation*}
  where $P(\epsilon,\delta) = \frac{2\log(2/\delta)}{\epsilon^2}$, $\lambda_1,\ldots,\lambda_s$ are the singular values of $\mathbf{W}_G$, and $n_G$ is the number of columns of $\mathbf{W}_G$ (same as the number of edges in $G$).
\end{corollary}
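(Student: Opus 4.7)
\begin{proofsketch}
The plan is to chain the matrix-mechanism transformational equivalence (Theorem~\ref{thm:mm-trans-equiv}) with the differentially private lower bound of Li and Miklau. The key observation is that the equivalence theorem is stated for $\epsilon$-differential privacy but its proof only uses that (i) $\mathbf{W}\mathbf{x} = \mathbf{W}_G\mathbf{x}_G$, (ii) $\Delta_{\mathbf{A}}(G) = \Delta_{\mathbf{A}_G}$, and (iii) $\mathbf{W}_G\mathbf{A}_G^+ = \mathbf{W}\mathbf{A}^+$; none of these depend on which flavor of indistinguishability we use. I would therefore first note (or verify in a short paragraph) that Theorem~\ref{thm:mm-trans-equiv} carries over verbatim to $(\epsilon,\delta,G)$-Blowfish, giving a bijection between matrix-mechanism strategies $(\mathbf{W},\mathbf{A})$ satisfying $(\epsilon,\delta,G)$-Blowfish privacy and matrix-mechanism strategies $(\mathbf{W}_G,\mathbf{A}_G)$ satisfying $(\epsilon,\delta)$-differential privacy, with identical error on corresponding inputs since the injected noise and reconstruction matrices agree.

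Next I would invoke the Li--Miklau SVD lower bound directly on the transformed workload $\mathbf{W}_G$. That bound states that any matrix-mechanism strategy for $\mathbf{W}_G$ satisfying $(\epsilon,\delta)$-differential privacy incurs error at least
\begin{equation*}
P(\epsilon,\delta)\,\frac{1}{n_G}\bigl(\lambda_1+\cdots+\lambda_s\bigr)^2,
\end{equation*}
where $\lambda_1,\ldots,\lambda_s$ are the singular values of $\mathbf{W}_G$ and $n_G$ is its number of columns, which by construction of $\mathbf{P}_G$ in Section~\ref{sec:construction-PG} equals $|E|$, the number of edges in $G$. Combining this with the error-preserving bijection from the previous paragraph immediately yields the stated lower bound for $(\epsilon,\delta,G)$-Blowfish private matrix-mechanism strategies.

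The only subtlety I anticipate is confirming that the equivalence really is a bijection on the space of admissible matrix-mechanism \emph{strategies}: given a strategy $\mathbf{A}$ for $\mathbf{W}$ on $G$ we obtain $\mathbf{A}_G = \mathbf{A}\mathbf{P}_G$, and conversely given any strategy $\mathbf{B}$ for $\mathbf{W}_G$ we must be able to realize it as $\mathbf{A}\mathbf{P}_G$ for some $\mathbf{A}$, so that the lower bound over all strategies on the differential-privacy side translates into a lower bound over all strategies on the Blowfish side. This is where the full row rank of $\mathbf{P}_G$ (guaranteed by the construction in Section~\ref{sec:construction-PG}, together with the full-rankness lemma) is essential: it ensures that $\mathbf{P}_G$ has a right inverse, so any $\mathbf{B}$ can be lifted to $\mathbf{A} = \mathbf{B}\mathbf{P}_G^{-1}$ with $\mathbf{A}\mathbf{P}_G = \mathbf{B}$. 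Once this point is checked, the corollary follows without further computation.
\end{proofsketch}
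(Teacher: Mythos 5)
Your overall route is the same as the paper's: the paper also proves this by applying the Li--Miklau SVD bound to the transformed workload $\mathbf{W}_G$ under $(\epsilon,\delta)$-differential privacy and then invoking the lower-bound transfer that comes from transformational equivalence for matrix mechanisms (extended to the $(\epsilon,\delta)$ variant, exactly as you argue), i.e.\ Theorem~\ref{thm:mm-trans-equiv} together with Corollary~\ref{cor:blowfish-lower-bound}.

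One caveat about your final paragraph. The ``lifting'' argument you give to establish a bijection on strategies is incorrect for general policy graphs: $\mathbf{P}_G$ is a $k \times |E|$ matrix and $\mathbf{P}_G^{-1}$ is only a \emph{right} inverse, so $\mathbf{P}_G^{-1}\mathbf{P}_G \neq \mathbf{I}$ unless $G$ is a tree (with $\bot$), and hence $\mathbf{A}\mathbf{P}_G = \mathbf{B}\mathbf{P}_G^{-1}\mathbf{P}_G$ need not equal $\mathbf{B}$; not every differentially private strategy for $\mathbf{W}_G$ is of the form $\mathbf{A}\mathbf{P}_G$. Fortunately this direction is not needed: to transfer a \emph{lower} bound from differential privacy to Blowfish it suffices that every $(\epsilon,\delta,G)$-Blowfish matrix-mechanism strategy $\mathbf{A}$ for $\mathbf{W}$ induces, via $\mathbf{A}_G = \mathbf{A}\mathbf{P}_G$, an $(\epsilon,\delta)$-differentially private matrix-mechanism strategy for $\mathbf{W}_G$ with identical error --- which is precisely the forward direction of Theorem~\ref{thm:mm-trans-equiv} --- and the Li--Miklau bound, holding over \emph{all} strategies for $\mathbf{W}_G$, in particular applies to $\mathbf{A}_G$. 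Dropping the bijection claim (or restricting it to the tree case, where $\mathbf{P}_G$ is square and invertible) leaves your proof correct and essentially identical to the paper's.
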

\eat{
\begin{proof}
  From the results of \cite{edbt:chaoli13} we have 
  \begin{equation*}
    \mathrm{ERROR}_\mathcal{M}(\mathbf{W}_G)\ge P(\epsilon,\delta)\frac{1}{n_G}(\lambda_1+\ldots+\lambda_s)^2
  \end{equation*}
  for any matrix mechanism $\mathcal{M}$, where $P(\epsilon,\delta) = \frac{2\log(2/\delta)}{\epsilon^2}$, $\lambda_1,\ldots,\lambda_s$ are the singular values of $\mathbf{W}_G$, and $n_G$ is the number of columns of $\mathbf{W}_G$. 
  The result then follows immediately from Corollary~\ref{cor:blowfish-lower-bound}.
\end{proof}
} 


Figures~\ref{fig:allrange-lb} and \ref{fig:2dranges-lb} illustrate the relationship between the above lower bound on error and size of the domain for $\mathbf{R}_k$ (under $G^\theta_k$) and $\mathbf{R}_{k^2}$ (under $G^\theta_{k^2}$) respectively. We plot the original lower bound for unbounded differential privacy  (\cite{edbt:chaoli13}) and the new lower bounds we derived for Blowfish policies $G_k^{\theta}$ and $G^\theta_{k^2}$ for various values of $\theta$. Additionally, we show a lower bound for bounded differential privacy, which is obtained by using the complete graph (on $\dom$) as the policy. 

For the one dimensional range query workload we see that minimum error under unbounded differential privacy increases faster than the minimum error under $G^{\theta}_k$ for sufficiently large domain sizes.  For two dimensional ranges, error under Blowfish policy $G^{\theta}_{k^2}$ is only better than unbounded differential privacy for $\theta=1$. However, all values of $\theta$ perform better than bounded differential privacy. Note that for sets of linear queries, it is possible for the sensitivity of a workload under bounded differential privacy to be twice the sensitivity of the workload under unbounded differential privacy, and thus have up to 4 times more error. Characterizing analytical lower bounds for these workloads and policies is an interesting avenue for future work.

\begin{figure}[t]
  \centering
  \begin{subfigure}{.23\textwidth}
    \centering
    \includegraphics[width=\textwidth]{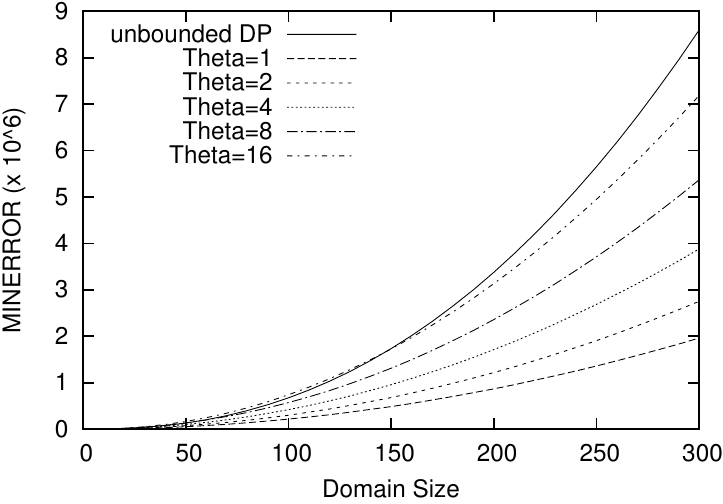}
    \caption{1D Ranges under $G^{\theta}_k$} 
    \label{fig:allrange-lb}
  \end{subfigure}
  \begin{subfigure}{.23\textwidth}
    \centering
    \includegraphics[width=\textwidth]{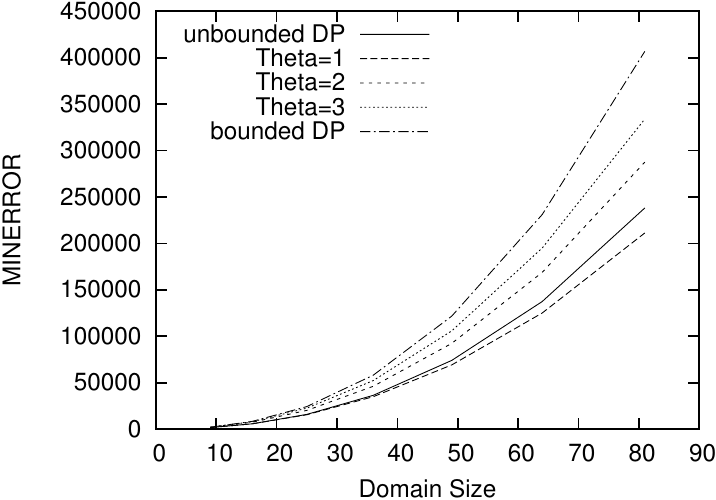}
    \caption{$2$D Ranges under $G^{\theta}_{k^2}$}
    \label{fig:2dranges-lb}
  \end{subfigure}
  \caption{Blowfish SVD lower bounds ($\epsilon = 1, \delta = .001$).}
  \label{fig:lb}
\end{figure}


\section{More Experiments}
\label{sec:more-experiments}
Figure~\ref{fig:results-appendix} compares our Blowfish algorithms to the corresponding differentially private counterparts for $\epsilon \in \{0.001, 1.0\}$.

\section{Proofs from Section 4.3}\label{sec:proofs-general-transequiv}
\subsection{Impossiblity of General Transformational Equivalence}
\label{sec:imposs-general-transequiv}
\begin{theorem}
Let $G$ be a graph that does not have an isometric embedding into points in $L_1$. There exists a mechanism $\mathcal{M}$ and workload $\mathbf{W}$ such that for any transformation of $(\mathbf{W}, \mathbf{x}) \rightarrow (\mathbf{W}_G, \mathbf{x}_G)$ such that $\mathbf{W}\mathbf{x} = \mathbf{W}_G\mathbf{x}_G$, either $\mathcal{M}$ is not an $(\epsilon, G)$-Blowfish private mechanism for answering $\mathbf{W}$ on $\mathbf{x}$, or $\mathcal{M}$ is not a $\epsilon$-differentially private mechanism for answering $\mathbf{W}_G$ on $\mathbf{x}_G$.
\end{theorem}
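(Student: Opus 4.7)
The plan is to construct a specific data-dependent mechanism whose Blowfish privacy guarantee is so tight with respect to the shortest-path metric of $G$ that any transformation making it differentially private on the transformed problem would have to yield an isometric $L_1$ embedding of $(V,d_G)$, contradicting the hypothesis on $G$.

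Fix $G=(V,E)$ with no isometric $L_1$ embedding and take $\mathbf{W}=\mathbf{I}_V$. Let $\mathcal{M}$ be the exponential mechanism whose score on a database $\mathbf{x}$ and output $v\in V$ is $s(\mathbf{x},v)=\sum_{u\in V}\mathbf{x}[u]\,d_G(u,v)$, sampling $v$ with probability proportional to $\exp(-\epsilon\,s(\mathbf{x},v)/2)$. By the triangle inequality the Blowfish sensitivity of $s$ across an edge $(u,u')\in E$ is exactly $1$, so the standard exponential-mechanism analysis gives $(\epsilon,G)$-Blowfish privacy. Moreover, on single-tuple inputs the output distributions $P_u$ tightly encode $d_G$: at $v=u$ the log-ratio $\log(P_u(v)/P_{u'}(v))$ attains its maximum $\epsilon\,d_G(u,u')$, so the privacy ``cost'' of $\mathcal{M}$ between pairs of single-tuple inputs is exactly their $G$-distance.

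Suppose for contradiction that some transformation $(\mathbf{W},\mathbf{x})\mapsto(\mathbf{W}_G,\mathbf{x}_G)$ with $\mathbf{W}\mathbf{x}=\mathbf{W}_G\mathbf{x}_G$ also makes $\mathcal{M}$ be $\epsilon$-differentially private on $(\mathbf{W}_G,\mathbf{x}_G)$. Write $y_u$ for the image of $e_u$ under the database transformation and set $c_{uv}:=\|y_u-y_v\|_1$. Because the output of $\mathcal{M}$ depends on its input only through the true answer $\mathbf{W}\mathbf{x}=\mathbf{W}_G\mathbf{x}_G$, the distribution at $y_u$ is still $P_u$. Composing $\epsilon$-DP along a shortest integer-lattice path between $y_u$ and $y_v$ yields $P_u(v')/P_v(v')\le e^{\epsilon c_{uv}}$ for every output $v'$, and matching this against the tight Blowfish ratio at $v'=u$ forces $c_{uv}\ge d_G(u,v)$, so the map $u\mapsto y_u$ is non-contracting. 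For the matching upper bound, I would use that $\epsilon$-DP of $\mathcal{M}$ on $(\mathbf{W}_G,\cdot)$ imposes the column-wise sensitivity bound $|\sum_u(\mathbf{W}_G)_{u,i}\,d_G(u,v')|\le 1$ for every column $i$ and every target $v'$; combined with the decoding identity $\mathbf{W}_G y_u=e_u$ and a flow-decomposition argument on $G$, a shortest DP-neighbor path from $y_u$ to $y_v$ projects to a walk in $G$ from $u$ to $v$ of length at most $c_{uv}$, whence $c_{uv}\le d_G(u,v)$. Combining both directions, $u\mapsto y_u$ is an isometric $L_1$ embedding of $(V,d_G)$, contradicting the hypothesis on $G$.

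The main obstacle is the upper bound $c_{uv}\le d_G(u,v)$: the non-contraction direction follows cleanly from DP composition matched against the tight Blowfish ratio, whereas the upper bound requires carefully translating the DP sensitivity constraint on the columns of $\mathbf{W}_G$ back into a metric statement about $G$, via the decoding identity and a combinatorial flow argument. A secondary subtlety is that the theorem does not restrict to linear transformations, but both the DP constraint and the decoding identity are affine in the transformation, so I expect a standard convexity or averaging reduction to handle the general case by reducing to the linear setting addressed above.
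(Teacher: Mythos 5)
Your non-contraction half is sound and is essentially the paper's argument: the paper also takes $\mathbf{W}=\mathbf{I}$, restricts to single-entry databases, and uses the exponential mechanism with score $d_G(\mathbf{x},\cdot)$, observing that if some pair at graph distance $>1$ is mapped to transformed databases at $L_1$ distance $1$ (more generally, if the map contracts), the tight Blowfish ratio $e^{\epsilon d_G}$ violates $\epsilon$-DP on the transformed instance. But your overall plan --- fix this single Blowfish-private mechanism and show that $\epsilon$-DP of it on \emph{any} admissible transformation forces an isometric embedding --- cannot work, and the step you flag as the ``main obstacle,'' namely $c_{uv}\le d_G(u,v)$, is not merely hard but false. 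Concretely, take any spanning tree $T$ of $G$ and the paper's own transformation $(\mathbf{W},\mathbf{x})\mapsto(\mathbf{W}\mathbf{P}_T,\mathbf{P}_T^{-1}\mathbf{x})$; it satisfies $\mathbf{W}\mathbf{x}=\mathbf{W}\mathbf{P}_T\mathbf{P}_T^{-1}\mathbf{x}$, so it is admissible. Since ${\cal N}(T)\subseteq{\cal N}(G)$, your mechanism is $(\epsilon,T)$-Blowfish private, and by the tree equivalence (Theorem~\ref{thm:blowfishEquality}) it is therefore $\epsilon$-differentially private on the transformed instance --- one can also check this directly: a unit change in one edge-coordinate changes the decoded histogram by $e_a-e_b$ for an edge $(a,b)$ of $G$, and your score has sensitivity $|d_G(a,v)-d_G(b,v)|\le 1$. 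So both disjuncts of the theorem's conclusion fail for your $\mathcal{M}$ under this transformation, even though the induced embedding $u\mapsto\mathbf{P}_T^{-1}e_u$ stretches distances and is not isometric. The underlying reason is structural: differential privacy on the transformed side only penalizes transformations that \emph{contract} graph distances; expansion never hurts privacy, so no sensitivity/flow argument can recover the upper bound $c_{uv}\le d_G(u,v)$ from the DP assumption.

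What the paper does instead is a case analysis over transformations with two different mechanisms. Contracting transformations are handled exactly as in your first half. For a transformation that stretches some edge $(u,v)\in E$ (i.e., $d_G(u,v)=1$ but $\|\mathbf{x}_G(u)-\mathbf{x}_G(v)\|_1>1$), the paper switches to the exponential mechanism scored by the $L_1$ distances between the \emph{transformed} points: this mechanism is $\epsilon$-DP on the transformed instance but its output ratio between the inputs corresponding to $u$ and $v$ is $e^{\epsilon\|\mathbf{x}_G(u)-\mathbf{x}_G(v)\|_1}>e^{\epsilon}$, violating $(\epsilon,G)$-Blowfish privacy for that neighboring pair. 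That second mechanism (necessarily transformation-dependent, and necessarily \emph{not} Blowfish-tight) is the missing idea in your proposal; without it the stretching case is simply out of reach. Two smaller points: your lattice-path composition tacitly assumes the transformed databases $y_u$ are nonnegative integer vectors so that intermediate points are genuine DP neighbors, which an arbitrary transformation need not provide; and the closing remark about reducing general transformations to linear ones by convexity is not needed once the argument is organized, as in the paper, as a pointwise case analysis on the images of the single-entry databases.
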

\begin{proof}
  Let $\mathbf{W}$ be the identity workload $\mathbf{I}$. 
  We assume that policy graph $G$ cannot be isometrically embedding into $L_1$ (e.g., the cycle on $V$).
  We consider datasets $\mathbf{x}$ with a single entry. Thus there are $|V|$ input databases, and each input corresponds to a vertex in $G$; hence, we will abuse notation and denote by $\mathbf{x}$ both the dataset and the vertex in $G$. There are also $|V|$ distinct outputs since we consider the identity workload. 
  Let $\mathbf{x}_G$ be the point in the $L_1$ that the vertex $\mathbf{x}$ is mapped to. 
  We denote distance in $G$ as $d_G(\cdot,\cdot)$, and $L_1$ distance as $d(\cdot,\cdot)$.
  We consider two cases:
  \begin{enumerate} 
    \item\label{case1} there exists $\mathbf{z},\mathbf{y}$ (each with a single entry) with  $d(\mathbf{z}_G,\mathbf{y}_G) = 1$ and $d_G(\mathbf{z},\mathbf{y}) > 1$.
   \item\label{case2} There exists $\mathbf{z},\mathbf{y}$ (each with a single entry) with  $d_G(\mathbf{z},\mathbf{y}) = 1$ and $d(\mathbf{z}_G,\mathbf{y}_G) > 1$, or
  \end{enumerate}

\stitle{Case (\ref{case1}):} We will now construct a mechanism $\mathcal{M}$ that is $(\epsilon, G)$-Blowfish private, but not $\epsilon$-differentially private.
  Let $\mathcal{M}$ be the exponential mechanism that given an input $\mathbf{x}$ picks an output $\mathbf{y}$ with probability $e^{-\epsilon \cdot d_G(\mathbf{x}, \mathbf{y})}$. It is easy to check that $\mathcal{M}$ satisfies $(\epsilon, G)$-Blowfish privacy. 

However, there exists $\mathbf{x},\mathbf{y}$ such that $\mathbf{x}_G$ and $\mathbf{y}_G$ are neighbors under differential privacy ($d(\mathbf{x}_G,\mathbf{y}_G) = 1$), but $\mathbf{x}$ and $\mathbf{y}$ are not neighbors under policy graph $G$ (since $d_G(\mathbf{z},\mathbf{y}) > 1$). Therefore, $\mathcal{M}$ does not satisfy differential privacy: 
\begin{equation}
\frac{P[\mathcal{M}(\mathbf{x}_G = \mathbf{x}]}{P[\mathcal{M}(\mathbf{x}_G = \mathbf{y}]}  =  e^{\epsilon \cdot d_G(\mathbf{z},\mathbf{y})} > e^\epsilon
\end{equation}

\stitle{Case (\ref{case2}):} Proof is similar. We construct $\mathcal{M}$ as the exponential mechanism that uses the distances on $\mathbf{x}_G$  as the score function. 
\end{proof}

\subsection{Subgraph Approximation}
\begin{lemma}(Subgraph Approximation)
  Let $G = (V,E)$ be a policy graph. Let $G' = (V, E')$ be a spanning tree of $G$ on the same set of vertices, such that every $(u,v) \in E$ is connected in $G'$ by a path of length at most $\ell$ ($G'$ is said to be an $\ell$-approximate subgraph\footnote{While we that require $V(G) = V(G')$, the proof does not require $G'$ to be a subgraph of $G$ (i.e., $E' \subseteq E$). But it suffices for the applications of this technique in this paper.}). Then for any mechanism $\mathcal{M}$ which satisfies $(\epsilon,G')$-Blowfish privacy, $\mathcal{M}$ also satisfies $(\ell \cdot \epsilon,G)$-Blowfish privacy.
\end{lemma}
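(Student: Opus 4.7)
The plan is to convert a single neighboring pair under $G$ into a short chain of neighboring pairs under $G'$, and then telescope the $(\epsilon, G')$-Blowfish guarantee along that chain, much like a group-privacy argument.

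First I would pick any neighboring pair $(D,D') \in {\cal N}(G)$. By Definition~\ref{def:neighbors-blowfish}, $D$ and $D'$ differ in exactly one entry whose values lie on some edge $(u,v) \in E$ (where $u$ or $v$ may be $\bot$, encoding presence/absence of a tuple). Write $D$ and $D'$ as $A$ together with one extra entry whose value is $u$ or $v$ respectively. By the $\ell$-approximate subgraph hypothesis, there is a path $u = w_0, w_1, \dots, w_j = v$ in $G'$ with $j \le \ell$ and $(w_i, w_{i+1}) \in E'$ for all $i$.

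Next I would build the chain of intermediate databases $D = D_0, D_1, \dots, D_j = D'$, where $D_i$ is obtained from $A$ by inserting one entry with value $w_i$ (or, if $w_i = \bot$, adding no entry). Each consecutive pair $(D_i, D_{i+1})$ then differs exactly along the edge $(w_i, w_{i+1}) \in E'$, so $(D_i, D_{i+1}) \in {\cal N}(G')$. Applying the $(\epsilon, G')$-Blowfish guarantee to each of the $j$ consecutive pairs and telescoping gives, for any measurable $S \subseteq \mathrm{range}(\mathcal{M})$,
\begin{equation*}
\Pr[\mathcal{M}(D) \in S] \;\le\; e^{j\epsilon}\,\Pr[\mathcal{M}(D') \in S] \;\le\; e^{\ell\epsilon}\,\Pr[\mathcal{M}(D') \in S],
\end{equation*}
which is exactly the $(\ell\epsilon, G)$-Blowfish condition.

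The only subtlety, and hence the one step that requires care, is handling paths in $G'$ that involve the dummy vertex $\bot$. Since $\bot$ corresponds to the \emph{absence} of a tuple rather than a concrete value, the chain must be defined so that presence/absence transitions happen exactly at the edges of $G'$ incident to $\bot$, with value-replacement transitions on all interior edges; this is exactly what the two cases of Definition~\ref{def:neighbors-blowfish} allow, so the chain is well-formed as long as we follow the path in $G'$ edge by edge. Once this is verified, the rest of the argument is a straightforward composition, and no properties of $G'$ beyond the hop-bound $\ell$ are used.
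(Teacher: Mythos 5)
Your proof is correct and follows essentially the same route as the paper's: write the $G$-neighboring pair as a common database plus one differing entry, replace that single entry step by step along the length-$\le\ell$ path in $G'$, apply the $(\epsilon,G')$-Blowfish guarantee to each consecutive pair, and compose. Your explicit treatment of paths through $\bot$ (presence/absence transitions on edges incident to $\bot$) is a careful touch the paper leaves implicit, but it does not change the argument.
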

\begin{proof}
  Assume $D$ and $D'$ are neighboring databases under policy graph $G$. Then $D = A \cup \left\{ x \right\}$ and $D' = A \cup \left\{ y \right\}$ for some database $A$, and $(x,y) \in E$. From our assumption, $x$ and $y$ are connected by a path in $G'$ of length at most $\ell$. Therefore, there exist a sequence of vertices $x=v_1,\dots,v_j = y$ such that $(v_i, v_{i+1}) \in E$ and $j<\ell$. Further, $A \cup \left\{ v_i \right\}$ and $A \cup \left\{ v_{i+1} \right\}$ are neighbors under policy graph $G'$. Therefore, 
  \begin{equation*}
    \mathrm{Pr} [\mathcal{M}(A \cup \left\{ v_i \right\}) \in S] \le e^\epsilon \cdot \mathrm{Pr} [\mathcal{M}(A \cup \left\{ v_{i+1} \right\}) \in S].
  \end{equation*}
    Composing over all $1 \le i \le j$ gives us the desired result. 
\end{proof}

\section{Properties of $\mathbf{P}_G$}
\label{sec:PG-appendix}
\begin{lemma}
  Let $\mathbf{W}$ be a workload, and $G$ be a policy graph. Then $\Delta_\mathbf{W}(G) = \Delta_{\mathbf{W}_G}$.
\end{lemma}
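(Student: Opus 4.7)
My plan is to reduce both sides of the claimed equality to the same quantity, namely the maximum $L_1$ norm among the columns of $\mathbf{W}_G$. For the differential privacy side, a pair of standard neighbors $\mathbf{y}, \mathbf{y}'$ differ by a signed unit vector $\pm \mathbf{e}_i$, so $\mathbf{W}_G \mathbf{y} - \mathbf{W}_G \mathbf{y}' = \pm \mathbf{W}_G \mathbf{e}_i$ is, up to sign, the $i$-th column of $\mathbf{W}_G$. Taking the maximum $L_1$ norm over all such differences gives $\Delta_{\mathbf{W}_G} = \max_i \norm{(\mathbf{W}_G)_{:,i}}_1$.

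For the Blowfish side, I would use the construction of $\mathbf{P}_G$ to describe the neighbor differences explicitly. By definition, for any neighbor pair $(\mathbf{x},\mathbf{x}') \in N(G)$, either the two databases differ by a single record whose old/new values form an edge $(u,v) \in E$ with $u,v \neq \bot$, in which case $\mathbf{x} - \mathbf{x}'$ has a $+1$ at position $u$ and a $-1$ at position $v$; or they differ by the presence of a single record with value $u$ such that $(u,\bot) \in E$, in which case $\mathbf{x} - \mathbf{x}'$ has only a $+1$ at position $u$. These are, up to sign, precisely the columns of $\mathbf{P}_G$ as constructed in Case I (and in Case II after the $\bot$-replacement reduces the problem to Case I). Consequently, $\mathbf{W}(\mathbf{x} - \mathbf{x}') = \pm \mathbf{W}\mathbf{P}_G \mathbf{e}_j = \pm (\mathbf{W}_G)_{:,j}$ for some column index $j$ corresponding to the edge that distinguishes $\mathbf{x}$ and $\mathbf{x}'$.

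Combining these two observations, the maximization over Blowfish neighbors ranges over exactly the same set of vectors (up to sign, which does not affect the $L_1$ norm) as the maximization over differential privacy neighbors for $\mathbf{W}_G$. Therefore both quantities equal $\max_i \norm{(\mathbf{W}_G)_{:,i}}_1$, and the lemma follows. The only potentially subtle step is verifying that the correspondence between Blowfish neighbor differences and columns of $\mathbf{P}_G$ is a bijection (up to sign): surjectivity follows because for every edge we can construct a witnessing neighbor pair by placing a record at $u$ and either moving it to $v$ or removing it, and injectivity follows because distinct edges yield distinct signed support patterns in $\mathbb{R}^k$. This is really just bookkeeping from the construction of $\mathbf{P}_G$, so I expect the proof to be short and straightforward.
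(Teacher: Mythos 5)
Your proof is correct and takes essentially the same route as the paper's: the paper's (two-line) argument likewise reduces $\Delta_{\mathbf{W}}(G)$ to $\max_{\mathbf{v}_i\in\mathrm{cols}(\mathbf{W}_G)}\norm{\mathbf{v}_i}_1$ by observing that Blowfish neighbor differences are exactly the signed columns of $\mathbf{P}_G$, and implicitly identifies that quantity with $\Delta_{\mathbf{W}_G}$ under unbounded differential privacy. You have simply made both halves of that bookkeeping explicit.
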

{\sc Proof.}
  This follows from the definition of $\mathbf{P}_G$. We have 
  \begin{align*}
    \Delta_{\mathbf{W}}(G) &= \max_{(\mathbf{x},\mathbf{x}')\in N(G)}\norm{\mathbf{Wx} - \mathbf{Wx}'}_1  \\
    &= \max_{\mathbf{v}_i\in\text{cols}(\mathbf{W}_G)}\norm{\mathbf{v}_i}_1 \rlap{$\qquad \Box$}
  \end{align*}

\begin{lemma}
$\mathbf{P}_G$ constructed in Section~\ref{sec:construction-PG} has rank $k$.
\end{lemma}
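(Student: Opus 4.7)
The plan is to show rank $k$ by proving that the $k$ rows of $\mathbf{P}_G$ (one per value in $\dom$) are linearly independent. Equivalently, I will argue that the only row-vector $\mathbf{v} \in \mathbb{R}^k$ with $\mathbf{v}^\top \mathbf{P}_G = \mathbf{0}$ is $\mathbf{v} = \mathbf{0}$. This is really just the statement that the vertex-edge incidence structure of $G$ (with $\bot$ playing the role of a grounded reference vertex) is non-degenerate whenever $G$ is connected.

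First I would translate the column constraints from the explicit construction of $\mathbf{P}_G$. A column associated with an edge $(u,v) \in E$ with $u,v \neq \bot$ has a $+1$ in row $u$ and a $-1$ in row $v$, so the orthogonality condition $\mathbf{v}^\top \cdot (\text{column}) = 0$ becomes $\mathbf{v}_u = \mathbf{v}_v$. A column associated with an edge $(u,\bot) \in E$ has a single $+1$ in row $u$ (since $\bot$ has no corresponding row), giving the constraint $\mathbf{v}_u = 0$.

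Next I would use connectivity of $G$. Since $G$ is connected and $\bot \in V$, every $u \in \dom$ admits a path $u = u_0, u_1, \ldots, u_{m-1}, u_m = \bot$ in $G$. The intermediate edges $(u_i, u_{i+1})$ for $i < m-1$ lie entirely within $\dom$, so they force $\mathbf{v}_{u_0} = \mathbf{v}_{u_1} = \cdots = \mathbf{v}_{u_{m-1}}$. The final edge $(u_{m-1}, \bot)$ then forces $\mathbf{v}_{u_{m-1}} = 0$, and therefore $\mathbf{v}_u = 0$. As this holds for every $u \in \dom$, we get $\mathbf{v} = \mathbf{0}$, so the rows of $\mathbf{P}_G$ are linearly independent.

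I do not expect any real obstacle here; the only subtlety is remembering that $\bot$ does not correspond to a row of $\mathbf{P}_G$, which is precisely what allows an edge incident on $\bot$ to produce a hard ``$\mathbf{v}_u = 0$'' constraint rather than an equality. As a side remark, the argument shows more: picking any spanning tree of $G$ rooted at $\bot$ and restricting $\mathbf{P}_G$ to the corresponding $k$ edge-columns gives a $k \times k$ submatrix that is (after row/column reordering corresponding to a BFS/DFS ordering) lower-triangular with $\pm 1$ on the diagonal, hence has determinant $\pm 1$. This gives an alternative finish and also explains why the right inverse $\mathbf{P}_G^{-1} = \mathbf{P}_G^\top(\mathbf{P}_G \mathbf{P}_G^\top)^{-1}$ used later is well-defined.
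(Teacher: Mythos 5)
Your proof is correct, but it takes a different route from the paper's. You establish row independence directly: you show the left null space of $\mathbf{P}_G$ is trivial by reading each column as a constraint ($\mathbf{v}_u = \mathbf{v}_v$ for an internal edge, $\mathbf{v}_u = 0$ for an edge incident on $\bot$) and propagating these along a path from any $u \in \dom$ to $\bot$, which exists by connectedness. The paper instead reduces to a spanning tree $T$ of $G$, observes that $\mathbf{P}_T$ is a $k \times k$ column-submatrix whose rank lower-bounds that of $\mathbf{P}_G$, and proves $\mathbf{P}_T \mathbf{y} = \mathbf{0} \Rightarrow \mathbf{y} = \mathbf{0}$ by an inductive leaf-peeling argument (repeatedly removing a degree-one vertex not adjacent to $\bot$ together with its edge, until only edges incident on $\bot$ remain and the matrix is the identity). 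Your argument is more direct in that it needs no spanning-tree selection and no induction, and it is the standard ``grounded incidence matrix'' fact; the paper's route has the side benefit of exhibiting an invertible square submatrix $\mathbf{P}_T$, which is essentially what your closing remark about the triangular spanning-tree submatrix with determinant $\pm 1$ also delivers, and which is the fact exploited later when $G$ itself is a tree (where $\mathbf{P}_G$ is square and its inverse is two-sided). The one hypothesis you rely on --- that $\bot$ is a vertex and $G$ is connected --- is exactly the Case~I setting in which the construction and the lemma are stated, so there is no gap.
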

\begin{proof}
It is sufficient for us to show that if $G$ is a tree, then $\mathbf{P}_G$ has rank $k$. When a connected graph $G$ is not a tree, we can consider a spanning tree $T$ of $G$. If the columns of $\mathbf{P}_G$ that correspond to the edges in $T$ have rank $k$, then $\mathbf{P}_G$ will also have rank $k$. 

A spanning tree has $k+1$ vertices (including $\bot$) and $k$ edges. Thus, $\mathbf{P}_T$ is a $k \times k$-matrix. It is sufficient to prove that the linear system $\mathbf{P}_T \mathbf{y} = \mathbf{0}$ has an unique solution $\mathbf{y} = \mathbf{0}$. Each entry in $\mathbf{y}$ can be considered as a weight $\mathbf{y}[e]$ associated with an edge $e$ in the tree $T$. 

Suppose there is an edge $e$ connected to a degree 1 node $v \in T$ that is not connected to $\bot$. The row $\mathbf{P}_T[v, :]$ has only one non-negative entry in the column corresponding to $e$. Thus, $\mathbf{P}_T\mathbf{y}  =0$ implies $\mathbf{y}[e]=0$. We can inductively use the same argument on the tree $T'(V - \{v\}, E - \{e\})$, and the matrix $\mathbf{P}_{T'}$ constructed by removing the row corresponding to node $v$ from $\mathbf{P}_T$. In the end, we are left with a tree $T^\star$ having only edges connected to $\bot$. But this would correspond to a matrix $\mathbf{P}_{T^\star} = \mathbf{I}$ (since each column has exactly one position set to 1), and the solution to $\mathbf{I}\mathbf{y} = \mathbf{0}$ is $\mathbf{y} = \mathbf{0}$.
\end{proof}

\begin{lemma}
  Suppose $\mathbf{P}_G$ is constructed for a Blowfish policy graph $G$ as above, and $G$ is a tree. Any pair of databases $\mathbf{y}, \mathbf{z} \in \mathbb{R}^k$ are neighbors according to the Blowfish policy $G$ if and only if $\mathbf{P}_G^{-1}\mathbf{y}$ and $\mathbf{P}_G^{-1}\mathbf{z}$ are neighboring databases according to unbounded differential privacy.
\end{lemma}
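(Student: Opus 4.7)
The plan is to exploit the fact that when $G$ is a tree with $k+1$ vertices (counting $\bot$) and $k$ edges, the matrix $\mathbf{P}_G$ constructed in Case I is square ($k \times k$) and by the preceding rank lemma has full rank. Hence $\mathbf{P}_G^{-1}$ is a genuine two-sided inverse, not just a right inverse. This is the crucial difference from the general (non-tree) case, and it is what lets us transport differences in both directions.

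For the forward direction, I would start from the definition of Blowfish neighbors: if $\mathbf{y}$ and $\mathbf{z}$ are neighbors under $G$, then either they differ in the value of a single entry along an edge $(u,v) \in E$ with $u,v \neq \bot$, or they differ by the presence of a single entry with value $u$ where $(u,\bot) \in E$. In either case, by the definition of $\mathbf{P}_G$, the difference $\mathbf{y} - \mathbf{z}$ equals $\pm \mathbf{p}_e$, the column of $\mathbf{P}_G$ indexed by the edge $e$ in question. Applying $\mathbf{P}_G^{-1}$ to both sides gives $\mathbf{P}_G^{-1}\mathbf{y} - \mathbf{P}_G^{-1}\mathbf{z} = \pm \mathbf{P}_G^{-1}\mathbf{p}_e = \pm \mathbf{e}_e$, a standard basis vector on the edge coordinates. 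Hence the images differ in a single entry and are neighbors under unbounded differential privacy.

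For the converse, I would reverse the argument: assume $\mathbf{P}_G^{-1}\mathbf{y}$ and $\mathbf{P}_G^{-1}\mathbf{z}$ are differential privacy neighbors, so $\mathbf{P}_G^{-1}(\mathbf{y} - \mathbf{z}) = \pm \mathbf{e}_e$ for some coordinate $e$ (indexing an edge of $G$). Multiplying on the left by $\mathbf{P}_G$ and using $\mathbf{P}_G \mathbf{P}_G^{-1} = \mathbf{I}_k$ (here is where squareness of $\mathbf{P}_G$ is needed, since for a non-square $\mathbf{P}_G$ we would only know $\mathbf{P}_G \mathbf{P}_G^{-1} = \mathbf{I}$ on the row space, not on all of $\mathbb{R}^k$) we obtain $\mathbf{y} - \mathbf{z} = \pm \mathbf{P}_G \mathbf{e}_e = \pm \mathbf{p}_e$. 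By the construction of $\mathbf{P}_G$, a column $\mathbf{p}_e$ is either $\mathbf{e}_u - \mathbf{e}_v$ for some edge $(u,v) \in E$ with $u,v \neq \bot$, or $\mathbf{e}_u$ for some edge $(u,\bot) \in E$. In either case, $\mathbf{y}$ and $\mathbf{z}$ satisfy one of the two conditions of Definition~\ref{def:neighbors-blowfish} and are therefore Blowfish neighbors under $G$.

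The only subtle step is justifying the use of $\mathbf{P}_G^{-1}$ as a left inverse in the converse direction; this is precisely why the statement is restricted to trees. For a non-tree $G$, $\mathbf{P}_G$ has more columns than rows, $\mathbf{P}_G^{-1}$ is only a right inverse, and applying it to $\mathbf{y}-\mathbf{z}$ need not recover a single standard basis vector in the edge domain, which is consistent with the negative result of Theorem~\ref{thm:imposs}. Other than this point, the proof is essentially unwinding definitions together with the rank lemma proved just above.
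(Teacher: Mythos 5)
Your overall argument is the same as the paper's: translate the neighbor relations into statements about differences ($\mathbf{y}-\mathbf{z}$ equal to a column $\mathbf{p}_e$ of $\mathbf{P}_G$ versus $\mathbf{P}_G^{-1}\mathbf{y}-\mathbf{P}_G^{-1}\mathbf{z}$ equal to a standard basis vector) and move between them with $\mathbf{P}_G$ and $\mathbf{P}_G^{-1}$, using that a tree on $k+1$ vertices (with $\bot$) has exactly $k$ edges, so $\mathbf{P}_G$ is square and, by the rank lemma, invertible. Since you establish the two-sided inverse up front, both of your chains of equalities are valid, and the proof goes through.

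However, you have the dependence on the tree hypothesis exactly backwards, and the parenthetical justification you give is incorrect. For any connected $G$ the paper's $\mathbf{P}_G$ has full row rank $k$ and $\mathbf{P}_G^{-1}=\mathbf{P}_G^\top(\mathbf{P}_G\mathbf{P}_G^\top)^{-1}$ is an exact right inverse, so $\mathbf{P}_G\mathbf{P}_G^{-1}=\mathbf{I}_k$ on all of $\mathbb{R}^k$ --- not merely ``on the row space'' --- whether or not $\mathbf{P}_G$ is square. Hence your converse direction (DP neighbors of the transformed vectors $\Rightarrow$ Blowfish neighbors) needs no tree assumption at all; this is precisely the direction the paper proves first, without invoking the tree. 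What genuinely requires squareness is the \emph{left}-inverse identity $\mathbf{P}_G^{-1}\mathbf{P}_G=\mathbf{I}$, which you use silently in your forward direction in the step $\mathbf{P}_G^{-1}\mathbf{p}_e=\mathbf{P}_G^{-1}\mathbf{P}_G\mathbf{e}_e=\mathbf{e}_e$: for a non-tree $G$ with $|E|>k$ this fails, and applying $\mathbf{P}_G^{-1}$ to a column of $\mathbf{P}_G$ need not yield a standard basis vector. So the ``subtle step'' you flag is in the wrong direction; the closing remark about consistency with Theorem~\ref{thm:imposs} should likewise be attached to the Blowfish~$\Rightarrow$~DP implication. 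I recommend fixing the attribution (and the claim about $\mathbf{P}_G\mathbf{P}_G^{-1}$), since it is the heart of why the lemma is stated only for trees while one implication survives for general policy graphs.
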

\begin{proof}
Let $\mathbf{y}' = \mathbf{P}_G^{-1}\mathbf{y}$ and $\mathbf{z}' = \mathbf{P}_G^{-1}\mathbf{z}$. Consider $\mathbf{y} - \mathbf{z} = \mathbf{P}_G\mathbf{y}'-\mathbf{P}_G\mathbf{z}'$. Since $\mathbf{y}'$ and $\mathbf{z}'$ are neighbors under differential privacy, $\mathbf{y}'-\mathbf{z}' = \mathbf{\hat{i}}$, where $\mathbf{\hat{i}}$ is a vector with a single non-zero entry, which is 1. So, $\mathbf{y} - \mathbf{z} = \mathbf{P}_G\mathbf{\hat{i}}$ is a single column of $\mathbf{P}_G$. We have, either
\squishlist 
\item $\mathbf{y} - \mathbf{z}$ is equal to a column of $\mathbf{P}_G$ corresponding to $(u,v) \in E$ where $u,v \neq \bot$, or
\item $\mathbf{y} - \mathbf{z}$ is equal to a column of $\mathbf{P}_G$ corresponding to $(u,\bot) \in E$ where $u \neq \bot$.
\squishend
In either case, $\mathbf{y}$ and $\mathbf{z}$ are neighbors according to $G$. 

For the proof in the other direction, suppose $\mathbf{y}$ and $\mathbf{z}$ are neighbors, that is, $\mathbf{y} - \mathbf{z} = \mathbf{p}_G$ where $\mathbf{p}_G$ is a column of $\mathbf{P}_G$.
In this direction, we will use that $G$ is a tree, and therefore $\mathbf{P}_G$ is square, which implies $\mathbf{P}_G^{-1}$ is both a left and right inverse.
We can write
\begin{eqnarray*}
  \lefteqn{\mathbf{P}_G\mathbf{P}_G^{-1}\mathbf{y} - \mathbf{P}_G\mathbf{P}_G^{-1}\mathbf{z} = \mathbf{p}_G \implies} \\
  && \mathbf{P}_G^{-1} \cdot \mathbf{P}_G(\mathbf{P}_G^{-1}\mathbf{y} - \mathbf{P}_G^{-1}\mathbf{z}) = \mathbf{P}_G^{-1} \cdot \mathbf{p}_G 
  = \mathbf{\hat{i}},
\end{eqnarray*}
for some unit vector $\mathbf{\hat{i}}$.
\end{proof}

\subsection{Technical Details for Case II}\label{sec:caseII-PG-appendix}
More formally, assuming $v$ is the $i$th value, let
\[
\mathbf{W}' = \mathbf{W} \mathbf{D} = \mathbf{W} \left(
\begin{array}{ccc}
\mathbf{I}_{i-1} &  \mathbf{0}
\\
-\mathbf{1}^\top_{i-1} &  -\mathbf{1}^\top_{k-i}
\\
\mathbf{0} &  \mathbf{I}_{k-i}
\end{array}
\right),
\]
where $\mathbf{I}_j$ is an identity matrix with size $j$, $\mathbf{1}_j$ is a $j$-dim vector with all entries equal to $1$, and thus $\mathbf{D}$ is $k\times(k-1)$-matrix.
The following lemma shows requirements (b) and (c) to complete our construction.

\begin{lemma}
Consider $G'$, $\mathbf{W}'$, and $\mathbf{x}_{-v}$ constructed above. We have: i) $\mathbf{W} \mathbf{x} = \mathbf{W}' \mathbf{x}_{-v} + \mathbf{c}(\mathbf{W}, n)$, where $\mathbf{c}(\mathbf{W}, n)$ is a constant vector depending only on $\mathbf{W}$ and the size of the database; and ii) any two databases $\mathbf{y}$ and $\mathbf{z}$ are neighbors under $G$ if and only if $\mathbf{y}_{-v}$ and $\mathbf{z}_{-v}$ are neighbors under $G'$.
\end{lemma}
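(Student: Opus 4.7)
The plan is to prove (i) by a direct matrix calculation using the definition of $\mathbf{D}$, and prove (ii) by a case analysis on which entry is changed between $\mathbf{y}$ and $\mathbf{z}$.

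For part (i), I would first observe that the matrix $\mathbf{D}$ has the following key property: for any vector $\mathbf{u} \in \mathbb{R}^{k-1}$, the product $\mathbf{D}\mathbf{u}$ is the $k$-dim vector whose $j$-th entry equals $\mathbf{u}[j]$ for $j<i$, equals $\mathbf{u}[j-1]$ for $j>i$, and equals $-\sum_{j'\neq i} \mathbf{u}_{\text{appropriate}}[j']$ in row $i$. Applied to $\mathbf{x}_{-v}$ (which is $\mathbf{x}$ with the $i$-th entry removed), this gives $\mathbf{D}\mathbf{x}_{-v} = \mathbf{x} - \mathbf{x}[i] \mathbf{e}_i - \bigl(\sum_{j\neq i} \mathbf{x}[j]\bigr)\mathbf{e}_i = \mathbf{x} - n\,\mathbf{e}_i$, where we used the constraint $\sum_j \mathbf{x}[j] = n$ (which holds because $\bot \notin G$, so all databases have fixed size $n$). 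Multiplying both sides by $\mathbf{W}$ gives
\[
\mathbf{W}'\mathbf{x}_{-v} = \mathbf{W}\mathbf{D}\mathbf{x}_{-v} = \mathbf{W}\mathbf{x} - n\,\mathbf{W}\mathbf{e}_i,
\]
so $\mathbf{W}\mathbf{x} = \mathbf{W}'\mathbf{x}_{-v} + \mathbf{c}(\mathbf{W}, n)$ with $\mathbf{c}(\mathbf{W}, n) = n\cdot \mathbf{W}[:,i]$ (the $i$-th column of $\mathbf{W}$ scaled by $n$), which depends only on $\mathbf{W}$ and $n$, as required.

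For part (ii), I would handle both directions by case analysis on the edge witnessing neighborhood. Assume $\mathbf{y}, \mathbf{z}$ are neighbors under $G$, so $\mathbf{y} - \mathbf{z}$ has exactly a $+1$ in some coordinate $u$ and a $-1$ in some coordinate $w$ with $(u,w)\in E$ (recall $G$ has no $\bot$, so neighbors differ by a value swap). Split into two subcases: if neither $u$ nor $w$ equals $v$, then removing the $v$-th coordinate leaves $\mathbf{y}_{-v} - \mathbf{z}_{-v}$ with a $+1$ at $u$ and $-1$ at $w$; since $(u,w)\in E$ involves neither endpoint $v$, by the construction of $E'$ we have $(u,w)\in E'$, so they are neighbors under $G'$. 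If one of $u,w$ equals $v$, say $w = v$, then $\mathbf{y}_{-v} - \mathbf{z}_{-v}$ has just a $+1$ at $u$ (the $-1$ at position $v$ is deleted); the edge $(u,v)\in E$ was replaced by $(u,\bot)\in E'$ during construction, so this corresponds exactly to the add/remove-a-tuple flavor of Blowfish neighboring under $G'$. The converse direction uses the same dichotomy in reverse: a $G'$-neighbor pair differing on two non-$\bot$ coordinates $(u,w)\in E'$ lifts back to the same swap in $G$ (with the same edge, which exists in $E$), while a $G'$-neighbor pair differing via $(u,\bot)\in E'$ lifts to a swap between $u$ and $v$ in $G$, which is a valid edge since $(u,\bot)\in E'$ was created precisely from $(u,v)\in E$.

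The only mildly subtle step is keeping the bookkeeping straight when $v$ is involved: the neighbor definition changes type (from swap to add/remove), and one must verify that the constraint $\sum_j \mathbf{y}[j] = \sum_j \mathbf{z}[j] = n$ remains consistent with the fact that $\mathbf{y}_{-v}$ and $\mathbf{z}_{-v}$ may have differing sums. This is not a problem because $\mathbf{x}[v] = n - \sum_{j\neq v}\mathbf{x}[j]$ is determined implicitly, so the reduction is lossless. Both parts are then immediate from these observations; no deeper machinery is needed.
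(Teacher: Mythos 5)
Your proof is correct and follows essentially the same route as the paper's: part (i) by unwinding the definition of $\mathbf{D}$ (your identity $\mathbf{D}\mathbf{x}_{-v} = \mathbf{x} - n\,\mathbf{e}_i$, which uses the fixed database size $n$, yields $\mathbf{c}(\mathbf{W},n) = n\cdot\mathbf{W}[:,i]$), and part (ii) by the same two-case analysis on whether $v$ is one of the two differing values, with swaps involving $v$ turning into presence/absence neighbors through the $(u,\bot)$ edges of $G'$. If anything, your write-up is slightly more complete than the paper's appendix proof, which spells out only the forward direction of (ii) and describes $\mathbf{c}(\mathbf{W},n)$ only for counting queries; your explicit converse correctly exploits that both $\mathbf{y}$ and $\mathbf{z}$ have size $n$, so the deleted coordinate is determined and the lift back to a $G$-neighbor pair is well-defined.
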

\begin{proof}
  We define $\mathbf{c}(\mathbf{W}, n)$ to be a vector with values $-n$ for the entries corresponding to affected rows of $\mathbf{W}$, and zero elsewhere.
  Consider a query $\mathbf{q}'$ in $\mathbf{W}'$ which depends on $\mathbf{v}$, the column of $\mathbf{W}$ no longer present in $\mathbf{W}'$.
  That is, $\mathbf{q}'$ corresponds to an affected row of $\mathbf{W}$, and we denote by $\mathbf{q}$ the original query in $\mathbf{W}$.
  By definition of $\mathbf{W}'$, $\mathbf{q}'\mathbf{x}'$  is the answer to $n-\mathbf{qx}$.
  
  Proof of ii):
  Suppose $\mathbf{z}$ and $\mathbf{y}$ are neighbors under $G$.
  We consider two cases:
  \begin{enumerate}
    \item $\mathbf{z}$ and $\mathbf{y}$ differ in two entries adjacent in $G$, neither of which is $v$.
    \item $\mathbf{z}$ and $\mathbf{y}$ differ in two entries adjacent in $G$, one of which is $v$.
  \end{enumerate}
  In the former case, $\mathbf{y}_{-v}$ and $\mathbf{z}_{-v}$ differ in the same two entries, and these vertices are still adjacent in $G'$.
  Therefore they are still neighbors.
  In the latter case, $\mathbf{z}_{-v}$ and $\mathbf{y}_{-v}$ differ in a single entry, $w$.
  $w$ was originally connected to $v$ in $G$ and is therefore connected to $\bot$ in $G'$.
  Therefore, $\mathbf{z}_{-v}$ and $\mathbf{y}_{-v}$ are neighbors according to $G'$.
\end{proof}

Using the above lemma, for any given workload $\mathbf{W}$ on database $\mathbf{x}$ and policy graph $G$ in this case, we can first convert them into $\mathbf{W}'$, $\mathbf{x}'$, and $G'$ as discussed above, apply the transformation equivalence theorem using $\mathbf{P}_{G'}$. We can answer the original workload using answers from $\mathbf{W}'\mathbf{x}'$  using i) in the above lemma.

\section{Blowfish Policies with Multiple Connected Components}
\label{sec:multiple-components}
Policy graphs can be disconnected as described in the example below. 

\noindent({\em Sensitive Attributes in Relational Tables}): Consider tables with $d$ attributes, \ie, $\dom =A_1 \times \ldots \times A_d$. Suppose ${\cal S} \subsetneq \{A_1, \ldots, A_d\}$ is a set of sensitive attributes (e.g., disease status and race) we want to protect in a table. In the corresponding policy graph $G(V,E)$, we have $V = \dom$, and $(u,v) \in E$ for any pair of $u,v \in \dom$ if and only if $u$ and $v$ differ in exactly one of the attributes in ${\cal S}$.

In the  above policy,  there is no path between $u$ and $v$ if they differ in the value of an attribute not in ${\cal S}$. In this case, there is no bound on probabilities when considering database $D \cup \{u\}$ and $D \cup \{v\}$. In particular, if $G$ has $c$ connected components $C_1, \ldots, C_c$, $C_i = (V_i, E_i)$, the adversary can determine (exactly) which $V_i$ every tuple in the database belongs to. In the ``sensitive attributes'' policy, the adversary can learn the values of the non-sensitive attributes of all the tuples. Blowfish provides the flexibility of such exact disclosure based on the level of privacy we want to protect. Exact disclosure never occurs in differential privacy or with connected Blowfish policies.
One potential issue with such exact disclosure is that further privacy disclosure may occur if adversaries know additional correlations between the disclosed properties (e.g., non-sensitive attributes) and private properties (e.g., sensitive properties). This is akin to privacy disclosures from differentially private outputs to adversaries with knowledge of correlations across records in the data \cite{sigmod:KiferM11,pods:KiferM12}. 

\subsection{Transformational Equivalence for Disconnected Policies}
The transformational equivalence result also extends to general policy graphs with more than one connected component. If $\bot$ is not connected to a component, we simply apply the conversion discussed in Case II to reduce it to Case I with the vertex $\bot$ connected to it. Then eventually we will have all components connected to $\bot$, which essentially falls into Case I. So Case III can be also reduced to Case I.

\section{Proof of Theorem 5.6}

\begin{proof}
  We first decompose our query into two pieces: all internal edges, and all external edges. We find strategies to answer each of these queries, then sum the two to find the answer to the desired query. Figure~\ref{fig:2Dtheta-c} shows the set of external edges in the transformed query. External edges always form a lattice, so we can answer this part of the query using the strategy from Section~\ref{sec:range-2Dline}, and this will contribute $O(d \frac{\log^{3(d-1)} {d \cdot k}/\theta}{\epsilon^2})$ error.
  
  We also need a way to answer all the internal edges. We order these edges by their black endpoint. Consider the set of vertices, $V$ corresponding to the set of internal edges which satisfy Lemma~\ref{lem:query-transformation}. $V$ can be divided into $2d$ $d$-dimensional range queries, one for each face of the original range query. These $d$-dimensional range queries are bounded by $\theta$ in the dimension orthogonal to the corresponding face of the original range query. This is illustrated in two dimensions in Figure~\ref{fig:2Dtheta-d}. Our strategy to answer these bounded ranges is the following: For each dimension, divide the domain (which is a hypercube of size $k^d$) into $\frac{d \cdot k}{\theta}$ layers, each with thickness $\theta/d$. We then answer all range queries on each layer. For a given dimension, all layers are independent. Therefore, we can answer these sets of range queries in parallel. However, the sets of range queries for different dimensions are not independent. An edge used in some horizontal layer will also be used in some vertical layer. We can answer each set of range queries using the Privelet framework with error
  \begin{equation*}
    O(\frac{log^{3(d-1)} k \log^3 \theta/d}{\epsilon^2}).
  \end{equation*}
  However, because range queries in different dimensions are dependent, we must divide up our $\epsilon$-budget $d$ ways. Additionally, each query is made up of $2d$ of these range queries. The total error of this strategy is therefore
  \begin{equation*}
    \label{equ:2Dtheta-error}
    O(d^3 \cdot \frac{log^{3(d-1)} k \log^3 \theta/d}{\epsilon^2}).
  \end{equation*}
  The total error is the sum of the errors from the strategies of answering the internal edges and the external edges. This sum is just Equation~\ref{equ:2Dtheta-error}.
\end{proof}

}

\end{document}